\newtheorem{theorem}{Theorem}
\newtheorem{lemma}[theorem]{Lemma}
\newtheorem{prop}{Proposition}
\newtheorem{proposition}[theorem]{Proposition}
\newtheorem{definition}[theorem]{Definition}
\newenvironment{proof}{{ \bf \emph{Proof.} }}{\hfill $\Box$ \\} 
\global\long\def\RR{\mathbb{R}}
\global\long\def\EE{\mathbb{E}}
\global\long\def\PP{\mathbb{P}}
\newcommand{\bfb}{\mathbf{b}}
\newcommand{\bfA}{\mathbf{A}}
\def \sfb {\mathsf{b}}
\def \sfxn {\mathsf{x}^n}
\def \sfL {\mathsf{L}}
\newcommand{\sfA}{\mathsf{A}}
\newcommand{\sfB}{\mathsf{B}}
\newcommand{\sfS}{\mathsf{S}}
\newcommand{\sfU}{\mathsf{U}}
\newcommand{\sfX}{\mathsf{X}}
\newcommand{\sfY}{\mathsf{Y}}
\newcommand{\sfZ}{\mathsf{Z}}
\def \sfXhat {\hat{\sfX}}
\def \tensor {\otimes}
\def \sigmabar {\bar{\sigma}}
\def\<{\langle}
\def\>{\rangle}
\def \calA {\mathcal{A}}
\def \calB {\mathcal{B}}
\def \calC {\mathcal{C}}
\def \calD {\mathcal{D}}
\def \calE {\mathcal{E}}
\def \calH {\mathcal{H}}
\def \calL {\mathcal{L}}
\def \calM {\mathcal{M}}
\def \calP {\mathcal{P}}
\def \calR {\mathcal{R}}
\def \calU {\mathcal{U}}
\def \calW {\mathcal{W}}
\def \calX {\mathcal{X}}
\def \calZ {\mathcal{Z}}
\def \Rbar {\bar{R}}
\def \Pibar {\bar{\Pi}}
\def \btheta {\bar{\theta}}
\def \bTheta {\bar{\Theta}}
\newcommand{\Tr}{\text{Tr}}
\def \deq {:=}
\def \textT {\text{T}}
\def \eqand {\text{ and }}
\def\rhotilde{\tilde{\rho}}
\def\omegabar{\bar{\omega}}
\def \I {\mathds{1}}
\newcommand{\curly}[1]{\left\{{#1}\right\}}
\newcommand{\round}[1]{\left({#1}\right)}
\def \an{a^n}
\def \abarn{\bar{a}^n}
\def \An{A^n}
\def \Bn{B^n}
\def \un{u^n}
\def \Un{U^n}
\def \xn{x^n}
\def \Xn{X^n}
\def \yn{y^n}
\def \Yn{Y^n}
\def \zn{z^n}
\def \Zn{Z^n}
\def \codeDistribution {\mathbb{P}}
\def \Txqc {\mathcal{T}_{{\delta}}^{(n)}({X})}
\def \refe {R}
\def \epovm {\Omega}
\def \dpovm {\Lambda}
\def \utilde {\Tilde{u}}
\def \ltilde {\Tilde{l}}
\def \Ipmf {\I_{\{\mbox{sPMF}\}}}
\newcommand{\set}[1]{\{#1\}}
\def \targetpu{P_U}
\def \Tuqc {\mathcal{T}_{{\delta}}^{(n)}({U})}
\def \Tu{\mathcal{T}_\delta^{(n)}(U)}
\def \Tucond{\mathcal{T}_\delta^{(n)}(U|x^n)}
\def \Tuz{\mathcal{T}_\delta^{(n)}(U,Z)}
\def \Txcond{\mathcal{T}_\delta^{(n)}(X|u^n)}
\def \bTheta {\bar{\Theta}}
\def \cpe {\Delta_{\text{CP}}}
\def \ce {\Delta_{\text{C}}}
\def \notce {\Tilde{\Delta}_{\text{C}}}
\def \ee {{\Delta}_{\text{E}}}
\def \epe {{\Delta}_{\text{EP}}}
\def \er {{\Delta}}
\def \peone {\Delta_{\text{P}}^{(1)}}
\def \peoneone {\Delta_{\text{P}}^{(11)}}
\def \petwo {\Delta_{\text{P}}^{(2)}}
\def \peonetwo {\Delta_{\text{P}}^{(12)}}
\def \xhat {\hat{x}}
\def \pu {{P}_U}
\def \px {{P}_X}
\def \pxz {{P}_{XZ}}
\def \prevTC {{W}_{X|\hat{X}}}
\def \Tx {\mathcal{T}_{\hat{\delta}}^{(n)}({X})}
\def \codebook{\calC}
\def \encodern {\mathcal{E}^{(n)}}
\def \decodern {\mathcal{D}^{(n)}}
\def \codeDistribution {\mathbb{P}}
\def \error {\Xi(\Gamma^{(n)}_\sfA,\Gamma^{(n)}_\sfB,f)}
\begin{document}
\title{When Wyner and Ziv Met Bayes in Quantum-Classical Realm} 


\author{\IEEEauthorblockN{Mohammad Aamir Sohail\IEEEauthorrefmark{1}, Touheed Anwar Atif\IEEEauthorrefmark{2}, and
		S. Sandeep Pradhan\IEEEauthorrefmark{1}
		\\}
	\IEEEauthorblockA
		\IEEEauthorrefmark{1}{Department of EECS, University of Michigan, Ann Arbor, USA.\\
		\IEEEauthorrefmark{2}Los Alamos National Laboratory, USA\\
		Email: \tt mdaamir@umich.edu, tatifd@lanl.gov,  pradhanv@umich.edu
		}
}

\maketitle


\begin{abstract}
   In this work, we address the lossy quantum-classical source coding with the quantum side-information (QC-QSI) problem. The task is to compress the classical information about a quantum source, obtained after performing a measurement while incurring a bounded reconstruction error. Here, the decoder is allowed to use the side information to recover the classical data obtained from measurements on the source states.
   We introduce a new formulation based on a backward (posterior) channel, replacing the single-letter distortion observable with a single-letter posterior channel to capture reconstruction error. Unlike the rate-distortion framework, this formulation imposes a block error constraint. An analogous formulation is developed for lossy classical source coding with classical side information (C-CSI) problem. We derive an inner bound on the asymptotic performance limit in terms of single-letter quantum and classical mutual information quantities of the given posterior channel for QC-QSI and C-CSI cases, respectively. Furthermore, we establish a connection between rate-distortion and rate-channel theory, showing that a rate-channel compression protocol attains the optimal rate-distortion function for a specific distortion measure and level.
\end{abstract}

\section{Introduction}
\label{sec:intro}

Reliable compression of an information source for efficient storage is a central problem in information theory. A key question involves the asymptotic characterization of the rate required to compress a source that can be recovered up to a certain degree of measured loss, which is known as the lossy source coding problem. In the case of classical sources, Shannon’s rate-distortion theory (RDT) \cite{shannon1959coding} established that the optimal rate for lossy data compression for a memoryless source is given by the mutual information between the source and its reconstruction. In this theory, Shannon introduced a \textit{local} error criterion based on averaged symbol-wise distortion between the reconstruction and the original source, using metrics like the Hamming distance and mean squared error.


The Wyner-Ziv (WZ) \cite{wyner1976rate} theorem extends Shannon's lossy source coding framework to a network setting where correlated side information about the source is available only at the decoder. 
It characterizes distortion using the averaged local error criterion while minimizing the required rate by utilizing side information to reconstruct the source.  
The WS rate-distortion function is characterized by the mutual information between the source and an auxiliary random variable conditioned on the side information. 

Considering side information in the quantum realm,  
Devetak and Winter \cite{devetak2003classical} explored lossless classical compression with quantum side information (QSI), demonstrating that quantum correlations can reduce the required classical compression rate. 
In the domain of lossy classical compression with QSI, Luo et al. \cite{luo2009channel} characterized the rate region when classical data is compressed in the presence of QSI at the decoder. 
In the context of the quantum-classical (QC) case, Datta et al. \cite{datta2013quantum} explored QC compression using QSI. The authors derived an asymptotic QC rate-distortion function in terms of single-letter quantum mutual information conditioned on the quantum side information and characterized distortion through a local error criterion, which is a QC analogue to those found in Shannon's and WZ's lossy source coding problems.

In our previous work \cite{sohail2023unique}, we developed a new formulation of the lossy source coding problem, called \textit{rate-channel theory} (RCT), which departs from the local error criterion to characterize distortion and employs a \textit{global} error criterion based on the posterior (backward) channel. In RCT, a single-letter posterior channel is given that characterizes the nature of loss instead of a single-letter distortion function. The goal is to construct a coding system such that the joint effect of producing a reconstruction sequence from the source sequence closely matches the effect of the $n$-product posterior channel on the non-product reconstruction sequence, manifesting as a block error constraint. This new formulation enables a single-letter asymptotic characterization for lossy source coding problems in terms of coherent information, quantum mutual information, and mutual information for fully quantum, QC, and classical setup, respectively \cite{atif2023lossy}. This has further inspired research on the rate-distortion-perception trade-off with a conditional distribution perception measure \cite{salehkalaibar2024rate} and quantum soft-covering lemma \cite{atif2023quantum}.

Motivated by these developments, we formulate the side information setting of the WZ incorporating a global error constraint using the notion of a posterior channel that produces the source from its reconstruction and side information available at the decoder. In this setting, instead of a single-letter distortion function, we are given a single-letter posterior channel capturing the nature of the loss. The goal is to construct an encoder and a decoder such that the joint effect of the producing reconstruction from the source sequence using side information available at the decoder is close to the effect of the $n$-product posterior channel acting on the non-product reconstruction sequence and side information, manifesting as a block error constraint. Similarly, we formulate the QC-QSI problem incorporating a global error constraint using the notion of a posterior classical-quantum (CQ) channel that produces the reference of the source and the side information using the reconstruction. We provide an inner bound in terms of the single-letter mutual information and quantum mutual information conditioned on side information for classical and QC cases, respectively (see Theorems \ref{thm:QC-QSI} and \ref{thm:C-CSI}). Furthermore, we provide the connection between RDT and rate-channel theory. In particular, we show that a lossy compression protocol using global error criterion also achieves optimal rate-distortion function for a specific distortion function and distortion level (see Theorem \ref{thm:connection}).

As for the inner bound, we use Winter's measurement compression protocol \cite{winter1999coding} to construct an encoding POVM, sequential decoder \cite{wilde2013sequential} using typical projectors for constructing decoding POVMs, and Sen's non-commutative union bound \cite{sen2012achieving}. For C-CSI, we employ likelihood encoders \cite{cuff2010coordination, atif2022source} for constructing randomized encoders. For detailed analysis, please refer to \cite{sohail2025WZ}.

\section{Main Results}
\label{sec:mainresults}
\textbf{Notations.} The set of density operators on Hilbert space $\calH_A$ is denoted by $\calD(\calH_A)$.
We denote the finite alphabet of a source as $\sfX$, and  
the set of probability distributions on the finite alphabet $\sfX$ as $\calP(\calX)$. Let $[\Theta] \deq \{1,2,\cdots,\Theta\}$. 

\begin{definition}\label{def:CQchannel}(Classical-Quantum (CQ) Channel)
Given a finite set $\sfX$ and a probability distribution $P_X$, a CQ channel $\mathcal{W}$ is   
specified by an ensemble of density of operators $\{(P_X(x),\calW_{x})\}_{ x \in \sfX}$. The corresponding average density operator is given as $\calW(P):= \sum_{x\in\sfX}P_X(x)\calW_x$.
\end{definition}

\subsection{Lossy Quantum-Classical Source Coding with Quantum Side Information}
Consider a quantum source $\rho^{AB}\in \calD(\calH_{A}\tensor \calH_{B})$ shared between a sender $A$ and a receiver $B$.
\begin{definition}(QC-QSI Source Coding Setup) A QC-QSI source coding setup is characterized by a triple $(\rho^{AB},\sfX,\calW_{\!\ \sfX\rightarrow \refe B})$, where $\rho^{AB}$ is the bipartite density operator of the source and its side information, $\sfX$ is the reconstruction alphabet, and $\calW:\sfX\rightarrow\calD(\calH_{\refe } \tensor \calH_B)$ is a single-letter posterior CQ channel.
\end{definition}
\begin{figure}[!htb]
    \centering
    \includegraphics[scale=0.85]{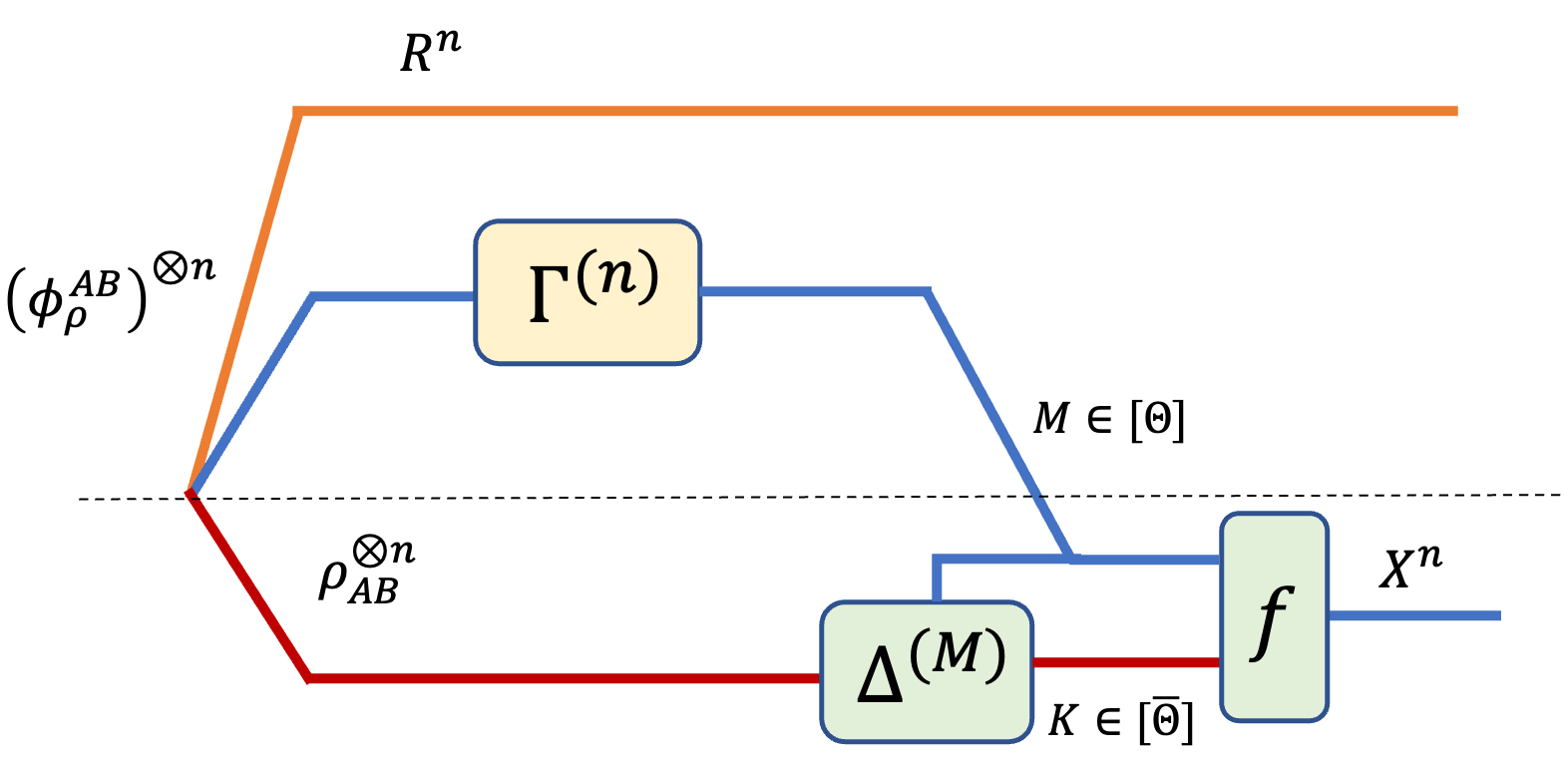}
    \vspace{-5pt}
    \caption{Illustration of Lossy QC-QSI Compression Protocol.}
    \label{fig:QC-QSI}
    \vspace{-5pt}
\end{figure}
\begin{definition}
    (Lossy QC-QSI Compression Protocol) For a given bipartite density operator $\rho^{AB}$ and a reconstruction alphabet $\sfX$, a $(n,\Theta)$ lossy QC-QSI compression protocol is characterized by $(i)$ an encoding POVM $\Gamma^{(n)}_{\sfA} \deq \{\epovm_{m}\}_{m=1}^{\Theta}$ acting on $A^n$, $(ii)$ For each $m\in[\Theta]$, a decoding POVM $\Gamma^{(n,m)}_{\sfB} \deq \{\dpovm^{(m)}_{k}\}_{k=1}^{\bTheta}$ acting on $B^n$, and  $(iii)$ a decoding map $f:[\Theta] \times [\bTheta] \rightarrow \sfX^n$, as shown in Figure \ref{fig:QC-QSI}. 
\end{definition}
\begin{definition}\label{def:qc_qsi_achievability}
    (Achievability) For a given QC-QSI source coding setup $(\rho^{AB},\sfX,\calW_{\sfX\rightarrow AB})$, a rate $R$ is said to be achievable if for all $\epsilon > 0$ and all sufficiently large $n$, there exists an $(n,\Theta)$ QC-QSI lossy compression protocol such that 
$\frac{1}{n}\log \Theta \leq R + \epsilon$, and $\Xi(\Gamma^{(n)}_\sfA,\Gamma^{(n)}_\sfB,f) \leq \epsilon$, where 
$\Xi(\Gamma^{(n)}_\sfA,\Gamma^{(n)}_\sfB,f) \deq$ 
\begin{align*}
    \Big\| \sum_{m,k} |\xn(m,k)\>\<\xn(m,k)| \tensor \tau^{\refe B}_{\xn(m,k)}   - \sum_{m,k}  Q_{\Xn}(\xn(m,k))|\xn(m,k)\>\<\xn(m,k)| \tensor \calW_{\xn(m,k)}^{RB}\Big\|_1,
\end{align*}
where $\calW_{\xn}^{RB} \deq \bigotimes_{i=1}^n \calW_{x_i}^{RB}$,  $\tau^{R^nB^n}_{\xn}\!\deq \!\Tr_{A^n}\{({I^{\refe}}^{\tensor n}\!\tensor \epovm_{m} \tensor \dpovm^{(m)}_k) (\phi^{RAB}_\rho)^{\tensor n}({I^{\refe A}}^{\tensor n}\! \tensor\!\dpovm^{(m)}_k)\}$ is the unnormalized system-induced density operator on systems $\refe^nB^n$, $, f(m,k) = \xn(m,k), \phi_{\rho}^{RAB}$ is the canonical purification of the state $\rho^{AB}$, and $Q_{\Xn}(\xn(m,k)) \deq \Tr\{(\Omega_{m} \tensor \Xi^{(m)}_k) (\rho^{AB})^{\tensor n}\}$ is the probability of observing the sequence $\xn(m,k)$.
\end{definition}
\begin{theorem}\label{thm:QC-QSI}For a given $(\rho_{AB},\sfX,\calW_{\sfX\rightarrow \refe B})$ QC-QSI source coding setup, 
a rate $R$ is achievable if $\calA(\rho^{AB},\calW_{X\rightarrow \refe B})$ is non-empty and
    $$R\geq I(X;\refe  B)_\sigma-I(X;B)_\sigma = I(X;\refe |B)_\sigma,$$
    where the quantum mutual information is computed with respect to the following quantum-classical state, $$\sigma^{XRB} \deq \sum_x P_X(x) |x\>\<x|^X \tensor \calW_x^{RB} \eqand  P_X(x) \in \calA(\rho^{AB},\calW_{X\rightarrow \refe B}),$$
    $\calA$ is the set of reconstruction distributions defined as 
$$\calA(\rho^{AB},\calW_{X\rightarrow \refe B}) \deq \{P_X\in\calP(\sfX):\sum_{x}P_X(x)\calW^{\refe B}_x = \Tr_{A}\{\phi^{\refe AB}\}\},$$ 
    $\phi^{\refe AB}  \text{ is a purification of $\rho^{AB}$, and } \{|x\>\}_{\{x\in \sfX\}}$ is an orthonormal basis 
    for the Hilbert space $\calH_X$ with $\dim{(\calH_X)}=|\calX|$.
\end{theorem}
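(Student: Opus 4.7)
The plan is to establish achievability by a random-coding, Wyner--Ziv-style binning argument that combines Winter's measurement compression with a sequential quantum decoder on the side-information system. Fix $P_X\in\calA(\rho^{AB},\calW_{X\rightarrow \refe B})$ and let $\sigma^{XRB}$ be the induced classical-quantum state. For a small $\delta>0$, set $L\deq\lceil 2^{n(I(X;\refe B)_\sigma+\delta)}\rceil$, $\Theta\deq\lceil 2^{n(I(X;\refe|B)_\sigma+2\delta)}\rceil$, and $\bTheta\deq L/\Theta$, so that $\frac{1}{n}\log \bTheta \leq I(X;B)_\sigma-\delta$. Draw $L$ i.i.d.\ codewords $\xn(m,k)\sim P_X^{\otimes n}$ indexed by $(m,k)\in[\Theta]\times[\bTheta]$; the bin index $m$ is transmitted, the within-bin index $k$ is recovered using the quantum side information $B^n$, and $f(m,k)\deq \xn(m,k)$.

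For the encoder, the hypothesis $P_X\in\calA$ says $\sum_x P_X(x)\calW_x^{\refe B} = \Tr_A \phi^{\refe AB}$, which is exactly the consistency condition needed to construct a ``fine'' POVM $\{\Lambda_{\xn}\}_{\xn}$ on $A^n$ that, when applied to $(\phi^{\refe AB})^{\otimes n}$, produces the target ensemble $(P_X^{\otimes n}(\xn),\calW_{\xn}^{\refe B})$ on $\refe^n B^n$. Grouping codewords into bins yields the encoding POVM $\epovm_m\deq\sum_{k=1}^{\bTheta}\Lambda_{\xn(m,k)}$, completed suitably outside the typical subspace so that $\sum_m \epovm_m = I^{A^n}$. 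Winter's measurement compression theorem \cite{winter1999coding}, applied with total codebook size $L\geq 2^{n(I(X;\refe B)_\sigma+\delta)}$, guarantees that the classical-quantum state obtained after encoding is close in trace distance, in expectation over the random codebook, to the i.i.d.\ ensemble appearing on the right-hand side of $\Xi$ in Definition \ref{def:qc_qsi_achievability}.

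For the decoder, conditioned on receiving bin index $m$, we build $\{\dpovm^{(m)}_k\}_{k=1}^{\bTheta}$ as the sequential measurement on $B^n$ that tests the conditionally typical projectors $\Pi_{B^n|\xn(m,k)}$ in the order $k=1,\ldots,\bTheta$, following \cite{wilde2013sequential}. Since $\frac{1}{n}\log \bTheta \leq I(X;B)_\sigma-\delta$ lies strictly below the Holevo information of $\{(P_X(x),\calW_x^B)\}$, Sen's non-commutative union bound \cite{sen2012achieving} shows that the expected sequential decoding error vanishes as $n\to\infty$. Putting the two pieces together via the triangle inequality for the trace norm and the gentle-measurement lemma (to commute the decoder's projectors past the classical $\xn$ register in $\Xi$), one bounds $\EE[\Xi]$ by the sum of Winter's approximation error and the sequential decoding error, both of which tend to $0$. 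A standard derandomization extracts a deterministic codebook satisfying $\Xi\leq\epsilon$ at rate $\frac{1}{n}\log\Theta\leq I(X;\refe|B)_\sigma+3\delta$, proving the claim. The main obstacle will be the interaction between the two approximations: because $\Xi$ retains the classical label $\xn(m,k)$ explicitly, the sequential decoding on $B^n$ must not corrupt the classical bookkeeping on the $\xn$ register, which requires factoring $\dpovm_k^{(m)}$ through near-commuting conditional typical projectors and carefully sequencing Sen's bound with Winter's measurement compression estimate so that the two errors compose additively rather than multiplicatively.
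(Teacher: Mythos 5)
Your proposal is correct and follows essentially the same route as the paper: Winter's measurement compression to build the encoding POVM at total covering rate $I(X;\refe B)_\sigma$ (with the membership $P_X\in\calA$ supplying the consistency condition for faithful simulation), binning with the in-bin index recovered by a sequential decoder on $B^n$ at packing rate below $I(X;B)_\sigma$ via Sen's non-commutative union bound, and a triangle-inequality/gentle-measurement decomposition of $\Xi$ followed by derandomization. The ``main obstacle'' you flag is precisely what the paper's error analysis resolves by polar-decomposing $\sqrt{\Lambda_k^{(m)}}$ into the product of sequential projectors times a unitary and splitting the error into covering, packing, and encoding-failure terms.
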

\begin{proof}
    The proof is provided in Section \ref{sec:proof:QC-QSI}.
\end{proof}

\subsection{Lossy Classical Source Coding with Classical Side Information}
\begin{definition}(C-CSI Source Coding Setup) A C-CSI source coding setup is characterized by a triple $(\pxz,\sfY,W_{X|YZ})$, where $\pxz$ is the joint source and side-information distribution over a finite alphabet $\sfX\times\sfZ$, $\sfY$ is the reconstruction alphabet, and $W_{X|YZ}:\sfY\times\sfZ \rightarrow\sfX$ is the posterior (backward) channel, i.e., the single-letter conditional distribution of source given the reconstruction and side-information.
\end{definition}
\begin{definition}(Lossy C-CSI Source Compression Protocol) For a given $\pxz$ and reconstruction alphabet $\sfY$, an $(n,\Theta)$ lossy C-CSI source compression protocol consists of $(i)$ a randomized encoding map $\calE^{(n)}:\sfX^n\rightarrow[\Theta]$ and $(ii)$ a randomized decoding map $\calD^{(n)}:\sfZ^n \times[\Theta] \rightarrow\sfY^n,$ as shown in Figure \ref{fig:C-CSI}.  
\end{definition}
\vspace{-3pt}
\begin{figure}[!htb]
    \centering
    \includegraphics[scale=0.95]{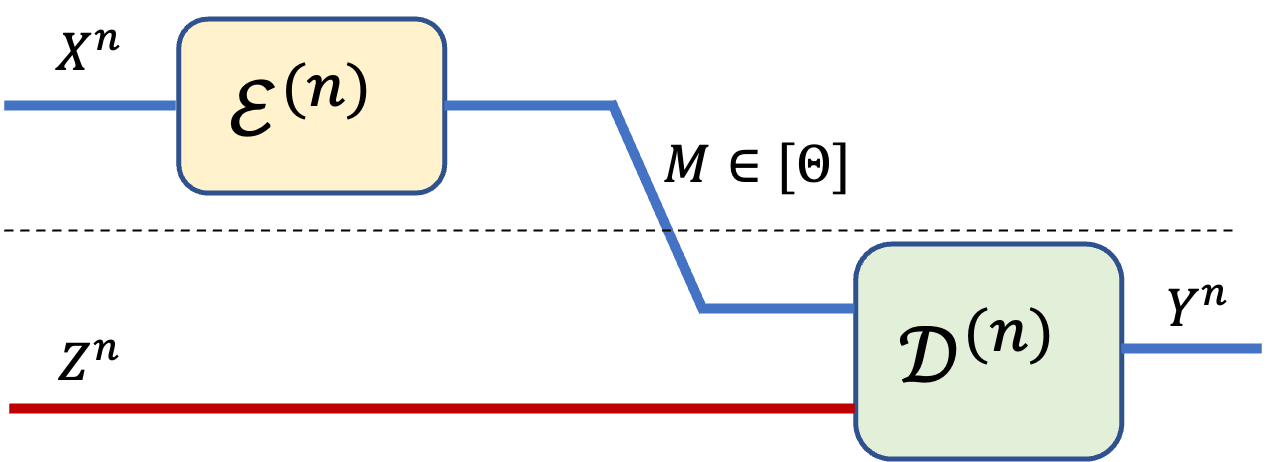}
    \vspace{-5pt}
    \caption{Illustration of Lossy C-CSI Compression Protocol.}
    \label{fig:C-CSI}
    \vspace{-3pt}
\end{figure}
\begin{definition}(Achievability) Given C-CSI source coding setup $(\pxz,\sfY,W_{X|YZ})$, a rate R is said to be achievable if for all $\epsilon >0$ and all sufficiently large $n$, there exists an $(n,\Theta)$ lossy compression protocol such that $\frac{1}{n}\log\Theta \leq R+\epsilon$ and $\Xi(\calE^{(n)},\calD^{(n)})\leq \epsilon$, where
\begin{align}
\Xi(\calE^{(n)},\calD^{(n)}) \deq \left\|P_{X^nY^nZ^n} \!-\! P_{Y^nZ^n}W_{X|YZ}^n\right\|_{\text{TV}},\nonumber
\vspace{-10pt}
\end{align}
    $P_{X^nY^nZ^n}(\xn\!,\!\yn\!,\!\zn) \! =\!  {\pxz^n(\xn\!,\!\zn)} \sum_{m \in [\Theta]}\encodern(m|x^n)$ $\decodern(y^n|m,\!\zn),\ \forall(\xn\!,\!\yn\!,\!\zn)\in \sfX^n \times \sfY^n \times \sfZ^n$, is the system-induced distribution,
 $P_{Y^nZ^n}W_{X|YZ}^n$ is the approximating distribution, and $W_{X|YZ}^n(\xn|\yn,\zn) \!\deq\! \prod_{i=1}^nW_{X|YZ}(x_i|y_i,z_i)$. Let $\calR_{\text{C-CSI}}(\pxz,\sfY,W_{X|YZ})$ denote the set of achievable rates.
\end{definition}
\begin{theorem}\label{thm:C-CSI}(Lossy C-CSI Compression Inner Bound)
    $R\in \calR_{\text{C-CSI}}$ if $\calA(P_{XZ},W_{X|YZ})$ is non-empty and there exists a PMF $P_{UXYZ} \in \calA$ such that 
    \begin{itemize}
        \item $P_{XYZ} = \sum_{u\in \sfU}P_{UXYZ}(u,\!x,\!y,\!z)$ for all $(x,\!y,\!z)$
        \item $Z-X-U, \ X-(U,Z)-Y,\text{ and } \ X-(Y,Z)-U$ are Markov chains
        \item $R\geq I(X;U)-I(U;Z),$
    \end{itemize}  
    where $\calA$ is the set of reconstruction distributions defined as 
    $$\calA(P_{XZ},W_{X|YZ})\deq \Big\{P_{U|X}, P_{Y|UZ}:P_{X|Z} \frac{\sum_{u\in\calU}P_{U|X}P_{Y|UZ}}{\sum_{u\in\sfU}P_{U|X}P_{Y|UZ}} = W_{X|YZ} \eqand X-(Y,Z)-U \Big\}.$$
\end{theorem}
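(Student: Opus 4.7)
The plan is to use a Wyner--Ziv style random binning scheme combined with the likelihood encoder framework developed in \cite{cuff2010coordination,atif2022source} and the soft-covering lemma. First, I would fix a joint PMF $P_{UXYZ} = P_{XZ}\, P_{U|X}\, P_{Y|UZ}\in\calA$ (which by the definition of $\calA$ induces the desired posterior $W_{X|YZ}$ and satisfies the Markov chains). The target distribution I want the protocol to match, in total variation, is $P_{X^nY^nZ^n}^{\star} \deq P_{XZ}^n \cdot \sum_u P_{U|X}^n P_{Y|UZ}^n$, since $P_{Y^nZ^n}W_{X|YZ}^n$ equals this target by the defining identity of $\calA$.

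Second, I would construct a random codebook $\{u^n(m,\ell) : m\in[\Theta],\ \ell\in[L]\}$ drawn i.i.d.\ from $P_U^n$, with $L \approx 2^{n(I(U;Z)+\epsilon)}$, so that the total number of codewords is roughly $2^{n(R + I(U;Z))} \gtrsim 2^{n(I(X;U) + 2\epsilon)}$ under the hypothesis $R \geq I(X;U) - I(U;Z)$. The encoder $\calE^{(n)}$ is a likelihood encoder that, upon observing $x^n$, samples a pair $(m,\ell)$ with probability proportional to $P_{U|X}^n(u^n(m,\ell)\mid x^n)$ and transmits the bin index $m$. The decoder $\calD^{(n)}$, upon receiving $(m,z^n)$, samples $\ell$ via a second likelihood encoder proportional to $P_{U|Z}^n(u^n(m,\ell)\mid z^n)$, and then generates $y^n$ from the channel $P_{Y|UZ}^n(\cdot \mid u^n(m,\ell),z^n)$.

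Third, I would analyze $\Xi(\calE^{(n)},\calD^{(n)})$ by a triangle-inequality chain through three intermediate distributions on $(X^n,U^n,Y^n,Z^n)$: (a) the system-induced distribution from the full protocol, (b) the distribution obtained if $U^n$ were drawn jointly with $X^n$ from the true i.i.d.\ distribution $P_{XU}^n$ and then passed through $P_{Y|UZ}^n$, and (c) the target product distribution $P_{XZ}^n \cdot P_{U|X}^n \cdot P_{Y|UZ}^n$. Closeness of (a) to (b) follows from the soft-covering lemma applied to the encoder's likelihood sampling at rate $I(X;U)$, using the total codebook size $\Theta\cdot L$. Closeness of the bin-recovery step reduces to a second soft-covering argument: within a fixed bin of size $L \gtrsim 2^{nI(U;Z)}$, the likelihood decoder's induced distribution on $U^n$ given $Z^n$ is close to the true $P_{U|Z}^n$. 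Marginalizing to $(X^n,Y^n,Z^n)$ and using the Markov chain structure $X-(U,Z)-Y$ preserves total variation, after which the identity $P_{XZ}^n \sum_u P_{U|X}^n P_{Y|UZ}^n = P_{Y^nZ^n}W_{X|YZ}^n$ (from membership in $\calA$) closes the gap.

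The main obstacle will be the two-layer likelihood-encoder analysis: the soft-covering bound must be applied once to control the encoder-induced randomness over the full codebook (ensuring the $(X^n,U^n)$ marginal resembles $P_{XU}^n$) and again, conditionally, to control the decoder's recovery of $U^n$ within a bin using $Z^n$. The delicate point is that the second application is conditioned on the random codebook and on the encoder's choice, so independence of the $u^n(m,\ell)$'s across $\ell$ within a bin, together with the Markov chain $Z-X-U$, must be used to argue that the bin-restricted soft-covering rate $I(U;Z)$ suffices uniformly. Once both layers are bounded in expectation over the codebook, a standard selection argument yields a deterministic codebook achieving arbitrarily small $\Xi$, completing the achievability.
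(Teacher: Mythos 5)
Your overall scheme---likelihood encoder, random codebook of $U^n$-sequences, binning, and a triangle-inequality decomposition through intermediate joint distributions---is the same route the paper takes, and your identification of the encoder layer as a soft-covering step at total codebook rate $\approx I(X;U)$ matches the paper's sub-PMF and covering propositions. The genuine gap is in your bin-recovery step, both in the direction of the rate inequality and in the conceptual framing. You set the bin size to $L\approx 2^{n(I(U;Z)+\epsilon)}$ and describe the decoder's task as ``a second soft-covering argument'' showing that the likelihood decoder's output $\hat U^n$ given $Z^n$ is close to $P_{U|Z}^n$. This has the inequality backward and mistakes packing for covering. Producing a $\hat U^n$ whose conditional marginal given $Z^n$ is close to $P_{U|Z}^n$ is not the goal; the decoder must recover the \emph{same} codeword the encoder selected, so that the joint $(X^n,\hat U^n,Z^n)$ stays close to $P_{XUZ}^n$. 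When a bin contains $\gtrsim 2^{nI(U;Z)}$ codewords drawn i.i.d.\ from $P_U^n$, many of them are jointly typical with $Z^n$, and a likelihood decoder will frequently select a fresh, $X^n$-independent codeword; this drives $(X^n,\hat U^n,Z^n)$ toward $P_{XZ}^n\,P_{U|Z}^n$, which equals $P_{XUZ}^n$ only under the Markov chain $X-Z-U$---a condition the theorem does not assume (it assumes $Z-X-U$). As a concrete failure, take $U\perp Z$: then $I(U;Z)=0$ and your bin has $\sim 2^{n\epsilon}$ codewords, so the decoder's output is essentially a fresh uniformly drawn codeword independent of $X^n$, destroying the $I(X;U)>0$ correlation the target distribution requires.

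The fix is the opposite inequality: take $L\lesssim 2^{n(I(U;Z)-\epsilon)}$, so that with high probability the encoder's codeword is the \emph{unique} one in the bin jointly typical with $Z^n$---precisely the packing condition $\bar R-R<I(U;Z)$ that the paper imposes through its binning proposition. The paper's decoder is a joint-typicality decoder (it outputs $U^n(l)$ iff there is exactly one index $l$ in the bin with $(U^n(l),z^n)$ jointly typical, and defaults to a fixed atypical sequence otherwise), but a likelihood decoder works equally well once the bin size is corrected, since almost all of its mass then concentrates on that single jointly typical codeword. With this correction, your decomposition and your use of soft covering for the encoding layer line up with the paper's proof strategy.
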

\begin{proof}
    The proof is provided in Section \ref{sec:proof:C-CSI}.
\end{proof}

\subsection{Connection Between Rate-Channel Theory and Rate-Distortion Theory}
In \cite{sohail2023unique}, we have developed a new formulation of the lossy source coding problem called \textit{rate-channel theory} which is described below. 
\begin{definition}
    [Achievability]\label{def:clserror_constraint} Given a source coding setup $(\px,\sfY,W_{X|Y})$,
a rate $R$ is said to be achievable if for all $\epsilon >0$ and all sufficiently large $n$, there exists an 
$(n,\Theta)$ lossy source compression protocol consists of  $(i)$ a randomized encoding map $\encodern:\sfX^n \longrightarrow [\Theta]$ and 
$(ii)$ a randomized decoding map $\decodern:[\Theta] \longrightarrow {\sfY}^n$ such that $\frac{1}{n}\log \Theta \leq R + \epsilon$, and $\Xi(\encodern,\decodern) \leq \epsilon$, where 
\begin{equation}\label{def:error_constraint}
  \Xi(\encodern,\decodern)\deq \frac{1}{2}\sum_{\xn \yn}\|{P_{X^n\Yn}(x^n,\yn) -
    P_{\Yn}(\yn)
    \\
    \Pi_{i=1}^n\prevTC(x_i|\hat{x}_i)}\|,
\end{equation}
and
$
P_{X^n\Yn}(\xn,\yn) =  {\px^n(x^n)} \sum_{m \in [\Theta]} \encodern(m|x^n) \decodern(\yn|m), 
\text{  for all  } (x^n, \yn)\in \sfX^n \times {\sfY}^n,$ is the system-induced distribution,
 and $P_{\Yn}W_{X|Y}^n$ is the approximating distribution.
\end{definition}

\begin{theorem}\label{thm:Csourcecoding}(Rate-Channel Theory \cite[Theorem 2]{sohail2023unique}). For a $(\px,\sfY,W_{X|Y})$ source coding setup, a rate $R$ is said to be achievable if and only if $\calA(\px,W_{X|Y})$ is non-empty, and \begin{equation}\label{eqn:clsratedistortion}R \geq \min_{P_Y \in \calA(\px,W_{X|Y})} I(X;Y), \vspace{-5pt}\end{equation}
where $\calA(\px,W_{X|Y}) \deq \{P_Y \in \calP(\sfY): \!\! \text{for all }  x \in \sfX$, $ 
\sum_{y} P_{Y}(y) W_{X|Y}(x|y) = \px(x)\},$ is the set of reconstruction distributions.
\end{theorem}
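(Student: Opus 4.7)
The plan is to establish the if-and-only-if characterization via two independent arguments: a likelihood-encoder with soft-covering for the direct part, and a continuity plus single-letterization argument for the converse.

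For achievability, I would fix any $P_Y \in \calA(\px, W_{X|Y})$ and pick $R > I(X;Y)_{P_Y W_{X|Y}}$. A random codebook of $\Theta = \lceil 2^{nR} \rceil$ sequences $Y^n(m)$ is drawn i.i.d.\ from $P_Y^{\otimes n}$, with deterministic decoder $\decodern(y^n|m) = \Ind\{y^n = Y^n(m)\}$ and likelihood encoder $\encodern(m|x^n) \propto W_{X|Y}^n(x^n|Y^n(m))$. Following the likelihood-encoder framework, the induced joint $P_{X^nY^n}$ differs from the auxiliary mixture
\[
\tilde P_{X^n Y^n}(x^n,y^n) \ \deq\ \frac{1}{\Theta}\sum_{m=1}^{\Theta} \Ind\{y^n = Y^n(m)\}\, W_{X|Y}^n(x^n|Y^n(m))
\]
by an amount controlled in TV by $\|\tilde P_{X^n} - \px^{\otimes n}\|_{\text{TV}}$. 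The classical soft-covering lemma drives this quantity to zero in expectation over the codebook whenever $R > I(X;Y)$; a triangle inequality and random-selection argument then extract a deterministic codebook satisfying the block error constraint. Infimizing over $P_Y \in \calA(\px,W_{X|Y})$ yields the claimed achievable rate.

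For the converse, suppose $R$ is achievable and let $P_{X^n Y^n}$ be the induced joint with marginal $P_{Y^n}$ satisfying $\|P_{X^n Y^n} - P_{Y^n} W_{X|Y}^n\|_{\text{TV}} \leq \epsilon$. The chain
\[
n(R+\epsilon) \ \geq\ \log\Theta \ \geq\ H(M) \ \geq\ I(X^n;M) \ \geq\ I(X^n;Y^n)
\]
reduces the task to lower-bounding $I(X^n;Y^n) = H(X^n) - H(X^n|Y^n)$. Since the source is exactly i.i.d., $H(X^n) = nH(X)_{\px}$. The TV-closeness between induced and target implies coordinatewise TV-closeness of single-letter marginals and, by continuity of entropy under TV perturbation, $H(X^n|Y^n) \leq \sum_i H(X_i|Y_i)_{P_{Y_i} W_{X|Y}} + n\, o(1)$. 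Each $P_{Y_i}$ is $O(\epsilon)$-feasible for $\calA(\px,W_{X|Y})$ because $\sum_y P_{Y_i}(y) W_{X|Y}(x|y)$ is close to $\px(x)$, and continuity of the constrained optimum in this feasibility constraint gives $I(X_i;Y_i)_{P_{Y_i}W_{X|Y}} \geq \min_{P_Y\in\calA} I(X;Y) - o(1)$. Averaging across $i$ completes the converse.

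The main obstacle is the single-letterization in the converse: the induced $Y^n$ is not product, so no bare chain-rule expansion factorizes cleanly. The argument must combine (i) the product structure of the target posterior channel $W_{X|Y}^n$, (ii) TV-continuity estimates for entropy converting coordinatewise TV closeness into uniform entropy approximation, and (iii) continuity of $\min_{P_Y\in\calA} I(X;Y)$ under small perturbations of the marginal constraint $\sum_y P_Y(y) W_{X|Y}(x|y)=\px(x)$. Making all three ingredients sufficiently sharp so that the residual $o(1)$ terms vanish as $\epsilon\to 0$ is the technical heart of the proof.
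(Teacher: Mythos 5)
Your proposal is correct and follows essentially the same route as the source: the paper quotes this theorem from its reference [Theorem 2, sohail2023unique] rather than reproving it, but the achievability there (and in this paper's Theorem~\ref{thm:C-CSI} specialized to trivial side information) is exactly your likelihood-encoder-plus-soft-covering construction with a pruned/typicality-restricted codebook, and the converse rests on the same two ingredients you identify — single-letterization of the block TV constraint (the paper's Lemma~\ref{lem:averageSingleletter}) and continuity of the feasible set $\calA$ under perturbation of the linear constraint (the paper's Lemma~\ref{lemma:sols_linearEqn_close}). No gaps worth flagging.
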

Let us consider the case when the side information $Z$ is trivial. 
Given a lossy source coding setup $(\px,\sfY,W_{X|Y})$, 
consider a sequence of 
$(n,\Theta)$
lossy compression protocols 
that achieves the asymptotic performance  $R^\star \deq \min_{P_Y \in \calA(P_X,W_{X|Y})} I(X;Y)$
as given below in Theorem \ref{thm:Csourcecoding}.
Let 
$P_{X^nY^n}$ be the induced $n$-letter joint distribution on the source and the reconstruction vectors. 
Then, we see that 
$\lim_{n \rightarrow \infty} \frac{1}{n} \log \Theta =I(X;Y)$,
$$\lim_{n\rightarrow\infty}\|P_{X^nY^n} -
    P_{Y^n}W_{X|Y}^n\|_{\normalfont \text{TV}} = 0.$$


\begin{theorem}\label{thm:connection}
    Let $c>0$ and $b(x)$ be an arbitrary constant and a function, respectively. Consider a single-letter distortion function given as $d(x,y)= -c\log_2 W_{X|Y}(x|y)+b(x)$,
and distortion level $D={\mathbb{E}[d(X,Y)]}$ with respect to distribution $P_{Y}W_{X|Y}$, where $P_{Y} \in \calA(P_X,W_{X|Y})$ achieves the optimality in Theorem \ref{thm:Csourcecoding}. 
Then, the same sequence of protocols achieves the Shannon rate-distortion function at distortion level $D$ for the source $P_X$, and distortion function $d$, i.e., 
\vspace{-5pt}
\begin{align*}
    \lim_{n \rightarrow \infty} \!\mathbb{E} \bigg[\!\frac{1}{n} \!\sum_{i=1}^n d(X_i,Y_i)\!\bigg] \!=D,
\end{align*}
where the expectation is with respect to the distribution induced by the protocol $P_{X^nY^n}.$
\end{theorem}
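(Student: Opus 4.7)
The plan is to prove two facts: (a) $R^{\star}$ coincides with Shannon's rate-distortion function $R(D)$ at level $D$, and (b) the induced expected distortion converges to $D$. Together with the protocol's rate bound $\frac{1}{n}\log\Theta\le R^{\star}+\epsilon$, these establish that the same sequence of protocols achieves the Shannon optimum. Throughout I assume $W_{X|Y}(x|y)>0$ so that $-\log_2 W_{X|Y}$ is uniformly bounded; the degenerate case can be handled by restricting to the support.

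For (a), substituting $d(x,y)=-c\log_2 W_{X|Y}(x|y)+b(x)$ into $D=\mathbb{E}_{P_Y^{\star}W_{X|Y}}[d(X,Y)]$ yields $D = cH(X|Y)_{P_Y^{\star}W_{X|Y}} + \mathbb{E}_{P_X}[b(X)] = c[H(X)-R^{\star}]+\mathbb{E}_{P_X}[b(X)]$. For any feasible joint $P_{XY}=P_X P_{Y|X}$ with $\mathbb{E}[d(X,Y)]\le D$, letting $\tilde W(x|y):=P_{XY}(x,y)/P_Y(y)$ be the Bayes posterior and using the algebraic identity $-\mathbb{E}_{P_{XY}}[\log_2 W_{X|Y}(X|Y)] = H(X|Y)_{P_{XY}} + D(\tilde W \,\|\, W_{X|Y} \,|\, P_Y)$,
\begin{equation*}
    \mathbb{E}_{P_{XY}}[d(X,Y)] = c[H(X)-I(X;Y)] + c\,D(\tilde W \,\|\, W_{X|Y} \,|\, P_Y) + \mathbb{E}_{P_X}[b(X)].
\end{equation*}
Non-negativity of KL together with the distortion constraint forces $I(X;Y)\ge R^{\star}$, giving $R(D)\ge R^{\star}$. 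Conversely, taking $P_X P_{Y|X}=P_Y^{\star}W_{X|Y}$ (valid by admissibility of $P_Y^{\star}$) achieves $I(X;Y)=R^{\star}$ at distortion exactly $D$, so $R(D)\le R^{\star}$, hence $R(D)=R^{\star}$.

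For (b), applying the same identity coordinate-wise to the induced joint $P_{X_iY_i}$ (with $P_{X_i}=P_X$ by IID source) gives $\mathbb{E}[d(X_i,Y_i)] = cH(X_i|Y_i)_{\text{ind}} + cD(P_{X_i|Y_i}^{\text{ind}} \,\|\, W_{X|Y} \,|\, P_{Y_i}^{\text{ind}}) + \mathbb{E}_{P_X}[b(X)]$. For the lower bound, the rate constraint with $I(X^n;Y^n)\le\log\Theta$, the entropy chain rule, and independence of the $X_i$'s together yield $\frac{1}{n}\sum_i I(X_i;Y_i)\le R^{\star}+\epsilon$, so $\frac{1}{n}\sum_i H(X_i|Y_i)_{\text{ind}}\ge H(X)-R^{\star}-\epsilon$; dropping the non-negative KL gives $\mathbb{E}[\frac{1}{n}\sum_i d(X_i,Y_i)]\ge D-c\epsilon$, and $\epsilon\to 0$ gives $\liminf_n\ge D$. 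For the upper bound, TV-closeness $\|P_{X^nY^n}-Q\|_{\text{TV}}\to 0$ (with $Q:=P_{Y^n}W_{X|Y}^n$) and boundedness of $\log_2 W_{X|Y}$ yield $\mathbb{E}_{P_{X^nY^n}}[\frac{1}{n}\sum_i d(X_i,Y_i)] = c\sum_y \bar P_Y^Q(y)h(y) + \mathbb{E}_{\bar P_X^Q}[b(X)] + o(1)$, where $h(y):=H(X|Y=y)$ under $W_{X|Y}(\cdot|y)$, $\bar P_Y^Q:=\frac{1}{n}\sum_i P_{Y_i}^Q$, and the KL term vanishes exactly under $Q$. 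Since $\bar P_X^Q\to P_X$ in TV, $\bar P_Y^Q$ is asymptotically admissible; a perturbation argument produces $\tilde P_Y^n\in\calA$ with $\|\tilde P_Y^n-\bar P_Y^Q\|_{\text{TV}}\to 0$, and the optimality of $P_Y^{\star}$ (which, as the minimizer of $I(X;Y)=H(X)-H(X|Y)$ over $\calA$, \emph{maximizes} $H(X|Y)$ over $\calA$) yields $\sum_y \bar P_Y^Q(y)h(y)\le H(X|Y^{\star})+o(1)$, hence $\limsup_n\le D$. Combining, $\lim_n \mathbb{E}[\frac{1}{n}\sum_i d(X_i,Y_i)]=D$.

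The main obstacle is the perturbation step in the upper bound: producing $\tilde P_Y^n\in\calA$ TV-close to $\bar P_Y^Q$, given only that $\bar P_X^Q = A(\bar P_Y^Q)$ is TV-close to $P_X$, where $A:P_Y\mapsto\sum_y P_Y(y)W_{X|Y}(\cdot|y)$ is the linear map with $\calA=A^{-1}(P_X)$. This is a quantitative set-valued continuity of a (possibly non-injective) finite-dimensional linear inverse restricted to the probability simplex; it holds by standard stability arguments but is the technical heart of the proof.
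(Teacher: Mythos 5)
Your proposal is correct, and it takes a genuinely different route from the paper for the convergence of the induced distortion. The paper's proof is a one-shot argument: it invokes its Lemma \ref{lem:averageSingleletter} (the Ces\`{a}ro-averaged single-letter distribution $P_{X_QY_Q}$ is TV-close to $P_{Y_Q}W_{X|Y}$ whenever the $n$-letter block error is small) together with Lemma \ref{lemma:sols_linearEqn_close} (a finite-dimensional linear stability result) to conclude $P_{X_QY_Q}\to P_YW_{X|Y}$ for an optimal $P_Y\in\calA$, and then uses boundedness of $d$ to pass the limit inside the expectation; the identification of $W_{X|Y}$ as the optimal backward channel for the log-loss-type distortion is cited from Csisz\'{a}r--K\"{o}rner rather than derived. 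You instead prove that identification explicitly via the decomposition $\mathbb{E}_{P_{XY}}[d]=c[H(X)-I(X;Y)]+c\,D(\tilde W\|W_{X|Y}|P_Y)+\mathbb{E}_{P_X}[b(X)]$, and you establish the distortion limit by a sandwich: the lower bound comes from the rate constraint via the standard converse chain $\tfrac1n\sum_iI(X_i;Y_i)\le\tfrac1n I(X^n;Y^n)\le R^\star+\epsilon$ plus non-negativity of the KL term, and the upper bound from TV-closeness to $P_{Y^n}W^n_{X|Y}$ plus the perturbation-into-$\calA$ step plus the observation that the optimal $P_Y^\star$ maximizes $H(X|Y)$ over $\calA$. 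This two-sided argument is arguably more self-contained: it makes explicit why the limit is exactly $D$ rather than merely $\le D$ (the paper's assertion that the limiting $P_Y$ is \emph{optimal}, not just admissible, implicitly also relies on the rate constraint, which your lower bound supplies directly). The one step you flag but do not prove --- producing $\tilde P_Y^n\in\calA$ TV-close to $\bar P_Y^Q$ from the fact that its image under the channel map is TV-close to $P_X$ --- is precisely the paper's Lemma \ref{lemma:sols_linearEqn_close}, proved there via a KKT/pseudoinverse argument; so your identification of the technical heart of the proof is exactly right.
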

\begin{proof}
    The proof is provided in Section \ref{sec:proof:connection}.
\end{proof}

\section{Proof of Theorem \ref{thm:QC-QSI}}
\label{sec:proof:QC-QSI}

For a given $(\rho^{AB},\sfX,\calW_{\!\ \sfX \rightarrow AB})$ QC-QSI source coding setup, we choose a $P_X(x) \in \calA(\rho^{AB},\calW_{\!\ \sfX \rightarrow AB})$. From now on, we let $\Theta := 2^{nR}$  and $\Theta := 2^{n\Rbar}$.

\noindent \textbf{Codebook Design}: We generate a codebook $\calC$ consisting of $n$-length codewords by randomly and independently selecting $\Theta\times\bTheta$ sequences $\calC\deq \{\Xn(m,k): m\in [\Theta]\eqand k\in [\bTheta]\}$ according to the following pruned distribution:
 \begin{align}\label{def:qc_distribution}
     &\codeDistribution(\Xn(m) = \xn) = \left\{\!\!\!\!\begin{array}{cc}
          \dfrac{P_X^n(\xn)}{(1-\varepsilon)}  & \mbox{for} \; \xn \in \Txqc\\
           0 &  \mbox{otherwise,}
     \end{array} \right. \!\!
 \end{align} 
  where $ P_X^n(\xn) = \prod_{i=1}^n P_X(x_i)$, $\Txqc$ is the $\delta$-typical set corresponding to the distribution $P_X$ on the set $\sfX$, and $\varepsilon(\delta,n) \triangleq \sum_{\xn \not \in \Txqc} P_X^n(\xn)$. Note that $\varepsilon(\delta,n) \searrow 0$ as $n \rightarrow \infty$ and for all sufficiently small $\delta > 0$. 
  
\vspace{2pt}
\noindent\textbf{Construction of Encoding POVM}:
Let $\Pi_{\rho}^{\refe B}$ and $\Pi_{\xn}^{\refe B}$ denote the $\delta$-typical and conditional $\delta$-typical projectors defined as in \cite[Def. 15.1.3]{wilde_arxivBook} and \cite[Def. 15.2.4]{wilde_arxivBook}, with respect to $\calW^{\refe B} \deq \sum_{x\in\sfX} P_X(x)\calW^{\refe B}_x$ and $\calW_{x}^{\refe B}$, respectively.
For all $\xn \in \Txqc$, define  
\begin{align*}
    \rhotilde_{\xn}^{\refe B} \deq \hat{\Pi} \Pi_{\rho}\Pi_{\xn}\calW_{\xn}^{\refe B}\Pi_{\xn}\Pi_{\rho}\hat{\Pi} \ \! \eqand \ \! \rhotilde^{\refe B} \deq \EE_{\PP}[\rhotilde_{\Xn}^{\refe B}],
\end{align*}
and $\rhotilde_{\xn}^{\refe B} = 0$ for $\xn \notin \Txqc$, where $\hat{\Pi}$ is the cut-off 
 projector onto the subspace spanned by the eigenbasis of $\EE[\Pi_{\rho}\Pi_{\Xn}\calW_{\Xn}^{\refe B}\Pi_{\Xn}\Pi_{\rho}]$ with eigenvalues greater than $\epsilon d$, where $d \!\deq\! 2^{-n(S(\calW^{\refe B})+\delta_1)}$ and $\delta_1$ will be specified later.  
 Using the Average Gentle Measurement Lemma \cite[Lemma 9.4.3]{wilde_arxivBook}, for any given $\epsilon \in (0,1)$, and all sufficiently large $n$ and all sufficiently small $\delta$, we have 
\begin{align} \label{eq:closeness_ref_SI}
    \EE_{\PP}[\|\rhotilde_{\Xn}^{\refe B}  - \calW_{\Xn}^{\refe B} \|_1] \leq \epsilon.
\end{align}
The proof follows from the derivation of \cite[Eq. 35]{wilde_e}. Using the above definitions, for all $\xn \in \sfX^n$, we construct the operators,
\vspace{-0.065in}
\[\epovm_{\xn}^{\refe B} \deq \gamma_{\xn}\ {(\calW^{{\refe B}^{\tensor n}})}^{-1/2} \rhotilde_{\xn}^{\refe B} {(\calW^{{\refe B}^{\tensor n}})}^{-1/2} ,\text{ where }\gamma_{\xn} \!\deq\! \gamma \cdot |\{(m,k)\!:\!\Xn(m,k)\! =\! \xn\}|,\]
 $\gamma \deq (\Theta\bTheta)^{-1}  \frac{(1-\varepsilon)}{(1+\eta)}$ and $\eta \in (0,1)$ is a parameter that determines the probability of not obtaining a sub-POVM. Note that in the above definition operator  $\epovm_{\xn}^{\refe B}$ acts on $(\calH_{\refe^n} \tensor \calH_{\Bn})$, however, we define $\epovm^{A}_{\xn} \in \calL(\calH_{\An})$. To obtain this, we transform $\epovm_{\xn}^{\refe B}$ as 
\[\epovm_{\xn}^{A} = \sum_{\an \abarn} \<\an|\epovm_{\xn}^{\refe B} \ |\abarn\>_{\refe B} |\an\>\<\abarn|_{A},\]
where $\{|a\>_A\}$ is an eigenbasis of $\rho^A$, $\{|a\>_{RB}\}$ is an eigenbasis of $\rho^{\refe B} := \Tr_A\{|\phi_{\refe AB}\>\<\phi_{\refe AB}|\}$, and $|\phi_{\refe AB}\>$ is the canonical purification of $\rho^{AB}$. Furthermore, by using the equivalence of purification \cite[Thm. 5.1.1]{wilde_arxivBook}, it can be easily shown that $\Tr\{\epovm_{\xn}^A\rho^{A^{\tensor n}}\} = \Tr\{\epovm_{\xn}^{\refe B} \ \calW^{\refe B ^{\tensor n}}\}.$

Let $\I_{\{\mbox{sP}\}}$ denote the indicator random variable corresponding to the event that  $\{\epovm_{\xn}^{A} \colon \xn \in  \Txqc\}$ forms a  sub-POVM. We now provide a proposition from \cite{winter}, which will be helpful later in the analysis.
\begin{prop} \label{prop:enc_subpovm}For all $\epsilon, \eta \in (0,1)$, for all sufficiently small $\delta > 0$, and for all sufficiently large $n$, we have
$\EE[{\I_{\{\mbox{\normalfont sP}\}}}] \geq 1-\epsilon$, if $\frac{1}{n}(\log(\Theta)+\log(\bTheta)) > \chi(\{P_X(x),\calW^{\refe B}_x\})$.
\end{prop}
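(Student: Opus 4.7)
The proposition asserts that the random operators $\{\epovm_\xn^A: \xn \in \Txqc\}$ form a sub-POVM with probability $\geq 1-\epsilon$ under the rate condition. The plan is to exploit the isometric equivalence induced by the canonical purification $\phi_\rho^{RAB}$: since $\epovm_\xn^A$ is obtained from $\epovm_\xn^{RB}$ by a transpose in a fixed orthonormal eigenbasis, operator positivity and operator ordering are preserved, so $\sum_\xn \epovm_\xn^A \leq I^{A^{\otimes n}}$ is equivalent to $\sum_\xn \epovm_\xn^{RB} \leq I^{R^nB^n}$. Pulling the inverse square roots outside the sum, the latter reduces to the operator concentration statement
$$
\gamma \sum_{m,k} A_{m,k} \;\leq\; \Pi_\rho, \qquad A_{m,k} \deq (\calW^{RB^{\otimes n}})^{-1/2}\, \rhotilde_{\Xn(m,k)}^{RB}\, (\calW^{RB^{\otimes n}})^{-1/2},
$$
for $N \deq \Theta\bTheta$ iid random matrices supported on the typical subspace defined by $\Pi_\rho$.

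Next, I would extract the two inputs required by the Ahlswede--Winter operator Chernoff bound. The typical projectors yield $\Pi_\xn \calW_\xn^{RB} \Pi_\xn \leq 2^{-n(S(\calW^{RB}|X) - \delta_2)} \Pi_\xn$ and $\Pi_\rho (\calW^{RB^{\otimes n}})^{-1} \Pi_\rho \leq 2^{n(S(\calW^{RB}) + \delta_1)} \Pi_\rho$; since $\hat\Pi \leq \Pi_\rho$ by construction, these combine to give $0 \leq A_{m,k} \leq T \, \Pi_\rho$ with $T = 2^{n(\chi(\{P_X(x),\calW_x^{RB}\}) + \delta_1 + \delta_2)}$. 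For the mean, the Average Gentle Measurement Lemma invoked in \eqref{eq:closeness_ref_SI} shows that $\mathbb{E}_\PP[\rhotilde_\Xn^{RB}]$ is trace-norm close to $\calW^{RB^{\otimes n}}/(1-\varepsilon)$ on the typical subspace, so $\mu \deq \mathbb{E}[A_{m,k}]$ is an operator close to $\Pi_\rho/(1-\varepsilon)$ with $\lambda_{\min}$ (on its support) bounded below by a constant independent of $n$. The calibration $\gamma = (1-\varepsilon)/((1+\eta)N)$ is chosen precisely so that $\gamma N (1+\eta) \mu \leq \Pi_\rho$.

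Finally, the Ahlswede--Winter operator Chernoff inequality gives
$$
\Pr\!\Big[\tfrac{1}{N}\sum_{m,k} A_{m,k} \;\not\leq\; (1+\eta)\mu\Big] \;\leq\; 2\,\Tr(\Pi_\rho)\cdot \exp\!\Big(-\tfrac{\eta^2\,\lambda_{\min}(\mu)\,N}{3T\ln 2}\Big),
$$
with $\Tr(\Pi_\rho) \leq 2^{n(S(\calW^{RB}) + \delta_1)}$. Under the rate hypothesis $\tfrac{1}{n}(\log\Theta + \log\bTheta) > \chi(\{P_X(x), \calW_x^{RB}\})$, the ratio $N/T$ grows as $2^{n\kappa}$ for some $\kappa>0$ once $\delta_1, \delta_2$ are small enough, so the doubly exponential decay in the exponent overwhelms the polynomial-in-$n$ prefactor $\log \Tr(\Pi_\rho)$, and the failure probability falls below $\epsilon$ for all sufficiently large $n$. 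The main technical obstacle is choosing the cutoff spectral threshold $\epsilon d$ so that simultaneously the operator upper bound $T$ stays at the scale $2^{n\chi}$ and the trace-norm approximation $\mathbb{E}[\rhotilde_\Xn^{RB}] \approx \calW^{RB^{\otimes n}}$ is preserved on average; the choice $d = 2^{-n(S(\calW^{RB}) + \delta_1)}$ in the codebook construction is exactly calibrated for this trade-off, and is the technical heart of Winter's measurement compression argument.
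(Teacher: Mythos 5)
The paper does not actually prove this proposition: it is imported directly from Winter's measurement-compression theorem (the text introduces it as ``a proposition from \cite{winter}''), and the surrounding construction --- the pruned codebook, the cut-off projector $\hat{\Pi}$ with threshold $\epsilon d$, and the calibration $\gamma=(1-\varepsilon)/((1+\eta)\Theta\bTheta)$ --- is set up precisely so that the cited operator-Chernoff argument applies. Your sketch reconstructs that argument faithfully: the reduction of the sub-POVM condition on $A^n$ to an operator-concentration statement on $R^nB^n$ via the purification, the operator upper bound $A_{m,k}\leq 2^{n(\chi+\delta_1+\delta_2)}\Pi_\rho$ obtained from the conditionally typical and typical projectors, and the Ahlswede--Winter bound with $N/T\approx 2^{n(R+\bar{R}-\chi)}$ driving the failure probability to zero. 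So your route is the same as that of the source the paper leans on.

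One step of your justification is wrong as written, although the conclusion survives. You argue that because $\EE_{\PP}[\rhotilde^{RB}_{X^n}]$ is trace-norm close to $\calW^{RB\otimes n}/(1-\varepsilon)$, the mean $\mu=\EE[A_{m,k}]$ has smallest nonzero eigenvalue bounded below by a constant independent of $n$. Trace-norm closeness gives no control on small eigenvalues, and the constant lower bound is false. The correct bound comes from the definition of $\hat{\Pi}$, which by construction retains only the eigenvalues of $\EE[\Pi_\rho\Pi_{X^n}\calW_{X^n}^{RB}\Pi_{X^n}\Pi_\rho]$ exceeding $\epsilon d=\epsilon\, 2^{-n(S(\calW^{RB})+\delta_1)}$; after conjugation by $(\calW^{RB\otimes n})^{-1/2}$, whose eigenvalues on the typical subspace are at least $2^{n(S(\calW^{RB})-\delta')/2}$, one obtains $\lambda_{\min}(\mu)\gtrsim \epsilon\, 2^{-O(n\delta_1)}$ --- exponentially small, but at an arbitrarily small rate. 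The Chernoff exponent is then proportional to $\eta^2\epsilon\, 2^{n(R+\bar{R}-\chi-O(\delta_1)-O(\delta_2))}$, which still diverges under the stated rate condition once $\delta$ is small enough, so the proposition follows; but the eigenvalue lower bound must be extracted from the cut-off construction, not from the gentle-measurement / trace-norm estimate (which serves only to control the covering errors $E_1,E_2$, as in Proposition \ref{prop:code_dependent_RV}).
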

If $\I_{\{\mbox{sP}\}} = 1$, then construct sub-POVM $\Gamma^{(n)}_{\sfA}$ as follows: $\Gamma^{(n)}_{\sfA} \deq \big\{\sum_{k\in[\bTheta]}\epovm_{\xn(m,k)}^{A}\big\}_{m\in[\Theta]}.$ 
We then add an additional operator $\epovm_{0}^A \deq (I\!-\!\sum_{m\in[\Theta]}\sum_{k\in [\bTheta]}\epovm_{\xn(m,k)}^A)$, associated with an arbitrary sequence $\xn_0 \in \sfX^n \backslash\Txqc$, to form a valid POVM $[\Gamma^{(n)}_\sfA]$ with at most $(\Theta \times \bTheta+1)$ elements. The extra element $\epovm_{\xn_0}^A$ corresponds to a failed encoding.

\vspace{2pt}
\noindent\textbf{Construction of Decoding POVM}:
For the ensemble $\{P_X(x),\calW^B_x\}$, we construct a collection of $n$-letter Bob's POVMs, one for each $m \in [\Theta]$, capable of decoding the message $k\in [\bTheta]$. 
Upon receiving the message $m$, Bob performs a sequence of binary measurements $\{\Pi_{\xn(m,k)},(I-\Pi_{\xn(m,k)})\}$ for all sequence $\xn(m,k) \in \calC$, where $\Pi_{\xn(m,k)}$ is a conditional typical projector for the tensor-product state $\calW^B_{\xn(m,k)}$.
Define the decoding POVM element as
$$\dpovm_{k}^{(m)} \deq \Pibar^{(m)}_{1} \cdots \Pibar^{(m)}_{k-1} \
\Pi^{(m)}_{k} \ \Pibar^{(m)}_{k-1} \cdots \Pibar^{(m)}_{1},$$
where $ \Pibar^{(m)}_{k}$ and $ \Pi^{(m)}_{k}$ are the shorthand notation for $(I-\Pi_{\xn(m,k)})$ and $\Pi_{\xn(m,k)}$, respectively.
The following proposition demonstrates that for Bob's POVMs, we can make the average probability of error arbitrarily small by using the non-commutative union bound \cite{sen2012achieving}.
\begin{prop}\label{prop:qc_packing}
    Given the ensemble $\{P_X(x),\calW^B_x\}$ and the collection of POVMs $\{\Xi_{k}^{(m)}\}_{k\in[\bTheta]}$, for any $\epsilon \in (0,1)$
$$\EE_\PP\left[\frac{1}{\bTheta}\sum_{k\in[\bTheta]} \Tr\left\{(I-\dpovm_{k}^{(m)})\calW_{m,k}^B\right\}\right] \leq \epsilon,$$
for sufficiently small $\delta>0$ and for all sufficiently large $n$, and for all $m \in [\Theta]$, if $\frac{1}{n}\log(\bTheta) < \chi(\{P_X(x),\calW^B_x\})$. 
\end{prop}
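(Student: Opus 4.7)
The plan is to establish the bound by first invoking Sen's non-commutative union bound to decouple the sequential product of projectors, and then to bound the resulting error terms in expectation over the random codebook. Concretely, for each $k \in [\bTheta]$, I would write
\begin{align*}
\Tr\{(I-\dpovm_k^{(m)})\calW_{\xn(m,k)}^B\} &= \Tr\{\calW_{\xn(m,k)}^B\} - \Tr\{\Pibar^{(m)}_{1}\cdots\Pibar^{(m)}_{k-1}\Pi^{(m)}_{k}\Pibar^{(m)}_{k-1}\cdots\Pibar^{(m)}_{1}\calW_{\xn(m,k)}^B\},
\end{align*}
and invoke Sen's inequality \cite{sen2012achieving} for the sequence of projectors $(\Pi^{(m)}_1,\ldots,\Pi^{(m)}_{k-1},I-\Pi^{(m)}_k)$ applied to the state $\calW_{\xn(m,k)}^B$, obtaining
\[
\Tr\{(I-\dpovm_k^{(m)})\calW_{\xn(m,k)}^B\} \leq 2\sqrt{\Tr\{(I-\Pi^{(m)}_k)\calW_{\xn(m,k)}^B\} + \sum_{j<k}\Tr\{\Pi^{(m)}_j \calW_{\xn(m,k)}^B\}}.
\]

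Next I would average over $k$ and then over the codebook, and pull the square root outside the expectation using Jensen's inequality (concavity of $\sqrt{\cdot}$). The first term $\EE_\PP[\Tr\{(I-\Pi_{\xn(m,k)})\calW_{\xn(m,k)}^B\}]$ is small by the standard properties of conditional $\delta$-typical projectors (the projector dominates the state up to an $\epsilon$ deficit on the typical subspace), combined with the fact that the pruned distribution in \eqref{def:qc_distribution} differs from $P_X^n$ by at most $\varepsilon \to 0$. For the second term, I would exploit the crucial fact that the codewords $\Xn(m,j)$ for $j \neq k$ are independent of $\Xn(m,k)$ under $\codeDistribution$, so that
\[
\EE_\PP\Big[\Tr\{\Pi_{\Xn(m,j)}\calW_{\Xn(m,k)}^B\}\Big] = \Tr\Big\{\EE_\PP[\Pi_{\Xn(m,j)}]\, \EE_\PP[\calW_{\Xn(m,k)}^B]\Big\},
\]
and the averaged state $\EE_\PP[\calW_{\Xn(m,k)}^B]$ is close in trace norm to $(\calW^B)^{\tensor n}$ (again up to the pruning error $\varepsilon$). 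Using the standard operator bounds on the average and conditional typical projectors (dimension of the conditional typical subspace at most $2^{n(S(\calW^B|X)+\delta)}$ and the average state being dominated by $2^{-n(S(\calW^B)-\delta)}$ on its typical subspace), one gets
\[
\Tr\{\Pi_{\xn(m,j)}(\calW^B)^{\tensor n}\} \leq 2^{-n(\chi(\{P_X,\calW^B_x\}) - 2\delta)},
\]
for every typical $\xn(m,j)$, where $\chi(\{P_X,\calW^B_x\}) = S(\calW^B) - \sum_x P_X(x)S(\calW^B_x)$.

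Summing the $(k-1)$ such terms gives a bound of $\bTheta \cdot 2^{-n(\chi(\{P_X,\calW^B_x\}) - 2\delta)}$, which vanishes as $n \to \infty$ whenever $\frac{1}{n}\log\bTheta < \chi(\{P_X,\calW^B_x\}) - 2\delta$; choosing $\delta$ sufficiently small yields the claimed rate threshold. Combining this with the gentle-measurement bound for the first term and inserting into the square root from Jensen completes the argument.

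The main obstacle I anticipate is the careful handling of the non-commutativity: the sequential structure of $\dpovm_k^{(m)}$ prevents a direct union bound, which is precisely why Sen's non-commutative union bound is essential, and in turn forces the Jensen step that converts an $\epsilon^2$-style decay inside the root into the desired $\epsilon$-level outer bound. A secondary subtlety is that the typical projectors $\Pi_{\xn(m,k)}$ depend on the random codeword, so the factorization of the expectation only holds after invoking independence across distinct $(m,j)$ indices; this independence is preserved by the i.i.d. construction under $\codeDistribution$ once one argues that the pruning contributes only an $\varepsilon$-level perturbation, which can be absorbed into the overall $\epsilon$.
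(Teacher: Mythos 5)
Your overall strategy---Sen's non-commutative union bound applied to the sequence $(\Pibar^{(m)}_1,\ldots,\Pibar^{(m)}_{k-1},\Pi^{(m)}_k)$, Jensen's inequality to pull the square root out of the expectation, and independence of distinct codewords to factorize $\EE_\PP[\Tr\{\Pi_{\Xn(m,j)}\calW^B_{\Xn(m,k)}\}]$---is exactly the mechanism behind the sequential-decoding packing lemma that the paper invokes, and most of your steps are sound. There is, however, one genuine gap: the inequality
\[
\Tr\{\Pi_{\xn(m,j)}(\calW^B)^{\tensor n}\}\;\le\; 2^{-n(\chi(\{P_X,\calW^B_x\})-2\delta)}
\]
does not follow from the projector properties you cite and is false in general. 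The eigenvalue bound $\Pi_\rho(\calW^B)^{\tensor n}\Pi_\rho\le 2^{-n(S(\calW^B)-\delta)}\Pi_\rho$ controls only the part of $(\calW^B)^{\tensor n}$ supported on the typical subspace of the \emph{average} state, whereas $\Pi_{\xn(m,j)}$ is built from eigenvectors of $\calW^B_{\xn(m,j)}$ and need not lie in that subspace. Writing $\Tr\{\Pi_{\xn(j)}(\calW^B)^{\tensor n}\}=\Tr\{\Pi_{\xn(j)}\Pi_\rho(\calW^B)^{\tensor n}\Pi_\rho\}+\Tr\{\Pi_{\xn(j)}(I-\Pi_\rho)(\calW^B)^{\tensor n}(I-\Pi_\rho)\}$, the second piece is only $O(\epsilon)$ rather than exponentially small at rate $\chi$, and after summing over the $\bTheta=2^{n\Rbar}$ wrong codewords it destroys the bound. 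A concrete counterexample: for the pure-state ensemble $\calW_0=|0\>\<0|$, $\calW_1=|+\>\<+|$ with uniform $P_X$, one has $\Pi_{\xn}=\calW_{\xn}$ and $\Tr\{\Pi_{\xn}(\calW^B)^{\tensor n}\}=(3/4)^n=2^{-n\log(4/3)}$, while $\chi=h(\cos^2(\pi/8))\approx 0.60>\log(4/3)\approx 0.415$.

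The standard remedy---and what the packing lemma of \cite[Sec.~16.6]{wilde_arxivBook} and Sen's original scheme actually use---is to insert the average-state typical projector $\Pi_\rho$ into the sequential measurement (e.g., as an initial yes/no test), so that the wrong-codeword terms become $\Tr\{\Pi_{\xn(j)}\Pi_\rho(\calW^B)^{\tensor n}\Pi_\rho\}\le 2^{-n(S(\calW^B)-\delta)}\Tr\{\Pi_{\xn(j)}\}\le 2^{-n(\chi-2\delta)}$, at the price of one additional benign term $\Tr\{(I-\Pi_\rho)\calW^B_{\xn(m,k)}\}\le\epsilon$ inside Sen's square root. Note that this is a modification of the measurement, not merely of the analysis: with $\dpovm_k^{(m)}$ consisting of the binary codeword tests alone, the error in the pure-state example above genuinely tends to one for $\log(4/3)<\Rbar<\chi$, so no proof can close this gap without introducing $\Pi_\rho$ (or an equivalent restriction to the typical subspace) into the decoding operation itself. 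Your remaining steps---the typicality bound on $\EE_\PP[\Tr\{(I-\Pi_{\Xn(m,k)})\calW^B_{\Xn(m,k)}\}]$, the factorization via independence under the pruned i.i.d.\ distribution, and the final union bound over $\bTheta$ terms---go through unchanged once this correction is made.
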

\begin{proof}
    The proof follows from packing lemma using sequential decoding \cite[Sec. 16.6]{wilde_arxivBook}, while making the following identification. For each $m\in\calM$, identify $\calM$ as $[\bTheta]$, $\calX$ as $\Txqc$, $\{\sigma_{C_m}\}_m$ with $\{\calW_{m,k}^{\refe B}\}_k$, $\Pi_x$ with $\Pi_{k}^{(m)}$, $d$ with $2^{n(S(X|B)_{\tau}-\Bar{\delta})}$, and $D$ with $2^{n(S(B)_\tau-\Bar{\delta})}$, where $\tau^{XB} := \sum_x P_X(x) |x\>\<x|\tensor \calW^B_x$ and $\Bar{\delta} \searrow 0$ as $\delta\searrow 0$. 
\end{proof}
In general, the decoding POVM elements satisfy the condition $\sum_{k \in [\bTheta]} \dpovm_{k}^{(m)} \leq I$ for all $m \in [\Theta].$
Under the condition $\{\I_{\{\mbox{sP}\}} = 1\}$,
construct sub-POVM $\Gamma_\sfB^{(n)}$ as follows: $\{\dpovm_{k}^{(m)}\}_{k\in [\bTheta]}$ for each $m\in[\Theta]$. This sub-POVM is completed by adding an additional operator $\dpovm_{0}^{(m)}\deq(I\!-\!\sum_{k\in[\bTheta]}\dpovm_k^{(m)})$ to form a valid POVM $[\Gamma_\sfB^{(n}]$, for each $m\in [\Theta]$.

\noindent\textbf{Error Analysis}: We begin by defining the following code-dependent random variables \(E_1\), \(E_2\), and \(E_3\), which will be useful in the error analysis, given as:
\begin{align*}
    E_1 \deq \sum_{m\in [\Theta]}\sum_{k\in [\bTheta]} {(\Theta\bTheta)}^{-1}\ &\Tr\{\rhotilde_{m,k}^{\refe B}\}\eqand E_2 \deq \sum_{m\in [\Theta]}\sum_{k\in [\bTheta]}{(\Theta\bTheta)}^{-1}\ \|\rhotilde_{m,k}^{\refe B} - \calW^{\refe B}_{m,k}\|_1,
\end{align*}
where $\rhotilde_{m,k}^{\refe B}$ and $\calW_{m,k}^{\refe B}$ are the shorthand notation for $\rhotilde_{\xn(m,k)}^{\refe B}$ and $\calW_{\xn(m,k)}^{\refe B},$ respectively. We provide the following proposition that bound these terms under the condition $\I_{\curly{\mbox{\normalfont sP}}} = 1$.
\begin{prop}\label{prop:code_dependent_RV}
For all $\epsilon\in(0,1)$, for all sufficiently small $\eta, \delta>0$, and for all sufficiently large $n$, we have $\EE_\PP[E_1]\geq (1-\epsilon) \eqand  \EE_\PP[E_2]\leq \epsilon$.
\end{prop}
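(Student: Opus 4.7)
The plan is to deduce both claims from the single estimate (\ref{eq:closeness_ref_SI}) combined with linearity of expectation and the elementary trace inequality $|\Tr\{A-B\}|\le \|A-B\|_1$. Since the codewords $\{\Xn(m,k): m\in[\Theta],\, k\in[\bTheta]\}$ are drawn i.i.d.\ from the pruned distribution $\PP$, every term in the sample averages defining $E_1$ and $E_2$ has the same marginal expectation, so the proof reduces to a single-codeword calculation and no concentration/union-bound argument is needed at this stage.

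First I would bound $\EE_\PP[E_2]$. By linearity of expectation and the i.i.d.\ structure of the codebook,
$$\EE_\PP[E_2] \;=\; \frac{1}{\Theta\bTheta}\sum_{m,k}\EE_\PP\bigl[\|\rhotilde_{\Xn(m,k)}^{\refe B} - \calW_{\Xn(m,k)}^{\refe B}\|_1\bigr] \;=\; \EE_\PP\bigl[\|\rhotilde_{\Xn}^{\refe B} - \calW_{\Xn}^{\refe B}\|_1\bigr] \;\le\; \epsilon,$$
where the last inequality is precisely (\ref{eq:closeness_ref_SI}), valid for all sufficiently small $\delta>0$ and all sufficiently large $n$.

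Next I would bound $\EE_\PP[E_1]$. Since each $\calW_{\xn}^{\refe B}$ is a tensor product of density operators and hence has unit trace, the standard inequality $|\Tr\{A-B\}|\le \|A-B\|_1$ applied to $A=\rhotilde_{\Xn}^{\refe B}$ and $B=\calW_{\Xn}^{\refe B}$ yields
$$1 - \Tr\{\rhotilde_{\Xn}^{\refe B}\} \;\le\; \bigl|\Tr\{\calW_{\Xn}^{\refe B}\} - \Tr\{\rhotilde_{\Xn}^{\refe B}\}\bigr| \;\le\; \|\rhotilde_{\Xn}^{\refe B} - \calW_{\Xn}^{\refe B}\|_1.$$
Taking expectation under $\PP$, invoking (\ref{eq:closeness_ref_SI}) once more, and using the i.i.d.\ symmetry to rewrite the single-codeword expectation as $\EE_\PP[E_1]$, I obtain $\EE_\PP[E_1] \ge 1-\epsilon$, as required.

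I do not expect a genuine obstacle: the proposition is essentially a bookkeeping consequence of the Average Gentle Measurement bound that was already established in (\ref{eq:closeness_ref_SI}). The only points worth verifying are that (i) the cut-off projector $\hat\Pi$ and the truncation to $\Txqc$ are both already absorbed into the definition of $\rhotilde_{\xn}^{\refe B}$ and the pruned law $\PP$, so the estimate applies without modification, and (ii) the parameter $\eta$ enters only through the normalization $\gamma$ in the definition of $\epovm_{\xn}^{\refe B}$ and therefore plays no role in the analysis of $E_1,E_2$, which depend on the codewords only through $\rhotilde_{\Xn(m,k)}^{\refe B}$ and $\calW_{\Xn(m,k)}^{\refe B}$.
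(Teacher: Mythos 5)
Your proof is correct. For $\EE_\PP[E_2]$ you proceed exactly as the paper does: by linearity and the i.i.d.\ codebook structure, reduce to a single-codeword expectation and invoke (\ref{eq:closeness_ref_SI}). For $\EE_\PP[E_1]$, however, your route differs slightly from the paper's. The paper appeals directly to a pointwise lower bound $\Tr\{\rhotilde_{\xn}^{\refe B}\}\geq 1-2\varepsilon-2\sqrt{\varepsilon}$ for $\xn\in\Txqc$ (citing \cite[Eqn.~23]{wilde_e}, which follows from the gentle-measurement/typical-projector chain used to build $\rhotilde$), and then averages over the pruned distribution. You instead derive the needed bound on $\EE_\PP[\Tr\{\rhotilde_{\Xn}^{\refe B}\}]$ from the already-established trace-distance estimate (\ref{eq:closeness_ref_SI}) via $1-\Tr\{\rhotilde_{\Xn}^{\refe B}\}\leq \bigl|\Tr\{\calW_{\Xn}^{\refe B}-\rhotilde_{\Xn}^{\refe B}\}\bigr|\leq \|\rhotilde_{\Xn}^{\refe B}-\calW_{\Xn}^{\refe B}\|_1$ (noting $\Tr\{\calW_{\Xn}^{\refe B}\}=1$), and then take expectations. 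This is a valid and arguably cleaner argument: it uses a single ingredient, (\ref{eq:closeness_ref_SI}), to handle both $E_1$ and $E_2$, whereas the paper invokes two separate facts from \cite{wilde_e}. The trade-off is that your derivation is indirect and produces a lower bound of $1-\epsilon$ without the explicit $\varepsilon$-dependent constants that the pointwise bound makes visible; for the purposes of this proposition that loss of precision is immaterial. One small point to tighten: you should remark (as you implicitly do) that $\rhotilde_{\Xn}^{\refe B}$ is obtained from the density operator $\calW_{\Xn}^{\refe B}$ by sandwiching with projectors, hence $\Tr\{\rhotilde_{\Xn}^{\refe B}\}\le 1$, so $1-\Tr\{\rhotilde_{\Xn}^{\refe B}\}$ is nonnegative and the first inequality in your chain is genuinely informative rather than vacuous.
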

\begin{proof}
    The proof is provided in Appendix \ref{app:prop:code_dependent_RV}.
\end{proof}
Now, Observe that after applying the binary projectors Bob generates $(m,k')$, and consequently the sequence $\xn(m,k')$ using the decoding map $f$. The (unnormalized) post-measured state after the sequential decoding can be expressed as
$$({I^{\refe}}^{\tensor n}\tensor \Pi_{k}^{(m)}\Pibar_{k-1}^{(m)}\cdots \Pibar_{1}^{(m)})\omega^{\refe B}_{m,k} ({I^{\refe}}^{\tensor n}\tensor \Pibar_{1}^{(m)} \cdots \Pibar_{k-1}^{(m)}\Pi_{k}^{(m)}),$$ where 
\vspace{-10pt}\begin{equation}\label{eqn:omegamk}
    \omega_{m,k}^{\refe B} \deq \Tr_{A^n}\{({I^{\refe B}}^{\tensor n}\!\tensor \epovm_{m,k}^{A})({\phi^{\refe AB}}^{\tensor n})\} 
\end{equation}
is the unnormalized post-measured state from Alice's encoding. Here, $\omega_{m,k}^{\refe B} \eqand  \epovm_{m,k}^{A}$ are used as a shorthand notation for $\omega_{\xn(m,k)}^{\refe B} \eqand \epovm_{\xn(m,k)}^{A}$, respectively, and  $\phi^{\refe A B}:= |\phi\>\<\phi|^{\refe AB}$ and $|\phi\>^{\refe AB} $ is the canonical purification of $\rho^{AB}$. Furthermore, note that the projectors $\Pi_{k}^{(m)}\cdots \Pi_{1}^{(m)}$ and the POVM element ${\Lambda_{k}^{(m)}}$ are related by a polar decomposition, given as
\begin{align*}
    \sqrt{\Lambda_{k}^{(m)}} = U_{m,k} \ \Pi_{k}^{(m)}\Pibar_{k-1}^{(m)}\cdots \Pibar_{1}^{(m)},
\end{align*}
for some unitary $U_{m,k}$. Therefore, Bob first applies the binary projectors and constructs $\xn(m,k')$. Following this, Bob applies the unitary $U_{m,k}$ and the (unnormalized) state becomes as follows:
\begin{equation}\label{eqn:lambdamk}
    \lambda_{(m,k),k'}^{\refe B}\deq\Big({I^{\refe}}^{\tensor n}\tensor \sqrt{\Lambda_{k'}^{(m)}}\Big)\omega^{\refe B}_{m,k}\Big({I^{\refe}}^{\tensor n}\tensor \sqrt{\Lambda_{k'}^{(m)}}\Big).
\end{equation}
If $k=k'$ (indicating correct decoding), then $\lambda_{(m,k),k'}^{\refe B} = \lambda_{m,k}^{\refe B}$, i.e., Bob successfully recovers the sequence $\xn(m,k)$.
Now, following Definition \ref{def:qc_qsi_achievability}, our objective is to show that the following term 
\begin{align*}
    \EE_\PP[&\error] \\&= \EE_\PP\Bigg[ \Big\|\sum_{m \in [\Theta]\cup \{0\}}\sum_{\substack{k\in [\bTheta]\cup \{0\}}}\sum_{k'\in [\bTheta]\cup \{0\}} \!\!|\xn(m,k')\>\<\xn(m,k')|\tensor \big(\lambda_{(m,k),k'}^{\refe B} -  \Tr\{\lambda_{(m,k),k'}^{\refe B}\}\calW_{m,k'}^{\refe B}\big)\Big\|_1\Bigg]
\end{align*}
can be made arbitrarily small for sufficiently large $n$ for the code $\codebook$, where $\calW_{m,k'}^{\refe B}\deq \calW_{\xn(m,k')}^{\refe B}.$ 

First, we split the error $ \error $ into two terms using the indicator function $\I_{\curly{\mbox{\normalfont sP}}}$ as 
\begin{align}
\error&=\I_{\curly{\mbox{\normalfont sP}}} 
\error + \round{1- \I_{\curly{\mbox{\normalfont sP}}}}\error\\
&\leq \I_{\curly{\mbox{\normalfont sP}}} 
\error  + 2\round{1- \I_{\curly{\mbox{\normalfont sP}}}} \label{eqn:qcerrorsubpovm},
\end{align}
where \eqref{eqn:qcerrorsubpovm} follows from upper bounding the trace distance between two density operators by its maximum value of two. 
Under the condition $\I_{\curly{\mbox{\normalfont sP}}} = 1$, 
 \begin{align*}
     \error &\overset{}{\leq} \underbrace{\sum_{m \in [\Theta]}\sum_{\substack{k\in [\bTheta]}}\|\lambda_{m,k}^{\refe B} -  \Tr\{\lambda_{m,k}^{\refe B}\}\calW_{m,k}^{\refe B}\|_1}_{\zeta_{\text{CP}}} 
     + \ 2 \!\!\!\!\underbrace{\sum_{\substack{k'\in [\bTheta]\cup \{0\}}}\!\!\!\! \Tr\{\lambda_{(0,0),k'}^{\refe B}\}}_{\zeta_{\text{NC}}}
     \\
     &\hspace{50pt}+2\!\!\underbrace{\sum_{m \in [\Theta]}\sum_{\substack{k \neq k' \in [\bTheta]}}  \Tr\{\lambda_{(m,k),k'}^{\refe B}\}}_{\zeta_{\text{NP}_1}}  +   \ 2\underbrace{\sum_{m \in [\Theta]}\sum_{\substack{k\in [\bTheta]}} \Tr\{\lambda_{(m,k),0}^{\refe B}\}}_{\zeta_{\text{NP}_2}},
 \end{align*}
 where the inequality follows from the fact that $\|\calW_{\xn}\|_1=1$ for all $\xn\in \sfX^n$ and  using triangle inequality.

\noindent \textbf{Step 1. Bounding the error induced by not covering, i.e., encoding error.} 

\noindent The error term $\zeta_{\text{NC}}$ captures the error induced by not covering the $n$-tensored posterior reference channel. We provide the following proposition that bounds this term.  
\begin{prop}\label{prop:qc_NC}
    For all $\epsilon\in(0,1)$, for all sufficiently small $\eta, \delta>0$, and for all sufficiently large $n$, we have $\EE_\PP[\I_{\curly{\mbox{\normalfont sP}}}\zeta_{\text{NC}}]\leq \epsilon$.
\end{prop}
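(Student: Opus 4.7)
The plan is to reduce the not-covering error $\zeta_{\text{NC}}$ to the scalar probability that the encoding sub-POVM fails to resolve the source $\rho^{A^{\tensor n}}$, and then control this probability via the covering-type bound on $E_1$ from Proposition \ref{prop:code_dependent_RV} together with the validity bound from Proposition \ref{prop:enc_subpovm}. Since $\{\dpovm_{k'}^{(0)}\}_{k'\in[\bTheta]\cup\{0\}}$ is a valid POVM on $\calH_{B^{\tensor n}}$, $\sum_{k'}\dpovm_{k'}^{(0)} = I$, and so cyclicity of the trace applied to \eqref{eqn:lambdamk}, together with the definition of $\omega_{0,0}^{\refe B}$ in \eqref{eqn:omegamk}, gives
\[
\zeta_{\text{NC}} \;=\; \sum_{k'}\Tr\{\lambda_{(0,0),k'}^{\refe B}\} \;=\; \Tr\{\omega_{0,0}^{\refe B}\} \;=\; \Tr\{\epovm_0^A\,\rho^{A^{\tensor n}}\}.
\]
On the event $\{\I_{\curly{\mbox{sP}}}=1\}$ the encoding elements form a valid sub-POVM, so $\epovm_0^A = I - \sum_{m,k}\epovm_{\xn(m,k)}^A$ and hence $\I_{\curly{\mbox{sP}}}\zeta_{\text{NC}} = \I_{\curly{\mbox{sP}}}\bigl(1 - \sum_{m,k}\Tr\{\epovm_{\xn(m,k)}^A\,\rho^{A^{\tensor n}}\}\bigr)$.

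Next I would invoke the equivalence-of-purifications identity $\Tr\{\epovm_\xn^A\rho^{A^{\tensor n}}\} = \Tr\{\epovm_\xn^{\refe B}\,\calW^{{\refe B}^{\tensor n}}\}$ recorded in the encoding construction, combined with the definition $\epovm_\xn^{\refe B} = \gamma(\calW^{{\refe B}^{\tensor n}})^{-1/2}\rhotilde_\xn^{\refe B}(\calW^{{\refe B}^{\tensor n}})^{-1/2}$ (attributing one operator to each pair $(m,k)$, so that the repetition factor $\gamma_\xn$ arises automatically under the sum). On the range of $\hat{\Pi}$, where $\rhotilde_\xn^{\refe B}$ is supported and $\calW^{{\refe B}^{\tensor n}}$ is invertible, cyclicity collapses each summand to $\gamma\,\Tr\{\rhotilde_{m,k}^{\refe B}\}$. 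Summing over the $\Theta\bTheta$ codewords and using $\gamma = (1-\varepsilon)/[(1+\eta)\Theta\bTheta]$ yields
\[
\I_{\curly{\mbox{sP}}}\zeta_{\text{NC}} \;=\; \I_{\curly{\mbox{sP}}}\Big(1 - \tfrac{1-\varepsilon}{1+\eta}\,E_1\Big).
\]

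To finish, I would note the almost-sure bound $E_1 \leq 1$ (since $\Tr\{\rhotilde_\xn^{\refe B}\} = \Tr\{\Pi_\xn\Pi_\rho\hat{\Pi}\Pi_\rho\Pi_\xn \calW_\xn^{\refe B}\} \leq \Tr\{\calW_\xn^{\refe B}\} = 1$) and combine Propositions \ref{prop:enc_subpovm} ($\EE_\PP[\I_{\curly{\mbox{sP}}}] \geq 1-\epsilon$) and \ref{prop:code_dependent_RV} ($\EE_\PP[E_1] \geq 1-\epsilon$) to write
\[
\EE_\PP[\I_{\curly{\mbox{sP}}}E_1] \;=\; \EE_\PP[E_1] - \EE_\PP[(1-\I_{\curly{\mbox{sP}}})E_1] \;\geq\; (1-\epsilon) - \epsilon \;=\; 1 - 2\epsilon,
\]
so that $\EE_\PP[\I_{\curly{\mbox{sP}}}\zeta_{\text{NC}}] \leq 1 - \tfrac{1-\varepsilon}{1+\eta}(1-2\epsilon)$. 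Taking $\eta,\delta$ sufficiently small (so that $\varepsilon(\delta,n)\to 0$) and $n$ sufficiently large makes the right-hand side less than any prescribed $\epsilon'$.

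The main obstacle lies in the cancellation in the second step: the identity $\Tr\{\epovm_\xn^{\refe B}\calW^{{\refe B}^{\tensor n}}\} = \gamma\,\Tr\{\rhotilde_\xn^{\refe B}\}$ formally invokes $(\calW^{{\refe B}^{\tensor n}})^{-1/2}$, which is defined only on the support of $\calW^{{\refe B}^{\tensor n}}$. Verifying that $\hat{\Pi}$ is exactly the projector that localizes $\rhotilde_\xn^{\refe B}$ into the subspace where this inversion is legal, so that the cancellation is exact and no mass is lost, is the delicate bookkeeping; all remaining steps consist of routine algebraic manipulations combined with the bounds already at our disposal.
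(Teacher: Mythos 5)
Your proposal is correct and follows essentially the same route as the paper: reduce $\zeta_{\text{NC}}$ to $\Tr\{\omega^{\refe B}_{0,0}\}$ via completeness of the decoding POVM, cancel $(\calW^{{\refe B}^{\tensor n}})^{-1/2}$ against $\calW^{{\refe B}^{\tensor n}}$ to identify the sum with $\frac{1-\varepsilon}{1+\eta}E_1$, and invoke Proposition \ref{prop:code_dependent_RV}. The only (cosmetic) difference is that the paper simply drops the indicator via $\I_{\curly{\mbox{\normalfont sP}}}\leq 1$ and bounds $\EE_\PP[\Tr\{\omega^{\refe B}_{0,0}\}]$ directly, whereas you retain it and additionally use Proposition \ref{prop:enc_subpovm} together with $E_1\leq 1$; the support-compatibility issue you flag is handled exactly as you suggest, since $\Pi_\rho$ (and hence $\rhotilde_{\xn}^{\refe B}$) lives inside the support of $\calW^{{\refe B}^{\tensor n}}$.
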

\begin{proof}
The proof is provided in Appendix \ref{app:prop:proof:qc_NC}.
\end{proof}
\noindent \textbf{Step 2. Bounding the error induced by not packing, i.e., decoding error.} 

\noindent The error term $\zeta_{\text{NP}} \deq \zeta_{\text{NP}_1} + \zeta_{\text{NP}_2}$ captures the error induced by not packing, i.e., incorrect decoding. Therefore, $\zeta_{\zeta_{\text{NP}}}$ can rewritten as the $(1-$ probability of correct decoding), given as
\begin{align}
\zeta_{\text{NP}} &= \sum_{m \in [\Theta]}\sum_{\substack{k \neq k' \in [\bTheta]}}  \Tr\{\lambda_{(m,k),k'}^{\refe B}\} +  \sum_{m \in [\Theta]}\sum_{\substack{k\in [\bTheta]}} \Tr\{\lambda_{(m,k),0}^{\refe B}\}\nonumber\\
&= \sum_{m \in [\Theta]}\sum_{\substack{k\in [\bTheta]}} \Big(\Tr\big\{\big({I^{\refe}}^{\tensor n} \!\tensor\!\!\!\!\sum_{\substack{k' \neq k \in [\bTheta]}} \!\!\!\!\dpovm_{k'}^{(m)} \big)\omega^{RB}_{m,k}\big\}   +   \Tr\big\{\big({I^{\refe}}^{\tensor n} \!\!\tensor \dpovm_{0}^{(m)} \big)\omega^{RB}_{m,k}\big\}\}\Big)\nonumber\\
& = \sum_{m \in [\Theta]}\sum_{\substack{k\in [\bTheta]}}
\Tr\big\{\big(I-\dpovm_{k}^{(m)} \big)\omega^{B}_{m,k}\big\}\}\label{eqn:zetaNP},
\end{align}
where the second equality follows from \eqref{eqn:lambdamk} and $\omega^B_{m,k} = \Tr_{\refe^n}\{\omega^{\refe B}_{m,k}\}$. We provide the following proposition that bounds $\zeta_{\text{NP}}$.
\begin{prop}\label{prop:qc_NP}
    For all $\epsilon\in(0,1)$, for all sufficiently small $\eta, \delta>0$, and for all sufficiently large $n$, we have $\EE_\PP[\I_{\curly{\mbox{\normalfont sP}}}\zeta_{\text{NP}}]\leq 2\epsilon$.
\end{prop}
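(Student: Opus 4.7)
The plan is to reduce the packing error $\zeta_{\text{NP}}$ in the form \eqref{eqn:zetaNP} to the quantity already controlled by Proposition \ref{prop:qc_packing}, by replacing the code-dependent unnormalized post-encoding state $\omega_{m,k}^B$ with a scaled version of the target tensor-product state $\calW_{m,k}^B$, and then appealing to the sequential-decoding packing bound.

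First, using the construction of $\epovm_{\xn}^A$ from $\rhotilde_{\xn}^{RB}$ together with the canonical purification $|\phi\>^{RAB}$ of $\rho^{AB}$ and the equivalence-of-purifications theorem (the same identity that underlies the basis transformation $\epovm_{\xn}^{RB}\mapsto\epovm_{\xn}^{A}$ in the codebook construction), I would derive the identity $\omega_{m,k}^{RB} = \gamma_{\xn(m,k)}\rhotilde_{m,k}^{RB}$. Tracing out $R^n$ gives $\omega_{m,k}^{B} = \gamma_{\xn(m,k)}\rhotilde_{m,k}^{B}$. Since codewords are drawn i.i.d.\ from the pruned typical distribution \eqref{def:qc_distribution}, a standard collision / union-bound argument lets me replace $\gamma_{\xn(m,k)}$ by $\gamma$ up to an event of asymptotically vanishing probability.

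Next, using the elementary inequality $|\Tr\{(I-\Xi)(\sigma-\tau)\}| \le \|\sigma-\tau\|_1$ valid for any POVM element $\Xi$, monotonicity of the trace norm under partial trace, and the aggregated trace-norm bound $\EE_\PP[E_2]\le\epsilon$ from Proposition \ref{prop:code_dependent_RV}, I would substitute $\rhotilde_{m,k}^B$ by $\calW_{m,k}^B$ at a cost of $\gamma\sum_{m,k}\|\rhotilde_{m,k}^B - \calW_{m,k}^B\|_1 \le \frac{1-\varepsilon}{1+\eta}\,E_2$, whose expectation is at most $\epsilon$. This yields
\begin{align*}
\EE_\PP[\I_{\curly{\mbox{\normalfont sP}}}\zeta_{\text{NP}}] \le \gamma\sum_{m\in[\Theta]}\sum_{k\in[\bTheta]} \EE_\PP\!\big[\Tr\{(I-\dpovm_k^{(m)})\calW_{m,k}^B\}\big] + \epsilon.
\end{align*}
The leading term equals $\gamma\bTheta\sum_m \EE_\PP\!\big[\tfrac{1}{\bTheta}\sum_k \Tr\{(I-\dpovm_k^{(m)})\calW_{m,k}^B\}\big]$, and Proposition \ref{prop:qc_packing} bounds each inner expectation by $\epsilon$ whenever $\tfrac{1}{n}\log\bTheta<\chi(\{P_X(x),\calW_x^B\})$. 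Using $\gamma\Theta\bTheta = \tfrac{1-\varepsilon}{1+\eta}\le 1$, the sum over $m$ contributes at most $\epsilon$, which together with the $\epsilon$ substitution error gives the claimed $2\epsilon$.

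The main obstacle is the uniform control of the trace-norm discrepancy $\|\rhotilde_{m,k}^B - \calW_{m,k}^B\|_1$ aggregated across all $\Theta\bTheta$ codeword indices: a per-codeword worst-case bound would be too weak, so the aggregate-average estimate $E_2$ from Proposition \ref{prop:code_dependent_RV} (which relies on the Average Gentle Measurement Lemma applied to the pruned typical ensemble) is essential. A secondary subtlety is the code-dependent indicator $\I_{\curly{\mbox{\normalfont sP}}}$, which I would handle by the trivial bound $\I_{\curly{\mbox{\normalfont sP}}}\le 1$ inside the main estimate, together with Proposition \ref{prop:enc_subpovm} to ensure the complementary event contributes only $O(\epsilon)$.
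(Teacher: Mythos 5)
Your proposal follows essentially the same route as the paper's proof: identify $\omega_{m,k}^B$ with $\gamma\,\rhotilde_{m,k}^B$, swap $\rhotilde_{m,k}^B$ for $\calW_{m,k}^B$ at a cost controlled by $\EE_\PP[E_2]\le\epsilon$ from Proposition~\ref{prop:code_dependent_RV} (via the trace inequality $\Tr\{\Lambda\rho\}\le\Tr\{\Lambda\sigma\}+\|\rho-\sigma\|_1$ and monotonicity of the trace norm under partial trace), and bound the remaining term via the packing estimate of Proposition~\ref{prop:qc_packing} together with $\gamma\Theta\bTheta=\tfrac{1-\varepsilon}{1+\eta}\le 1$. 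Two of your auxiliary steps are not needed: the paper works directly with $\omega_{m,k}^B=\gamma\,\rhotilde_{m,k}^B$ per codeword index $(m,k)$ (the multiplicity factor $\gamma_{\xn}$ belongs to the aggregated operator $\epovm_{\xn}$, not to the per-codeword contribution), so no collision argument is invoked; and Proposition~\ref{prop:enc_subpovm} plays no role here --- the indicator is simply dropped via $\I_{\curly{\mbox{\normalfont sP}}}\le 1$, with the complementary event handled by the $2(1-\I_{\curly{\mbox{\normalfont sP}}})$ term in the main proof of Theorem~\ref{thm:QC-QSI} rather than inside this proposition.
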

\begin{proof}
The proof is provided in Appendix \ref{app:prop:proof:qc_NP}.
\end{proof}
\noindent \textbf{Step 2. Bounding the error induced by covering and packing.} 

\noindent The error term $\zeta_{\text{CP}}$ captures the error induced by covering and packing. Consider the following inequalities:
\begin{align*}
    \zeta_{\text{CP}} 
    &\overset{a}{\leq} \sum_{m \in [\Theta]}\sum_{\substack{k\in [\bTheta]}} \Tr\{\omega^{\refe B}_{m,k}\}\|(\Tr\{\omega^{\refe B}_{m,k}\})^{-1}\lambda_{m,k}^{\refe B} -  \omegabar^{\refe B}_{m,k}\|_1 \\
    &\hspace{75pt}+ \sum_{m \in [\Theta]}\sum_{\substack{k\in [\bTheta]}} \Tr\{\omega^{\refe B}_{m,k}\} \|\omegabar^{\refe B}_{m,k} - (\Tr\{\omega^{\refe B}_{m,k}\})^{-1}\Tr\{\lambda_{m,k}^{\refe B}\}\omegabar^{\refe B}_{m,k}\|_1
    \\
    &\hspace{150pt}+ \sum_{m \in [\Theta]}\sum_{\substack{k\in [\bTheta]}}\Tr\{\lambda_{m,k}^{\refe B}\}\|\omegabar_{m,k}^{\refe B} -  \calW_{m,k}^{\refe B}\|_1\\
    &\overset{b}{=} \sum_{m \in [\Theta]}\sum_{\substack{k\in [\bTheta]}}\Tr\{\omega^{\refe B}_{m,k}\} \Big\|\Big({I^{\refe}}^{\tensor n} \!\tensor \sqrt{\Lambda_{k}^{(m)}}\Big)\omegabar^{\refe B}_{m,k}\Big({I^{\refe}}^{\tensor n} \!\tensor \sqrt{\Lambda_{k}^{(m)}}\Big) -  \omegabar^{\refe B}_{m,k}\Big\|_1 
    \\
    &\hspace{75pt} + \sum_{m \in [\Theta]}\sum_{\substack{k\in [\bTheta]}}\big(\Tr\{\omega^{\refe B}_{m,k}\}-\Tr\{({I^{\refe}}^{\tensor n} \!\tensor \Lambda_{k}^{(m)})\omega_{m,k}^{\refe B}\}\big) \|\omegabar^{\refe B}_{m,k}\|_1 + \zeta_{\text{C}}\\
    &\overset{c}{\leq}
    \sum_{m \in [\Theta]}\sum_{\substack{k\in [\bTheta]}} 2 \ \Tr\{\omega^{\refe B}_{m,k}\} \sqrt{\Tr\Big\{\Big( I-\big({I^{\refe}}^{\tensor n} \!\tensor \Lambda_{k}^{(m)}\big)\Big)\omegabar^{\refe B}_{m,k}\Big\}} 
    \\&\hspace{75pt}+\sum_{m \in [\Theta]}\sum_{\substack{k\in [\bTheta]}}\Tr\Big\{\Big( I-\big({I^{\refe}}^{\tensor n} \!\tensor \Lambda_{k}^{(m)}\big)\Big) \omega^{\refe B}_{m,k}\Big\}
+ \zeta_{\text{C}}\\
    &\overset{d}{\leq}
    2 \sqrt{\sum_{m \in [\Theta]}\sum_{\substack{k\in [\bTheta]}}  \Tr\{\omega^{\refe B}_{m,k}\} \Tr\Big\{\Big( I-\big({I^{\refe}}^{\tensor n} \!\tensor \Lambda_{k}^{(m)}\big)\Big)\omegabar^{\refe B}_{m,k}\Big\}} + \zeta_{\text{NP}} + \zeta_{\text{C}}\\
    &\overset{e}{\leq} 3\sqrt{\zeta_{\text{NP}}}+ \zeta_{\text{C}},
\end{align*}
where $(a)$ follows from adding and subtracting appropriate terms, defining $\omegabar^{\refe B}_{m,k}\deq (\Tr\{\omega^{\refe B}_{m,k}\})^{-1} \omega^{\refe B}_{m,k}$, and applying triangle inequality, $(b)$ follows from \eqref{eqn:lambdamk} and the definition $$\zeta_{\text{C}}\deq \sum_{m \in [\Theta]}\sum_{\substack{k\in [\bTheta]}}\Tr\{\lambda_{m,k}^{\refe B}\}\|\omegabar_{m,k}^{\refe B} -  \calW_{m,k}^{\refe B}\|_1,$$ $(c)$ follows from Gentle Operator Lemma \cite[Lemma 9.4.2]{wilde_arxivBook}, $(d)$ follows from \eqref{eqn:zetaNP} and applying Jensen's inequality for concave functions, and $(e)$ is based on the fact that $x \leq \sqrt{x}$ for all $x\in [0,1]$.
We provide the following proposition that bounds $\zeta_{\text{C}}$.
\begin{prop}\label{prop:qc_cov}
    For all $\epsilon\in(0,1)$, for all sufficiently small $\eta, \delta>0$, and for all sufficiently large $n$, we have $\EE_\PP[\I_{\curly{\mbox{\normalfont sP}}}\zeta_{\text{C}}]\leq 2\epsilon$.
\end{prop}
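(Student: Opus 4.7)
The plan is to reduce $\zeta_{\text{C}}$ to the average closeness $\EE_\PP[E_2] \leq \epsilon$ between $\rhotilde_{m,k}^{\refe B}$ and $\calW_{m,k}^{\refe B}$ that is already furnished by Proposition~\ref{prop:code_dependent_RV}. First, since $\Lambda_{k}^{(m)} \leq I$, one has $\Tr\{\lambda_{m,k}^{\refe B}\} \leq \Tr\{\omega_{m,k}^{\refe B}\}$; combining with $\omega_{m,k}^{\refe B} = \Tr\{\omega_{m,k}^{\refe B}\}\,\omegabar_{m,k}^{\refe B}$ yields
\begin{align*}
\Tr\{\lambda_{m,k}^{\refe B}\}\|\omegabar_{m,k}^{\refe B} - \calW_{m,k}^{\refe B}\|_1 \leq \|\omega_{m,k}^{\refe B} - \Tr\{\omega_{m,k}^{\refe B}\}\calW_{m,k}^{\refe B}\|_1.
\end{align*}

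Second, I invoke the equivalence of purification. Since $P_X \in \calA(\rho^{AB},\calW_{\sfX \rightarrow \refe B})$ forces $\rho^{\refe B} = \calW^{\refe B}$, the canonical purification $|\phi\>^{\refe AB}$ admits a Schmidt decomposition in the eigenbases used to define $\epovm_{\xn}^{A}$ from $\epovm_{\xn}^{\refe B}$. A direct calculation, together with the fact that $\calW^{{\refe B}^{\tensor n}}$ is diagonal in this basis and that $\Tr_{A^n}\{(\epovm_{\xn}^A \tensor I)|\phi\>\<\phi|^{\refe AB^{\tensor n}}\} = \sqrt{\calW^{{\refe B}^{\tensor n}}}\,\epovm_{\xn}^{\refe B}\,\sqrt{\calW^{{\refe B}^{\tensor n}}}$ (up to a Schmidt-basis transpose which preserves trace and trace norm and can be absorbed into the choice of basis), gives $\omega_{m,k}^{\refe B} = \gamma\,\rhotilde_{m,k}^{\refe B}$ with $\gamma = (1-\varepsilon)/(\Theta\bTheta(1+\eta))$, whenever the codeword sequences are distinct (collision events can be handled separately by a union bound and contribute vanishingly). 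Consequently,
\begin{align*}
\|\omega_{m,k}^{\refe B} - \Tr\{\omega_{m,k}^{\refe B}\}\calW_{m,k}^{\refe B}\|_1 = \gamma\,\|\rhotilde_{m,k}^{\refe B} - \Tr\{\rhotilde_{m,k}^{\refe B}\}\calW_{m,k}^{\refe B}\|_1.
\end{align*}

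Third, the triangle inequality together with $|1-\Tr\{\rhotilde_{m,k}^{\refe B}\}| = |\Tr\{\calW_{m,k}^{\refe B}\}-\Tr\{\rhotilde_{m,k}^{\refe B}\}| \leq \|\rhotilde_{m,k}^{\refe B}-\calW_{m,k}^{\refe B}\|_1$ delivers the bound $\|\rhotilde_{m,k}^{\refe B} - \Tr\{\rhotilde_{m,k}^{\refe B}\}\calW_{m,k}^{\refe B}\|_1 \leq 2\|\rhotilde_{m,k}^{\refe B} - \calW_{m,k}^{\refe B}\|_1$. Summing over $(m,k)$, multiplying by $\I_{\curly{\mbox{\normalfont sP}}}$, and taking expectation,
\begin{align*}
\EE_\PP[\I_{\curly{\mbox{\normalfont sP}}}\,\zeta_{\text{C}}] \leq 2\gamma\,\Theta\bTheta\,\EE_\PP[E_2] \leq \frac{2(1-\varepsilon)}{1+\eta}\,\epsilon \leq 2\epsilon,
\end{align*}
for all sufficiently large $n$ and sufficiently small $\eta,\delta>0$, by Proposition~\ref{prop:code_dependent_RV}.

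The main obstacle is the equivalence-of-purification step: one must keep careful track of the basis conventions and the Schmidt-basis transpose that appears when pushing $\epovm^A$ through $\phi^{\refe AB}$ onto the $\refe B$-side, and verify that the multiplicity factor hidden inside $\gamma_{\xn}$ cancels correctly against the $(\calW^{{\refe B}^{\tensor n}})^{\pm 1/2}$ sandwich so that the resulting operator is \emph{exactly} $\gamma$ times $\rhotilde_{m,k}^{\refe B}$. Once that identification is secured, the remainder is a routine triangle-inequality chain together with the already-established bound on $\EE_\PP[E_2]$.
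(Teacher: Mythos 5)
Your proof is correct and follows essentially the same route as the paper's: bound $\Tr\{\lambda_{m,k}^{\refe B}\}$ by $\Tr\{\omega_{m,k}^{\refe B}\}$, identify $\omega_{m,k}^{\refe B}$ with $\gamma\,\rhotilde_{m,k}^{\refe B}$ via the purification/transpose argument, and reduce everything to Proposition \ref{prop:code_dependent_RV}. The only (immaterial) difference is that you absorb the subnormalization discrepancy $|1-\Tr\{\rhotilde_{m,k}^{\refe B}\}|$ into $E_2$ via $|\Tr\{\rhotilde_{m,k}^{\refe B}\}-\Tr\{\calW_{m,k}^{\refe B}\}|\leq\|\rhotilde_{m,k}^{\refe B}-\calW_{m,k}^{\refe B}\|_1$, whereas the paper handles it separately through $E_1$; both give the same $2\epsilon$ bound.
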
 
\begin{proof}
The proof is provided in Appendix \ref{app:prop:proof:qc_cov}.
\end{proof}
Using Propositions \ref{prop:qc_NP} and \ref{prop:qc_cov}, we now bound the error term $\zeta_{\text{CP}}$. For all $\epsilon\in (0,1)$
\begin{align}
\EE_\PP[&\I_{\curly{\mbox{\normalfont sP}}}\zeta_{\text{CP}}] = \EE_\PP[3\sqrt{\zeta_{\text{NP}}}+\zeta_{\text{C}}] \leq 3\sqrt{\EE_\PP[\I_{\curly{\mbox{\normalfont sP}}} \zeta_{\text{NP}}}]+\EE_{\PP}[\I_{\curly{\mbox{\normalfont sP}}}\zeta_{\text{C}}] \leq 3\sqrt{2\epsilon}+2\epsilon\label{eqn:zetaCP}
\end{align}
where the first inequality follows from Jensen's inequality for concave functions. Finally, using Propositions \ref{prop:qc_NC} and  \ref{prop:qc_NP}, and \eqref{eqn:zetaCP}, we bound $\error$, for all $\epsilon\in(0,1),$
\begin{align*}
    \EE_\PP[\error]&\leq \EE_\PP[\I_{\curly{\mbox{\normalfont sP}}} 
\error  + 2(1- \I_{\curly{\mbox{\normalfont sP}}})] \\&\leq \EE_\PP[\I_{\curly{\mbox{\normalfont sP}}} 
\error] +2\epsilon \leq 3\sqrt{2\epsilon}+10\epsilon.
\end{align*}
Since $\EE_\PP[\error]\leq 3\sqrt{2\epsilon}+10\epsilon$, there exists a codebook $\codebook$ and the associated POVMs $\Gamma^{(n)}_\sfA$ and $\Gamma^{(n)}_\sfB$ such that $\error\leq 3\sqrt{2\epsilon}+10\epsilon$. This completes the proof of Theorem \ref{thm:QC-QSI}.

\section{Proof of Theorem \ref{thm:C-CSI}}
\label{sec:proof:C-CSI}
\noindent For a given $(\pxz,\sfY,W_{X|YZ})$ C-CSI source coding setup, we choose the distributions $(P_{U|X},P_{Y|UZ}) \in \calA(P_{XZ},W_{X|YZ})$. 

\noindent \textbf{Codebook Construction}:
We generate a codebook $\codebook$ consisting of $n$-length codewords by randomly and independently selecting $2^{n\Rbar}$ sequences $\{\Un(1),\Un(2),\cdots, \Un(2^{n\Rbar})\}$ according to the following pruned distribution:
 \begin{align}\label{def:c_distribution}
     &\codeDistribution(\Un(m) = \un) = \left\{\!\!\!\!\begin{array}{cc}
          \dfrac{\targetpu^n(\un)}{(1-\varepsilon)}  & \mbox{for} \; \un \in \Tuqc\\
           0 &  \mbox{otherwise,}
     \end{array} \right. \!\!
 \end{align} 
  where $ \targetpu^n(\un) = \prod_{i=1}^n \targetpu(u_i)$, $\Tuqc$ is the $\delta$-typical set corresponding to the distribution $\pu$ on the set $\calU$, and $\varepsilon(\delta,n) \triangleq \sum_{\un \not \in \Tuqc} \targetpu^n(\un)$. Note that $\varepsilon(\delta,n) \searrow 0$ as $n \rightarrow \infty$ and for all sufficiently small $\delta > 0$. 
 The generated codebook $\calC$ is revealed to the encoder and decoder before the C-CSI protocol begins.

\vspace{5pt}
\noindent \textbf{Encoder Description}:
For an observed source sequence $\xn$, construct a randomized encoder that chooses an index $l \in [2^{n\Rbar}]$ according to a sub-PMF $E_{L|X^n}(l|x^n)$
, which is analogous to the likelihood encoders used in  \cite{cuff2013distributed, atif2022source,sohail2023unique}. We now specify $E_{L|X^n}(l|x^n)$ for $x^n\in \Tx$ and $l\in[2^{n\Rbar}]$, where $\hat{\delta} = \delta(|\calX| + |\calU|)$. For a  $\eta \in (0,1)$ (to be specified later), and $\delta>0$, define
\begin{align}
 E_{L|X^n}&(l|x^n) \deq \sum_{\un}   \frac{1}{2^{n\Rbar}}\frac{(1-\varepsilon)}{(1+\eta)}\frac{P^n_{X|U}(x^n|u^n)}{\px^n(x^n)}\I_{\{\xn\in \Tx\}} 
  \I_{\{\un \in \Tucond\}}
  \I_{\{\Un(l) = \un\}},\nonumber
\end{align}
Let $\Ipmf$ denotes the indicator random variable corresponding to the event 
that $\{E_{L|\Xn}(l|\xn)\}_{l \in [\Theta]}$ forms a sub-PMF for all $\xn\in\Tx $. Once the index $l$ is chosen, it gets mapped to an index $m \in [2^{nR}]$. The mapping is done using a binning map $\calB:[2^{n\Rbar}] \rightarrow [2^{nR}]$. To summarize, on observing $\xn$, the encoder chooses $L \in [2^{n\Rbar}]$ stochastically according the PMF $E_{L|\Xn}(\cdot|\xn)$, and communicate the index $\calB(L)$ to the decoder. 

\vspace{3pt}
\noindent After specifying the PMF $E_{L|\Xn}(\cdot|\xn)$, we now characterize $P_{M|\Xn}$. If $\Ipmf \!\! = \!\! 1$, then construct the sub-PMF $P_{M|\Xn}(m|\xn) \!\deq\! \sum_{l \in [2^{n\Rbar}]}E_{L|\Xn}(l|\xn) \I_{\set{\calB(l) = m}}$, for all $\xn\in\Tx \eqand l\in [2^{n\Rbar}].$ We then append an additional PMF element $P_{L|\Xn}(0|\xn) = E_{L|\Xn}(0|\xn) \deq 1-\sum_{l\in[2^{n\Rbar}]} E_{L|X^n}(l|x^n)$ for all $\xn \in \Tx$, associated with $m=0$, to form a valid PMF $P_{M|\Xn}(m|\xn)$ for all $\xn \in \Tx$ and $ m\in\set{0} \cup [2^{nR}]$. If $\xn \not \in \Tx$, then we define $P_{M|\Xn}(m|\xn) = \I_{\set{m=0}}$. Finally, if $\Ipmf = 0, \text{ then } P_{M|\Xn}(m|\xn) = \I_{\{m=0\}}$, for all $\xn \in \calX^n$. This concludes the encoder description. We provide a proposition from \cite{atif2023lossy}, which will be helpful later in the analysis.

\vspace{3pt}
\begin{proposition}\label{prop:clssubPMF}
    For all $\epsilon,\eta \in (0,1)$, for all sufficiently small $\delta > 0$, and sufficiently large $n$, we have $\EE[{\Ipmf}] \geq 1-\epsilon$,
if $\Rbar > I(X;U)$.
\end{proposition}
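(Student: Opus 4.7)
The plan is to establish the sub-PMF property by showing that, with probability at least $1-\epsilon$ over the random codebook $\calC$ drawn from \eqref{def:c_distribution}, the quantity
\begin{align*}
S(\xn) \deq \sum_{l=1}^{2^{n\Rbar}} E_{L|\Xn}(l|\xn) = \frac{(1-\varepsilon)}{(1+\eta)\,2^{n\Rbar}}\sum_{l=1}^{2^{n\Rbar}} F_l,\quad F_l \deq \frac{P^n_{X|U}(\xn|\Un(l))}{\px^n(\xn)}\I_{\{\Un(l)\in \Tucond\}},
\end{align*}
is at most $1$ simultaneously for every $\xn \in \Tx$. Since the codewords $\{\Un(l)\}_l$ are i.i.d.\ under \eqref{def:c_distribution}, the variables $\{F_l\}_l$ are i.i.d.\ for each fixed $\xn$, so the problem reduces, at a fixed $\xn$, to a scalar concentration question, which I then lift to the whole typical set by a union bound.

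First, I would compute the mean using \eqref{def:c_distribution} together with the factorization $P^n_{U|X}(\un|\xn) = P^n_{XU}(\xn,\un)/\px^n(\xn)$, obtaining
\begin{align*}
\EE_\PP[F_l] = \frac{1}{1-\varepsilon}\sum_{\un \in \Tucond \cap \Tuqc} P^n_{U|X}(\un|\xn) \leq \frac{1}{1-\varepsilon},
\end{align*}
so that $\EE_\PP[S(\xn)] \leq \tfrac{1}{1+\eta} < 1$. Next, the standard joint-typicality estimates yield the deterministic uniform bound $F_l \leq 2^{n(I(X;U)+\delta_2)}$ whenever $\xn \in \Tx$ and $\Un(l)\in \Tucond$, for some $\delta_2 = O(\delta)$.

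These two ingredients feed into a scalar Chernoff-Hoeffding bound for the sum of the i.i.d.\ non-negative bounded random variables $\{F_l\}$, giving, for each fixed $\xn \in \Tx$,
\begin{align*}
\PP\bigl(S(\xn) > 1\bigr) \leq \PP\!\left(\frac{1}{2^{n\Rbar}}\sum_l F_l > (1+\eta/2)\,\EE_\PP[F_l]\right) \leq \exp\!\left(-c_\eta\cdot 2^{n(\Rbar-I(X;U)-\delta_2)}\right),
\end{align*}
for some $c_\eta > 0$ depending only on $\eta$. A union bound over the at most $2^{n(H(X)+\hat{\delta})}$ sequences in $\Tx$ then produces the doubly exponential bound
\begin{align*}
\PP(\Ipmf = 0) \leq 2^{n(H(X)+\hat{\delta})}\exp\!\left(-c_\eta\cdot 2^{n(\Rbar-I(X;U)-\delta_2)}\right),
\end{align*}
which falls below $\epsilon$ for all sufficiently large $n$, provided $\Rbar > I(X;U)$ and $\delta$ is chosen small enough that $\delta_2 < \Rbar - I(X;U)$.

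The main obstacle is purely technical bookkeeping: selecting $\delta, \hat{\delta}, \delta_2$ so that the Chernoff exponent $2^{n(\Rbar-I(X;U)-\delta_2)}$ comfortably dominates the $2^{n(H(X)+\hat{\delta})}$ from the union bound, and consistently accounting for the pruning factor $(1-\varepsilon)/(1+\eta)$ in both the mean and the concentration step. Both steps are standard adaptations of the likelihood-encoder analysis of \cite{atif2023lossy, cuff2010coordination}, and no genuinely new ideas are required beyond ensuring uniformity across $\Tx$.
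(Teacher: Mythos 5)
Your argument is correct and is essentially the standard likelihood-encoder sub-PMF analysis that the paper itself invokes by citing \cite{atif2023lossy}: compute $\EE[S(\xn)]\leq 1/(1+\eta)$, bound each term by $2^{n(I(X;U)+O(\delta))}$ via joint typicality, apply a Chernoff bound to the i.i.d.\ sum, and union-bound over $\Tx$, with the doubly exponential concentration dominating the singly exponential cardinality. The only point to be careful about is that the multiplicative form of the Chernoff bound needs $\EE[F_l]$ bounded away from zero, which follows from the conditional typicality lemma for $\xn\in\Tx$ with the stated relation $\hat{\delta}=\delta(|\calX|+|\calU|)$; this is exactly the bookkeeping you already flag, so there is no gap.
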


\vspace{3pt}
\noindent \textbf{Decoder Description}: 
For an observed sequence $m \in \set{0} \cup [2^{nR}]$ communicated by the encoder and the sequence $\zn \in \calZ^n$, the decoder constructs the following set:
$\sfL(m,\zn) \deq \set{l \in [2^{n\Rbar}] : \calB(l) = m \eqand (\Un(l),\zn) \in \Tuz}.$
After this, the decoder outputs $\calD(m,\zn) = \Un(l)$ if $\sfL(m,\zn) = \set{l} \eqand m\neq 0$. Otherwise, the decoder outputs a fixed $\un_0 \in \calU^n \backslash \Tu$. At the end, the decoder chooses $\yn$ according to PMF $P^n_{Y|UZ}(\yn|\calD(m,\zn),\zn)$. This implies the PMF $P_{Y^n|MZ^n}(\cdot|m,\zn)$ can be expressed as: 
\[P_{Y^n|MZ^n}(\cdot|m,\zn) = P^n_{Y|UZ}(\cdot|\calD(m,\zn),\zn).\]

\vspace{3pt}
\noindent \textbf{Error Analysis}: 
We show that for the above-mentioned encoder and decoder, $P_{\Xn\Zn\Yn}$ is close to the approximating distribution $P_{\Yn\Zn} W^n_{X|YZ}$. These PMFs can be further expressed as follows:
\begin{align*}
    P_{\Xn\Yn\Zn}(\xn,\yn,\zn) &= \sum_{m \in [\theta] \cup \set{0}} P_{XZ}^n(\xn,\zn) P_{M|\Xn}(m|\xn)P_{\Yn|M\Zn}(\yn|m,\zn)\\
    P_{\Yn\Zn}(\yn,\zn) W^n_{X|YZ}(\xn
|\yn,\zn) &= \sum_{\sfxn}\sum_{\substack{m \in [\theta] \cup \set{0}}}  P^n_{XZ}(\sfxn,\zn)P_{M|\Xn}(m|\sfxn) P_{\Yn|M\Zn}(\yn|m,\zn)\\
& \hspace{2.5in}\times W^n_{X|YZ}(\xn
|\yn,\zn),
\end{align*}
where $P_{M|\Xn}(m|\sfxn) \eqand P_{\Yn|M\Zn}(\yn|m,\zn)$ are PMFs induced by encoder and decoder, respectively, and for convenience, we denote $\theta \deq [2^{nR}]$ and  $\bar{\theta} \deq [2^{n\Rbar}]$, for the remaining of the paper. We begin by splitting the error $ \Xi(\encodern,\decodern)$ into two terms using $\Ipmf$ as 
\begin{align}
    \Xi(\encodern,\decodern) 
    &= \Ipmf 
\Xi(\encodern,\decodern) + (1- \Ipmf)\Xi(\encodern,\decodern),\nonumber \\
&\leq \Ipmf 
\Xi(\encodern,\decodern) + ({1- \Ipmf}),\label{eqn:clserrorsubpmf}
\end{align}
where \eqref{eqn:clserrorsubpmf} follows from upper bounding the total variation between two PMFs by one, i.e., the maximum value of the 
total variation between two PMFs. 
Using the triangle inequality, we now expand $\Xi(\encodern,\decodern)$. Under the condition $\Ipmf = 1$, as follows:
\begin{align*}
    2 \ \Xi(&\encodern,\decodern) =\sum_{\xn, \yn, \zn} \Big| \sum_{m \in [\theta] \cup \set{0}} P_{XZ}^n(\xn,\zn) P_{M|\Xn}(m|\xn)P_{\Yn|M\Zn}(\yn|m,\zn) \\
    & \hspace{60pt}-\sum_{\sfxn}\sum_{\substack{m \in [\theta] \cup \set{0} }}  P^n_{XZ}(\sfxn,\zn)P_{M|\Xn}(m|\sfxn) P_{\Yn|M\Zn}(\yn|m,\zn) W^n_{X|YZ}(\xn
    |\yn,\zn)\Big| \nonumber \\
    &=\sum_{\substack{\xn \notin \Tx \\ \yn,\ \zn}}
    \Big|P_{XZ}^n(\xn,\zn) P^n_{Y|UZ}(\yn|u_0^n,\zn) \\
    & \hspace{50pt}
    - \sum_{\sfxn}\sum_{\substack{m \in [\theta] \cup \set{0} }}   P^n_{XZ}(\sfxn,\zn)P_{M|\Xn}(m|\sfxn) P_{\Yn|M\Zn}(\yn|m,\zn) W^n_{X|YZ}(\xn
    |\yn,\zn)\Big|\\
     &+\sum_{\substack{\xn \in \Tx \\ \yn,\ \zn}}
    \Big|P_{XZ}^n(\xn,\zn) \sum_{m\in [\theta]}P_{M|\Xn}(m|\xn) P_{\Yn|m\Zn}(\yn|m,\zn) \\
    & \hspace{50pt}
    + P_{XZ}^n(\xn,\zn)P_{M|\Xn}(0|\xn)P^n_{Y|UZ}(\yn|u_0^n,\zn)\\
    & \hspace{50pt}
    - \sum_{\sfxn}\sum_{\substack{m \in [\theta] \cup \set{0}}}  P^n_{XZ}(\sfxn,\zn)P_{M|\Xn}(m|\sfxn) P_{\Yn|M\Zn}(\yn|m,\zn) W^n_{X|YZ}(\xn
    |\yn,\zn)
    \Big|\\
    & \overset{a}{\leq} \cpe + \notce +  \er + 3 \!\!\!\!\!\!\sum_{\xn \notin \Tx}P^n_X{(\xn)} \\
    & \hspace{50pt} + \sum_{\substack{\sfxn \in \Tx \\ \xn,\ \yn,\ \zn}} P^n_{XZ}(\sfxn,\zn)P_{M|\Xn}(0|\sfxn) P_{\Yn|\Un\Zn}(\yn|u_0^n,\zn) W^n_{X|YZ}(\xn
    |\yn,\zn)\\
    & \overset{b}{\leq} \cpe + 2\notce  + \er + 4 \!\!\!\!\!\!\sum_{\xn \notin \Tx}P^n_X{(\xn)} 
    \overset{c}{\leq}\cpe + 2\notce + \er +  4\epsilon,
\end{align*}
for all sufficiently large $n$ and all $\delta>0$, where $(a)$ follows from triangle inequality and by defining terms $\cpe ,\notce,\eqand \er$ as follows:
\begin{align*}
    \cpe &\deq \sum_{\substack{\xn \in \Tx \\ \yn, \zn}}
    \Big|P_{XZ}^n(\xn,\zn) \sum_{m\in [\theta]}P_{M|\Xn}(m|\xn) P^n_{Y|UZ}(\yn|\calD(m,\zn),\zn) \\
    & \hspace{50pt} - \sum_{\substack{m \in [\theta] \\ \sfxn \in \Tx}}  P^n_{XZ}(\sfxn,\zn)P_{M|\Xn}(m|\sfxn) P^n_{Y|UZ}(\yn|\calD(m,\zn),\zn) W^n_{X|YZ}(\xn
    |\yn,\zn)
    \Big|,\\
    \notce &\deq \sum_{\xn \in \Tx} P^n_{X}(\xn) \Big(1 - \sum_{l\in[\btheta]}E_{l|\Xn}(l|\xn)\Big),  \\
    \er &\deq \sum_{\substack{\sfxn \in \Tx \\ \xn \notin \Tx}} \sum_{\substack{m \in [\theta] \\\yn \\ \zn }}
     P^n_{XZ}(\sfxn,\zn)P_{M|\Xn}(m|\sfxn) P^n_{Y|UZ}(\yn|\calD(m,\zn),\zn) W^n_{X|YZ}(\xn
    |\yn,\zn),
\end{align*}
and $(b)$ follows by writing $P_M(0) = \sum_{\sfxn \notin \Tx} P^n_X(\sfxn) + \sum_{\xn \in \Tx} P^n_X(\xn)E_{l|\Xn}(0|\xn),$ and $(c)$ follows from the standard typically argument for all sufficiently large $n$. 

\noindent \textbf{Step 1: Bounding the error induced by not covering}

\noindent Note that the error term $\notce$ captures the error induced by not covering the $n$-product side-information assisted posterior channel. We bound this term by utilizing the following proposition.
\begin{proposition} For all $\epsilon \in (0,1)$, for all sufficiently small $\eta,\delta >0$, and for all sufficiently large $n$, we have $\EE[\Ipmf \notce] \leq \epsilon$, if $\Rbar > I(X:U)$.
\end{proposition}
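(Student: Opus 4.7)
The plan is to reduce the claim to a standard soft-covering (resolvability) second-moment argument on the random codebook. I would first introduce the abbreviation $T(\xn) := \sum_{l \in [2^{n\Rbar}]} E_{L|\Xn}(l|\xn)$, so that $\notce = \sum_{\xn \in \Tx} P_X^n(\xn)\bigl(1 - T(\xn)\bigr)$. The key observation is that on the event $\{\Ipmf = 1\}$ the collection $\{E_{L|\Xn}(l|\xn)\}_l$ is a sub-PMF for every $\xn \in \Tx$, hence $T(\xn) \leq 1$ and each summand is non-negative; consequently $\Ipmf \,(1-T(\xn)) \leq (1-T(\xn))_{+} \leq |1 - T(\xn)|$. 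It therefore suffices to show $\EE\bigl[|1 - T(\xn)|\bigr] \leq \epsilon$ uniformly over $\xn \in \Tx$.

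Next, I would compute $\EE[T(\xn)]$ directly from the pruned codebook distribution in \eqref{def:c_distribution}. Taking expectation over the i.i.d.\ codewords $\Un(l)$ yields, for $\xn \in \Tx$,
$$\EE[T(\xn)] = \frac{1}{(1+\eta)\,P_X^n(\xn)} \sum_{\un \in \Tu \cap \Tucond} P_{U}^n(\un)\,P_{X|U}^n(\xn|\un).$$
By choosing the typicality parameters so that conditional typicality implies unconditional typicality ($\Tucond \subseteq \Tu$) and invoking the conditional typicality lemma ($\sum_{\un \in \Tucond} P_{U|X}^n(\un|\xn) \geq 1 - \varepsilon'(n,\delta)$), I obtain $\EE[T(\xn)] \geq (1-\varepsilon')/(1+\eta)$, so $|1-\EE[T(\xn)]|$ is at most $(\eta + \varepsilon')/(1+\eta)$ and can be driven below any threshold by taking $\eta$ small and $n$ large.

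For the variance, independence of the $\Un(l)$ gives $\Var(T(\xn)) \leq 2^{n\Rbar}\,\EE[f_{\xn}(\Un(1))^2]$, where $f_{\xn}(\un)$ denotes the single $l$-summand in $T(\xn)$. Applying the typical-sequence upper bound $P_{X|U}^n(\xn|\un) \leq 2^{-n(H(X|U)-\delta_1)}$ for jointly typical $(\xn,\un)$, the lower bound $P_X^n(\xn) \geq 2^{-n(H(X)+\delta_2)}$ for $\xn \in \Tx$, and collapsing the remaining $\sum_{\un} P_U^n(\un) P_{X|U}^n(\xn|\un) \leq P_X^n(\xn)$, a short calculation produces
$$\Var(T(\xn)) \leq \frac{2^{-n(\Rbar - I(X;U) - \delta_3)}}{(1+\eta)^2},$$
which vanishes exponentially precisely under the hypothesis $\Rbar > I(X;U)$ once $\delta$ (and hence $\delta_3 = \delta_1+\delta_2$) is chosen small enough.

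Finally, I would close the argument via the triangle inequality
$$\EE\bigl[|1 - T(\xn)|\bigr] \leq |1 - \EE[T(\xn)]| + \sqrt{\Var(T(\xn))},$$
picking $\eta$, then $\delta$, then $n$ in that order to make the right-hand side at most $\epsilon$, and then averaging over $\xn \in \Tx$ against $P_X^n$. The main obstacle I anticipate is the bookkeeping in the variance step: one must extract exactly a $2^{nI(X;U)}$ prefactor from the typicality bounds on $P_{X|U}^n/P_X^n$ so that the soft-covering threshold $\Rbar > I(X;U)$ emerges cleanly from the competition between the $2^{n\Rbar}$ growth from summing over codewords and the $2^{-n\Rbar}$ decay inside $f_{\xn}^2$ — essentially the same resolvability bottleneck as in the analogous no-side-information construction of \cite{sohail2023unique,atif2023lossy}.
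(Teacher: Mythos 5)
Your proposal is correct, and it is essentially the argument the paper relies on: the paper does not prove this proposition inline but defers it entirely to \cite[Proposition 8]{atif2023lossy}, whose content is exactly the soft-covering concentration of $T(\xn)=\sum_l E_{L|X^n}(l|\xn)$ around $(1+\eta)^{-1}$ that you carry out via the mean computation plus the second-moment bound $\Var(T(\xn))\leq 2^{-n(\Rbar-I(X;U)-\delta_3)}/(1+\eta)^2$. Your reduction $\Ipmf(1-T(\xn))\leq |1-T(\xn)|$, the order of choosing $\eta$, then $\delta$, then $n$, and the final averaging against $P_X^n$ over $\Tx$ are all sound, so your write-up is a correct self-contained version of the proof the paper omits.
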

\begin{proof}
    The proof follows from the \cite[Proposition 8]{atif2023lossy}.
\end{proof}
Next, we move on to isolating the error component of $\cpe$ caused by binning (packing). 

\noindent \textbf{Step 2: Isolating the term induced by binning}

\noindent We consider the term corresponding to $\cpe$. By adding and subtracting an appropriate term inside the modulus of $\cpe$ and using triangle inequality, we get $\cpe \leq \ce + \peone + \petwo$, where 
\begin{align*}
    \ce &\deq \!\!\!\sum_{\substack{\xn \in \Tx \\ \un \in \Tu }} \sum_{\substack{\yn \\ \zn}} \sum_{\substack{ l \in [\btheta]\\m\in [\theta] }} 
    \frac{1}{\btheta} \frac{(1-\varepsilon)}{(1+\eta)} 
    \I_{\set{\Un(l) = \un}} \I_{\set{\calB(l) = m}}
 \Big|P_{Z|X}^n(\zn|\xn) P_{X|U}^n(\xn|\un) P^n_{Y|UZ}(\yn|\un,\zn) \\
    & \hspace{25pt} \times \I_{\set{\xn \in \Txcond}} - \!\!\!\!\!\!\!\sum_{\substack{\sfxn \in \Txcond}} \!\!\!\!\!\!P_{Z|X}^n(\zn|\sfxn) P_{X|U}^n(\sfxn|\un) P^n_{Y|UZ}(\yn|\un,\zn) W^n_{X|YZ}(\xn
    |\yn,\zn)
    \Big|,\\
    \peone &\deq \sum_{\substack{\xn \in \Tx \\ \un \in \Tu }} \sum_{\substack{\yn \\ \zn}} \sum_{\substack{ l \in [\btheta]\\m\in [\theta] }} 
    \frac{1}{\btheta} \frac{(1-\varepsilon)}{(1+\eta)} 
    \I_{\set{\Un(l) = \un}} \I_{\set{\calB(l) = m}} \I_{\set{\xn \in \Txcond}}
    P_{Z|X}^n(\zn|\xn) P_{X|U}^n(\xn|\un) \\
    & \hspace{100pt}  \times \Big| P^n_{Y|UZ}(\yn|\un,\zn) - P^n_{Y|UZ}(\yn|\calD(m,\zn),\zn)\Big|,\\
    \petwo &\deq \sum_{\substack{\xn \in \Tx \\ \un \in \Tu \\ \sfxn \in \Tx}} \sum_{\substack{\yn \\ \zn}} \sum_{\substack{ l \in [\btheta]\\m\in [\theta] }} 
    \frac{1}{\btheta} \frac{(1-\varepsilon)}{(1+\eta)} 
    \I_{\set{\Un(l) = \un}} \I_{\set{\calB(l) = m}} 
    \I_{\set{\sfxn \in \Txcond}}
    P_{Z|X}^n(\zn|\sfxn) P_{X|U}^n(\sfxn|\un) \\
    & \hspace{100pt}  \times \Big| P^n_{Y|UZ}(\yn|\un,\zn) - P^n_{Y|UZ}(\yn|\calD(m,\zn),\zn)\Big| W^n_{X|YZ}(\xn|\yn,\zn).
\end{align*}
Here, $\peone$ and $\petwo$ captures the error induced by binning. Observe that, we can establish the bound $\petwo \leq \peone$ because $\sum_{\xn \in \Tx} W^n_{X|YZ}(\xn|\yn,\zn) \leq 1$. Consequently, we have $\cpe \leq \ce + 2\peone$. To bound the term $\peone$, we provide the following proposition.
\begin{proposition}\label{prop:binning} For all $\eta,\epsilon \in (0,1),$ for all sufficiently small $\delta >0$, and sufficiently large $n$, we have $\EE[\Ipmf \peone] \leq 3\epsilon$, if $(\Rbar-R) < I(U;Z)$.
\end{proposition}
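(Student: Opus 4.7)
\emph{Proof plan.} The plan is to reduce $\peone$ to a Wyner--Ziv-style decoding failure probability and then apply a random-binning argument identical in spirit to the packing step in classical Wyner--Ziv coding.

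The first step exploits the trivial cancellation in the $y^n$-sum: for any two reconstruction indices $u_1,u_2$ and any $z^n$, one has $\sum_{y^n}\bigl|P^n_{Y|UZ}(y^n|u_1,z^n) - P^n_{Y|UZ}(y^n|u_2,z^n)\bigr| \leq 2\,\I_{\set{u_1 \neq u_2}}$. Applied with $u_1 = \un$ (the codeword indexed by $l$) and $u_2 = \calD(m,z^n)$ (the decoder's output), this collapses the inner $y^n$-sum in $\peone$ into $2\,\I_{\set{\calD(m,z^n)\neq \un}}$, so that $\peone$ is upper bounded (up to the $(1-\varepsilon)/(1+\eta)\leq 1$ prefactor) by twice the probability that the decoder misidentifies the transmitted codeword.

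Next, summing out $\xn \in \Txcond$ against $P^n_{X|U}(\xn|\un)P^n_{Z|X}(\zn|\xn)$ and invoking the Markov chain $Z-X-U$ together with conditional typicality yields, up to a vanishing correction, the marginal $P^n_{UZ}(\un,\zn)$. Combined with $\I_{\set{\Un(l)=\un}}$, the uniform factor $1/\bar{\theta}$, and symmetry across $l$ and $m$, the expected error reduces to roughly $2\,\Pr\{\calD(\calB(l),Z^n)\neq U^n(l)\}$, where $(U^n(l),Z^n)$ is effectively drawn from the pruned version of $P^n_{UZ}$.

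The decoding failure event is then split into (a) $(U^n(l), Z^n)\notin\Tuz$, and (b) some $l'\neq l$ with $\calB(l')=\calB(l)$ satisfies $(U^n(l'),Z^n)\in\Tuz$. Event (a) has vanishing probability by standard conditional typicality (using $Z-X-U$ and the pruning onto $\Tu$). For event (b), independence of codewords together with uniform random binning into $2^{nR}$ bins yields
\begin{align*}
    \EE\bigl[\I_{\set{\text{event (b)}}}\bigr] \leq \sum_{l'\neq l}\Pr\{\calB(l')=\calB(l)\}\cdot \Pr\{(U^n(l'),Z^n)\in\Tuz\} \leq 2^{n(\Rbar-R-I(U;Z)+\bar{\delta})},
\end{align*}
with $\bar{\delta}\searrow 0$ as $\delta\searrow 0$; this vanishes whenever $\Rbar-R<I(U;Z)$. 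Assembling the two sub-bounds with the factor $2$ and the indicator $\Ipmf$ yields $\EE[\Ipmf\,\peone]\leq 3\epsilon$ for all sufficiently large $n$ and sufficiently small $\eta,\delta>0$.

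The main obstacle I foresee is the careful bookkeeping of the pruned/typical conditioning used by the encoder, so that the induced marginal distribution of $Z^n$ (which is only indirectly produced via $U^n(l)\to X^n\to Z^n$) can be handled as approximately $P^n_Z$ on the typical set, and the joint-typicality probability $\Pr\{(U^n(l'),z^n)\in\Tuz\}$ can indeed be bounded by $2^{-n(I(U;Z)-\bar{\delta})}$ uniformly over typical $z^n$ while absorbing the $(1-\varepsilon)^{-1}$ pruning factors from the codebook generation.
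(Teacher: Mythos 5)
Your plan mirrors the paper's proof almost exactly: it starts from the same total-variation bound $\sum_{\yn}|P^n_{Y|UZ}(\yn|\un,\zn)-P^n_{Y|UZ}(\yn|\utilde^n,\zn)|\leq 2\,\I_{\set{\un\neq\utilde^n}}$, splits the decoding-failure event into a joint-typicality failure term and a binning-collision term (the paper's $\peoneone$ and $\peonetwo$), bounds the first by conditional typicality using $U-X-Z$, and bounds the second by a random-binning/packing estimate giving $2^{n(\Rbar-R-I(U;Z)+\bar\delta)}$. The bookkeeping of the pruning factors $(1-\varepsilon)^{-1}$ that you flag as a possible obstacle is handled in the paper exactly as you sketch—by summing out $\xn$ to recover $P^n_{UZ}$ up to constant factors—so there is no gap.
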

\begin{proof}
The proof is provided in Appendix \ref{app:proof:prop:binning}.
\end{proof} 
\noindent Next, we proceed to analyze the error term induced by covering. 

\noindent\textbf{Step 3: Bounding the covering error}

\noindent Using the Markov chain $U-X-Z$, $X-(U,Z)-Y$, and $X-(Y,Z)-U$ which $P_{UXYZ}$ satisfies, we can rewrite the term $ P_{X|U} P_{Z|UX} P_{Y|UXZ}$ as follows:
\begin{equation}\label{eqn:mc_prob}
    P_{Z|X} P_{X|U} P_{Y|UZ} = P_{Z|UX} P_{X|U} P_{Y|UXZ} = P_{YZ|U} W_{X|YZ}.
\end{equation}
Using \eqref{eqn:mc_prob}, we can simplify the terms inside the modulus of $\ce$ as:
\begin{align*}
    \ce &\deq \sum_{\substack{\xn \in \Tx \\ \un \in \Tu }} \sum_{\substack{\yn \\ \zn}} \sum_{\substack{ l \in [\btheta]\\m\in [\theta] }} 
    \frac{1}{\btheta} \frac{(1-\varepsilon)}{(1+\eta)} 
    \I_{\set{\Un(l) = \un}} \I_{\set{\calB(l) = m}} P_{YZ|U}^n(\yn,\zn|\un) W^n_{X|YZ}(\xn|\yn,\zn) \\
    & \hspace{2in} \times \Big|\I_{\set{\xn \in \Txcond}} - \sum_{\substack{\sfxn \in \Txcond}} W^n_{X|YZ}(\sfxn
    |\yn,\zn)
    \Big|.
\end{align*}
To bound the above-simplified term, we provide the following proposition. 
\begin{proposition}\label{prop:covering}
    For all $\eta, \epsilon \in (0,1)$, for all sufficiently small $\delta >0$, and sufficiently large $n$, we have $\EE[\Ipmf \ce] \leq \epsilon$.
\end{proposition}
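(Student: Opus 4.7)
The plan is to bound $\EE[\Ipmf\ce]\le \EE[\ce]$ and then reduce $\EE[\ce]$ to an expectation of a typicality-gap with respect to the joint $P^n_{UXYZ}$. First, I would take the expectation over the random codebook (and binning). Because every $l\in[\btheta]$ maps to exactly one bin, $\sum_{m\in[\theta]}\I_{\set{\calB(l)=m}}=1$, and the pruned distribution in \eqref{def:c_distribution} gives $\EE[\I_{\set{\Un(l)=\un}}]=P^n_U(\un)/(1-\varepsilon)$ for every $\un\in\Tu$. Summing over $l$ cancels the $1/\btheta$ factor and the $(1-\varepsilon)$, leaving
$$\EE[\ce] = \frac{1}{1+\eta}\sum_{\substack{\xn\in\Tx,\un\in\Tu\\ \yn,\zn}} P^n_U(\un)\,P^n_{YZ|U}(\yn,\zn|\un)\,W^n_{X|YZ}(\xn|\yn,\zn)\,\Delta_n,$$
where $\Delta_n \deq \bigl|\I_{\set{\xn\in\Txcond}} - W^n_{X|YZ}(\Txcond|\yn,\zn)\bigr|$.

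Next I would invoke the standing assumptions of Theorem~\ref{thm:C-CSI}: the membership $(P_{U|X},P_{Y|UZ})\in\calA(P_{XZ},W_{X|YZ})$ forces $W_{X|YZ}=P_{X|YZ}$ (under the joint), and the Markov chain $X-(Y,Z)-U$ further gives $P_{X|YZ}=P_{X|UYZ}$. Hence $P^n_U\cdot P^n_{YZ|U}\cdot W^n_{X|YZ}=P^n_{UXYZ}$, so dropping the typicality indicators only enlarges the sum and $\EE[\ce]\le \EE_{P^n_{UXYZ}}[\Delta_n]/(1+\eta)$. I would then split according to the event $\calG\deq\set{(\Un,\Yn,\Zn)\in\mathcal{T}_\delta^{(n)}(U,Y,Z)}$. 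On $\calG^c$, $\Delta_n\le 1$ and $\PP(\calG^c)\to 0$ by the law of large numbers. On $\calG$, conditional typicality gives $q\deq W^n_{X|YZ}(\Txcond|\yn,\zn) = P^n_{X|UYZ}(\Txcond|\un,\yn,\zn) \ge 1-\epsilon/3$, using $W^n_{X|YZ}=P^n_{X|UYZ}$. The conditional expectation of $\Delta_n$ over $\Xn\sim W^n_{X|YZ}(\cdot|\yn,\zn)$ evaluates to $q(1-q)+(1-q)q=2q(1-q)\le 2(1-q)$, which is likewise small.

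The main obstacle I expect is maintaining consistency among the typicality thresholds used for $\Tx$, $\Tu$, $\Txcond$, and $\mathcal{T}_\delta^{(n)}(U,Y,Z)$ so that the conditional-typicality concentration $W^n_{X|YZ}(\Txcond|\yn,\zn)\to 1$ holds uniformly on $\calG$. Once the Markov chain $X-(Y,Z)-U$ justifies the replacement $W^n_{X|YZ}=P^n_{X|UYZ}$, the required concentration is a standard conditional weak-LLN estimate. With the $\delta$-bookkeeping handled and the $\calG$ and $\calG^c$ contributions combined, the bound $\EE[\ce]\le\epsilon$ follows for all sufficiently small $\delta$ and large $n$, completing the proof.
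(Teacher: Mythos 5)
Your proposal is correct and recovers the essential structure of the paper's argument, though it takes a slightly different closing route. Both start identically: take the codebook expectation so that $\EE[\I_{\{\Un(l)=\un\}}]=P^n_U(\un)/(1-\varepsilon)$, use $\sum_{m}\I_{\{\calB(l)=m\}}=1$, and recognize that the membership constraint $(P_{U|X},P_{Y|UZ})\in\calA$ together with the Markov chain $X-(Y,Z)-U$ forces $W_{X|YZ}=P_{X|YZ}=P_{X|UYZ}$, so the weighting factorizes as $P^n_{UXYZ}$. You then compute the conditional mean of $\Delta_n$ to get $2q(1-q)\le 2(1-q)$ with $q=W^n_{X|YZ}(\Txcond|\yn,\zn)$, and control $1-q$ by conditioning on the joint-typicality event $\calG$ for $(\Un,\Yn,\Zn)$. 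The paper's step $(a)$ is exactly the same $2q(1-q)$ computation written as a case split over $\xn\in\Txcond$ versus $\xn\notin\Txcond$; but instead of conditioning on $\calG$, the paper immediately marginalizes $(\Yn,\Zn)$ out of $\EE[1-q]$ using the Markov chain, collapsing $\sum_{\yn,\zn}P^n_{UYZ}\,W^n_{X|UYZ}=P^n_U\,W^n_{X|U}$, which reduces the whole bound to the single estimate $\sum_{\un\in\Tu}P^n_U(\un)\,W^n_{X|U}\bigl(\Txcond^{\mathrm{c}}\big|\un\bigr)\le\epsilon$. Your route is sound but carries the extra burden of calibrating the typicality threshold for $\calG$ against the one defining $\Txcond$ (the $\delta$-bookkeeping you flag yourself); the paper's marginalization sidesteps this because it only ever needs conditional typicality of $\Xn$ given $\un$, never the joint typicality of $(\un,\yn,\zn)$.
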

\begin{proof}
The proof is provided in Appendix \ref{app:proof:prop:covering}.
\end{proof}
Following Propositions \ref{prop:binning} and \ref{prop:covering}, for all $\epsilon \in (0,1)$, for all sufficiently large $n$ and sufficiently small $\delta, \eta> 0$, we obtain, $\EE[\Ipmf \cpe] \leq \EE[\Ipmf \ce] + 2\EE[\Ipmf \peone] \leq 7\epsilon$.  Finally, we are left with the analysis of the error term $\er$.

\noindent \textbf{Step 4: Bounding the error term $\er$}

\noindent By incorporating the addition and subtraction of a suitable term and applying the triangle inequality, we further upper bound the expression of the error term $\er$ as $\er \leq \ee + \epe,$ where
\begin{align*}
     \epe &\deq \sum_{\substack{\xn \notin \Tx \\ \un \in \Tu }} \sum_{\substack{\yn \\ \zn}} \sum_{\substack{ l \in [\btheta]\\m\in [\theta] }} 
    \frac{1}{\btheta} \frac{(1-\varepsilon)}{(1+\eta)} 
    \I_{\set{\Un(l) = \un}} \I_{\set{\calB(l) = m}} 
    \sum_{\sfxn \in \Tx}P_{Z|X}^n(\zn|\sfxn) P_{X|U}^n(\sfxn|\un)\\
    & \hspace{0.5in}  \times  \I_{\set{\sfxn \in \Txcond}} \Big| P^n_{Y|UZ}(\yn|\un,\zn) - P^n_{Y|UZ}(\yn|\calD(m,\zn),\zn)\Big| W^n_{X|YZ}(\xn|\yn,\zn)\\
    \eqand \ee &\deq\sum_{\substack{\xn \notin \Tx \\ \un \in \Tu }} \sum_{\substack{\yn \\ \zn}} \sum_{\substack{ l \in [\btheta]\\m\in [\theta] }} 
    \frac{1}{\btheta} \frac{(1-\varepsilon)}{(1+\eta)} 
    \I_{\set{\Un(l) = \un}} \I_{\set{\calB(l) = m}} 
    \sum_{\sfxn \in \Tx}P_{Z|X}^n(\zn|\sfxn) P_{X|U}^n(\sfxn|\un) \\
    & \hspace{0.8in}  \times  \I_{\set{\sfxn \in \Txcond}} P^n_{Y|UZ}(\yn|\un,\zn)W^n_{X|YZ}(\xn|\yn,\zn)
\end{align*}

\noindent Similar to the error $\petwo$, note that the error term $\epe$ can be upper bounded as $\epe \leq \peone$ because $\sum_{\xn \notin \Tx} W^n_{X|YZ}(\xn|\yn,\zn) \leq 1$. Thus, for all $\epsilon \in (0,1)$, for all sufficiently large $n$ and sufficiently small $\delta, \eta > 0$, we have $\EE[\Ipmf \epe] \leq 3\epsilon$. To bound the remaining error term $\ee$, we present a proposition below.
\noindent \begin{proposition}\label{prop:encoding_error}
    For all $\epsilon \in \set{0,1}$, for all sufficiently small $\eta,\delta >0$, and sufficiently large $n$, we have $\EE[\Ipmf \ee] \leq \epsilon$.
\end{proposition}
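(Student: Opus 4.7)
The plan is to bound $\EE[\Ipmf \ee]$ by trivially dropping the indicator $\Ipmf \leq 1$ and then exploiting the Markov-chain identity \eqref{eqn:mc_prob}, $P_{Z|X}P_{X|U}P_{Y|UZ} = P_{YZ|U}W_{X|YZ}$, to collapse most of the summations. The key observation is that the indicator $\I_{\set{\xn \notin \Tx}}$ will eventually restrict the total mass to atypical source sequences, which has vanishing probability by the typical average lemma. Crucially, no rate constraint is needed for this term, since the atypicality of $\xn$ is what forces the contribution to be small.

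First, I would rewrite the innermost sum over $\sfxn$ in the definition of $\ee$ by replacing $P^n_{Z|X}(\zn|\sfxn)P^n_{X|U}(\sfxn|\un)P^n_{Y|UZ}(\yn|\un,\zn)$ with $P^n_{YZ|U}(\yn,\zn|\un)W^n_{X|YZ}(\sfxn|\yn,\zn)$, which is valid since the chain of Markov conditions ($Z-X-U$, $X-(U,Z)-Y$, and $X-(Y,Z)-U$) implies $P_{YZ|U}(y,z|u) = \sum_{x} P_{X|U}(x|u) P_{Z|X}(z|x) P_{Y|UZ}(y|u,z)$. After pulling $P^n_{YZ|U}(\yn,\zn|\un)$ out of the $\sfxn$-sum and using $\I_{\set{\sfxn \in \Txcond}} \leq 1$ together with $\sum_{\sfxn \in \Tx} W^n_{X|YZ}(\sfxn|\yn,\zn) \leq 1$, the entire sum over $\sfxn$ is upper bounded by $P^n_{YZ|U}(\yn,\zn|\un)$. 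Next, summing $\sum_{m \in [\theta]} \I_{\set{\calB(l) = m}} = 1$ eliminates the binning indicator.

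I would then take the expectation over the random codebook. Since $\Un(l)$ is drawn according to the pruned distribution in \eqref{def:c_distribution}, $\EE[\I_{\set{\Un(l) = \un}}] = P^n_U(\un)/(1-\varepsilon)$ for $\un \in \Tu$, so the factor $(1-\varepsilon)$ in the encoder cancels. The sum $\sum_{l \in [\btheta]} (1/\btheta) = 1$ removes the dependence on $l$. Relaxing the restriction $\un \in \Tu$ to a sum over all $\un$ and then using $\sum_{\un} P^n_U(\un) P^n_{YZ|U}(\yn,\zn|\un) = P^n_{YZ}(\yn,\zn)$, followed by $\sum_{\yn,\zn} P^n_{YZ}(\yn,\zn) W^n_{X|YZ}(\xn|\yn,\zn) = P^n_X(\xn)$, yields
$$\EE[\Ipmf \ee] \leq \frac{1}{1+\eta}\sum_{\xn \notin \Tx} P^n_X(\xn) \leq \frac{\varepsilon(\delta,n)}{1+\eta}.$$

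The final step is standard typicality: since $\varepsilon(\delta,n) \searrow 0$ as $n \rightarrow \infty$ for all sufficiently small $\delta>0$, the bound can be made smaller than any prescribed $\epsilon \in (0,1)$. I anticipate no substantive obstacle in this proposition, since the Markov identity does all the heavy lifting and no packing/covering lemma is invoked. The only care required is consistent bookkeeping of the two source-type variables $\sfxn$ and $\xn$, together with the observation that every constraint indicator in the $\sfxn$-sum can be relaxed to unity before the Markov substitution without losing the key cancellation.
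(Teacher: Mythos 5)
Your proposal is correct and follows essentially the same route as the paper's proof: both exploit the Markov-chain identity to rewrite the $\sfxn$-sum as $P^n_{YZ|U}\sum_{\sfxn}W^n_{X|YZ}(\sfxn|\cdot)\le P^n_{YZ|U}$, eliminate the binning indicator via $\sum_m \I_{\set{\calB(l)=m}}=1$, and cancel the $(1-\varepsilon)$ factor against the pruned codeword distribution. The only (cosmetic) difference is the last step: you marginalize $U,Y,Z$ completely to reach $\sum_{\xn\notin\Tx}P^n_X(\xn)$ and invoke unconditional typicality, whereas the paper stops at $P^n_U W^n_{X|U}$ and uses conditional typicality over $\xn\notin\Txcond$; both are valid.
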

\begin{proof}
The proof follows from the argument similar to Proposition \ref{prop:covering}. However, for completeness, we provide the proof in Appendix \ref{app:proof:prop:encoding_error}.
\end{proof}
Eventually, using Propositions \ref{prop:binning}, \ref{prop:covering}, and \ref{prop:encoding_error}, we bound the $\EE[\encodern, \decodern].$ For all $\epsilon \in (0,1)$,
\begin{equation*}
    \EE_{\codebook}[\Xi(\encodern, \decodern)] \leq \EE_{\codebook}[\Ipmf \Xi(\encodern,\decodern) + (1-\Ipmf)] \leq 17\epsilon/2,
\end{equation*}
for all sufficiently large $n$. Since $\EE_{\codebook}[\Xi(\encodern, \decodern)] \leq 17\epsilon/2$, there must exists a code $\codebook$ such that the associated error $\Xi(\encodern, \decodern) \leq 17\epsilon/2$. This completes the achievability proof.

\section{Proof of Theorem \ref{thm:connection}}
\label{sec:proof:connection}
Given 
$P_{X^nY^n}$ is the induced $n$-letter joint distribution on the source and the reconstruction vectors. 
Then, by Lemmas \ref{lem:averageSingleletter} and \ref{lemma:sols_linearEqn_close} (stated below and proof provided in \cite{sohail2025WZ}), we have
$\lim_{n \rightarrow \infty} \| P_{X_QY_Q} - P_{Y}W_{X|Y} \|_{\normalfont \text{TV}} =0,$
for some distribution $P_{Y}$  
in $\calA(\px,W_{X|Y})$ that achieves the optimality in Theorem \ref{thm:Csourcecoding},
where $P_{X_QY_Q}=\tfrac{1}{n} \sum_{i=1}^n P_{X_iY_i}.$
Then it is well known \cite[Problem 8.3]{csiszar2011information} that if one chooses, the following single-letter bounded distortion function 
$d(x,y)= -c\log_2 \prevTC(x|\xhat)+b(x)$,
and distortion level $D={\mathbb{E}[d(X,Y)]}$, where the expectation is with respect to the distribution $P_{Y} \prevTC$,
then the backward channel $\prevTC$ attains optimality in the standard rate-distortion function of the source 
$(\px,\sfX,\sfXhat,d)$
at the distortion level $D$.
Using the above arguments, we 
infer that the same protocol also achieves the 
distortion value of $D$, i.e., $\lim_{n \rightarrow \infty} \!\mathbb{E} [\tfrac{1}{n} \!\sum_{i=1}^n d(X_i,{Y_i})] \!=\!\lim_{n \rightarrow \infty} \sum_{x,y} \!P_{X_QY_Q}\!(x,y) d(x,y)\! =\!
D.$
In summary, the protocol also achieves the optimal rate-distortion function at the distortion level $D$. 
This is along anticipated lines as the current formulation uses a stricter global error criterion than the standard approach that uses a local one.

\begin{lemma} \cite[Lemma 8]{atif2023lossy}\label{lem:averageSingleletter}
The distributions $P_{X^nY^n}$ and $P_{Y^n}\prevTC^n$ satisfy
$$ \|P_X - \sum_{\xhat} P_{Y_Q}(\xhat)\prevTC(\cdot|\xhat) \|_{\normalfont \text{TV}} \leq 
\|P_{X_QY_Q} - P_{Y_Q}\prevTC \|_{\normalfont \text{TV}}  \leq 
\|P_{X^nY^n} -
    P_{Y^n}\prevTC^n\|_{\normalfont \text{TV}}.$$
\end{lemma}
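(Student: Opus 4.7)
The plan is to establish the chain using two standard tools for total variation distance: the data-processing inequality (marginalization never increases TV) and convexity of the norm in its arguments. Introduce a time-sharing index $Q$ uniform on $[n]$ and independent of the protocol, so that $P_{X_QY_Q}=\tfrac{1}{n}\sum_{i=1}^n P_{X_iY_i}$ and $P_{Y_Q}=\tfrac{1}{n}\sum_{i=1}^n P_{Y_i}$. Since the source $X^n$ is i.i.d.\ with marginal $P_X$, the $X$-marginal of $P_{X_QY_Q}$ equals $P_X$, while the $X$-marginal of $P_{Y_Q}\prevTC$ is precisely $\sum_{\xhat} P_{Y_Q}(\xhat)\prevTC(\cdot|\xhat)$.

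For the first inequality (left $\le$ middle), I would note that summing out the reconstruction coordinate is a deterministic channel applied identically to both $P_{X_QY_Q}$ and $P_{Y_Q}\prevTC$. The data-processing inequality for TV then yields
\[
\Big\|P_X - \sum_{\xhat} P_{Y_Q}(\xhat)\prevTC(\cdot|\xhat)\Big\|_{\text{TV}} \le \|P_{X_QY_Q} - P_{Y_Q}\prevTC\|_{\text{TV}}.
\]

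For the second inequality (middle $\le$ right), I would proceed in two sub-steps. First, joint convexity of TV in its two arguments, applied to the averaged representations of both sides, gives
\[
\|P_{X_QY_Q} - P_{Y_Q}\prevTC\|_{\text{TV}} \le \frac{1}{n}\sum_{i=1}^n \|P_{X_iY_i} - P_{Y_i}\prevTC\|_{\text{TV}}.
\]
Second, for each fixed $i$, I would observe that the coordinate-$i$ marginal of the joint distribution $P_{X^nY^n}$ is $P_{X_iY_i}$, and because the target has the product form $\prevTC^n(x^n|y^n)=\prod_{j=1}^n \prevTC(x_j|y_j)$, summing over the remaining coordinates cleanly decouples and the coordinate-$i$ marginal of $P_{Y^n}\prevTC^n$ is exactly $P_{Y_i}\prevTC$. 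Applying data processing (projection to coordinate $i$ is a deterministic channel) then gives
\[
\|P_{X_iY_i} - P_{Y_i}\prevTC\|_{\text{TV}} \le \|P_{X^nY^n} - P_{Y^n}\prevTC^n\|_{\text{TV}}
\]
uniformly in $i$. Averaging over $i$ and chaining with the convexity bound completes the chain.

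The argument is essentially elementary, so there is no deep obstacle; the one point that warrants explicit verification is that the coordinate-$i$ marginal of the approximating distribution $P_{Y^n}\prevTC^n$ genuinely equals $P_{Y_i}\prevTC$, which relies squarely on the product structure of $\prevTC^n$. Notably, no memorylessness is assumed on the protocol-induced distribution $P_{X^nY^n}$ itself, so the chain of inequalities holds for an arbitrary $n$-letter joint on the source and reconstruction against a memoryless backward target.
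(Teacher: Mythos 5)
Your proof is correct. The paper does not reproduce a proof of this lemma (it is cited from Lemma~8 of the reference and deferred to the full version), but the argument you give is the standard and essentially unique route: data processing for total variation under the deterministic marginalization map establishes both the left inequality and, after the key observation that the product structure of $\prevTC^n$ makes the coordinate-$i$ marginal of $P_{Y^n}\prevTC^n$ equal to $P_{Y_i}\prevTC$, each per-coordinate bound $\|P_{X_iY_i}-P_{Y_i}\prevTC\|_{\text{TV}}\le\|P_{X^nY^n}-P_{Y^n}\prevTC^n\|_{\text{TV}}$; the triangle inequality applied to the time-sharing average $\tfrac{1}{n}\sum_i$ then closes the chain. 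You correctly flag that the only step needing care is the marginal computation for $P_{Y^n}\prevTC^n$, and that no product structure is required of $P_{X^nY^n}$ itself.
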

\begin{lemma} \label{lemma:sols_linearEqn_close}
Let \( \sfA \in \mathbb{R}^{\alpha \times \beta} \) be a matrix and \( \sfb \in \mathbb{R}^{\alpha} \) a vector. Suppose \( x_0 \in \mathbb{R}^{\beta}_{\geq 0} \) is a non-negative vector satisfying the conditions \( \|\sfA x_0 - \sfb\|_1 \leq \delta \) and \( \I^\texttt{T} x_0 = 1 \) for some \( \delta > 0 \), where \( \I^\texttt{T} \) denotes a row vector of ones.
Then, there exists a \( x \in \mathbb{R}^{\beta}_{\geq 0} \) such that \( A x = b \) and \( \I^\top x = 1 \), with the additional property that
$\|x - x_0\|_1 \leq f(\delta),$
where \( f(\delta) \to 0 \) as \( \delta \to 0 \).
\end{lemma}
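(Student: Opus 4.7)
The plan is to prove this via a compactness-and-contradiction argument that hinges on the fact that the feasible set $\Delta \deq \{x \in \mathbb{R}^\beta_{\geq 0} : \I^\textT x = 1\}$ (the standard probability simplex) is compact in $\mathbb{R}^\beta$ and that $x \mapsto \|\sfA x - \sfb\|_1$ is continuous. The proof is qualitative (no explicit rate for $f$), which is exactly what is needed in the application to Theorem \ref{thm:connection}, where the Markov-style convergence to a nearby exact solution is all that is used.

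First, I would verify that the exact solution set $S \deq \{x \in \Delta : \sfA x = \sfb\}$ is non-empty whenever the hypothesis is satisfiable for arbitrarily small $\delta$. Since $\Delta$ is compact and $g(x) \deq \|\sfA x - \sfb\|_1$ is continuous, $g$ attains its minimum on $\Delta$. If $S$ were empty, that minimum would be some $\mu > 0$, and no element of $\Delta$ could satisfy $g(x_0) \leq \delta$ for any $\delta < \mu$. Thus, for the conclusion to be vacuous or meaningful in the regime $\delta \to 0$, $S$ must be non-empty, and the statement should be read as asserting the existence of a nearby $x \in S$ in that regime.

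Next, I would define the worst-case distance-to-feasibility
\begin{equation*}
    f(\delta) \deq \sup \bigl\{ \inf_{x \in S} \|x - x_0\|_1 \;:\; x_0 \in \Delta,\ \|\sfA x_0 - \sfb\|_1 \leq \delta \bigr\},
\end{equation*}
which is clearly non-decreasing in $\delta$ and satisfies the desired bound $\|x - x_0\|_1 \leq f(\delta)$ for some $x \in S$ by construction (choosing $x$ to approach the infimum; if the infimum is not attained, since $S$ is a closed subset of the compact $\Delta$ it is compact, so the infimum is attained). The remaining task is to show $f(\delta) \to 0$ as $\delta \to 0$.

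I would show this limit by contradiction. Suppose $f(\delta) \not\to 0$; then there exist $\epsilon > 0$ and sequences $\delta_n \downarrow 0$ and $x_{0,n} \in \Delta$ with $\|\sfA x_{0,n} - \sfb\|_1 \leq \delta_n$ and $\inf_{x \in S}\|x - x_{0,n}\|_1 \geq \epsilon$. By compactness of $\Delta$, pass to a subsequence $x_{0,n_k} \to x^\star \in \Delta$. Continuity of $g$ yields $\|\sfA x^\star - \sfb\|_1 = \lim_k \|\sfA x_{0,n_k} - \sfb\|_1 \leq \lim_k \delta_{n_k} = 0$, so $x^\star \in S$. But then $\inf_{x \in S}\|x - x_{0,n_k}\|_1 \leq \|x^\star - x_{0,n_k}\|_1 \to 0$, contradicting the $\epsilon$ lower bound. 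The only technical point worth being careful about is the non-emptiness of $S$ established in the first step, since without it $f(\delta)$ would not even be well-defined; beyond that the argument is entirely standard topology.
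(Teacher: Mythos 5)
Your proof is correct, but it follows a genuinely different route than the paper's. The paper sets up the constrained least-squares problem $\min \tfrac12\|x - x_0\|_2^2$ subject to $\sfA x = \sfb$, $\I^\texttt{T}x=1$, $x\geq 0$, works through the KKT conditions (stationarity, complementary slackness) and the Moore--Penrose pseudoinverse to produce an explicit nearest feasible point $x^*$, and then bounds $\|x^*-x_0\|_1 \leq \delta\sqrt{\beta}/\sigma^+_{\min}$, where $\sigma^+_{\min}$ is the smallest nonzero singular value of the augmented matrix $[\sfA^\texttt{T}\ \I]^\texttt{T}$ (or its active-set restriction). That buys a \emph{linear} quantitative rate $f(\delta)=C\delta$ with an identifiable constant, at the cost of a somewhat sketchy handling of which coordinates get clipped to zero and an implicit feasibility assumption. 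Your compactness-of-the-simplex argument is cleaner and fully rigorous at the topological level — it proves existence and $f(\delta)\to 0$ with essentially no bookkeeping, and it makes the implicit non-emptiness of the exact solution set $S$ explicit rather than leaving it buried — but it yields no explicit modulus of continuity for $f$. For the use made of the lemma in Theorem \ref{thm:connection}, where only $\lim_{\delta\to 0} f(\delta)=0$ is invoked, your qualitative version is entirely sufficient; the paper's extra quantitative precision is unused there.
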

\begin{proof}
    The proof is provided in Appendix \ref{app:proof:lemma:sols_linearEqn_close}.
\end{proof}

\bibliographystyle{ieeetr}
\bibliography{reference}

\appendix

\subsection{Useful Lemmas}
\label{app:useful lemmas}
\begin{lemma}\label{app:lemma:traceinequality}
    Let $\rho$ and $\sigma$ be positive operators and $\Lambda$ a positive operator such that $0\leq \Lambda\leq I$. Then the following inequality holds
    $\Tr\{\Lambda \rho\} \leq \Tr\{\Lambda \sigma\}+\|\rho-\sigma\|_1.$
\end{lemma}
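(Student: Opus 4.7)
The plan is to prove this via the operator-norm/trace-norm duality, which is a one-line observation once set up properly. I would start by writing $\Tr\{\Lambda\rho\} - \Tr\{\Lambda\sigma\} = \Tr\{\Lambda(\rho-\sigma)\}$ by linearity of trace, so the task reduces to bounding $\Tr\{\Lambda(\rho-\sigma)\}$ from above by $\|\rho-\sigma\|_1$.

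To bound this, I would use the Jordan decomposition of the Hermitian operator $\rho - \sigma$ (it is Hermitian because $\rho$ and $\sigma$ are positive, hence self-adjoint). Write $\rho - \sigma = P - Q$, where $P, Q \geq 0$ have mutually orthogonal supports. Then by definition of the trace norm for Hermitian operators, $\|\rho-\sigma\|_1 = \Tr\{P\} + \Tr\{Q\}$. Consequently,
\begin{align*}
\Tr\{\Lambda(\rho-\sigma)\} = \Tr\{\Lambda P\} - \Tr\{\Lambda Q\} \leq \Tr\{\Lambda P\} \leq \Tr\{P\} \leq \|\rho-\sigma\|_1,
\end{align*}
where the second-to-last step uses the hypothesis $\Lambda \leq I$ together with $P \geq 0$ (since $\Tr\{(I-\Lambda) P\} \geq 0$ because both $(I-\Lambda)$ and $P$ are positive, and the trace of a product of two positive operators is non-negative), and the last step follows from $\Tr\{Q\} \geq 0$.

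Combining the displays yields $\Tr\{\Lambda\rho\} \leq \Tr\{\Lambda\sigma\} + \|\rho-\sigma\|_1$, as desired. There is no genuine obstacle here; the only subtlety worth flagging is the step $\Tr\{\Lambda P\} \leq \Tr\{P\}$, which requires the positivity of both $\Lambda \leq I$ and $P$, and is cleanest to justify by writing $\Tr\{P\} - \Tr\{\Lambda P\} = \Tr\{(I-\Lambda)P\} = \Tr\{(I-\Lambda)^{1/2} P (I-\Lambda)^{1/2}\} \geq 0$. Alternatively, one may invoke Hölder's inequality for Schatten norms, $|\Tr\{\Lambda(\rho-\sigma)\}| \leq \|\Lambda\|_\infty \|\rho-\sigma\|_1 \leq \|\rho-\sigma\|_1$, which gives the same conclusion in a single step.
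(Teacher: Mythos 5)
Your proof is correct. The paper states this lemma in the ``Useful Lemmas'' appendix without proof, treating it as a standard fact, so there is no in-paper argument to compare against; your Jordan-decomposition route (and equivalently the one-line Hölder/Schatten-duality remark $|\Tr\{\Lambda(\rho-\sigma)\}|\le\|\Lambda\|_\infty\|\rho-\sigma\|_1$) is precisely the standard justification one would cite for it, and every step you flag — in particular $\Tr\{(I-\Lambda)P\}\ge 0$ via the $(I-\Lambda)^{1/2}P(I-\Lambda)^{1/2}$ rewriting — is sound.
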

\begin{lemma}[Non-Commutative Union Bound \cite{sen2012achieving}]
   Let $\sigma$ be a sub-normalized state such that $\sigma\geq 0$ and $\Tr\{\sigma\}\leq 1$. Let $\Pi_1,\Pi_2, \Pi_N$ be projectors. Then the following holds 
   $$\Tr\{\sigma\} - \Tr\{\Pi_N\cdots\Pi_1\sigma\Pi_1\cdots\Pi_N\} \leq 2\sqrt{\sum_{i=1}^N\Tr\{(I-\Pi_i)\sigma\}}.$$
\end{lemma}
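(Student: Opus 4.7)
The plan is to reduce the bound to a residual-norm estimate for a pure state via purification, and then to prove the required sub-additivity by induction on $N$ through an algebraic splitting whose two pieces live in orthogonal subspaces.

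First I would purify $\sigma$ by choosing $|\psi\> \in \calH \tensor \calH'$ with $\Tr_{\calH'}\{|\psi\>\<\psi|\} = \sigma$ and $\||\psi\>\|^2 = \Tr\{\sigma\}\leq 1$. Extending each $\Pi_i$ to $\Pi_i \tensor I$ (and reusing the symbol), one has $\Tr\{(I-\Pi_i)\sigma\} = \|(I-\Pi_i)|\psi\>\|^2$ and $\Tr\{T\sigma T^\dagger\} = \|T|\psi\>\|^2$, where $T \deq \Pi_N \cdots \Pi_1$. Because $T$ is a product of projectors and hence a contraction, $\|T|\psi\>\|\leq \||\psi\>\|\leq 1$. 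Factoring a difference of squares and applying the reverse triangle inequality, I obtain $\Tr\{\sigma\} - \Tr\{T\sigma T^\dagger\} = (\||\psi\>\|+\|T|\psi\>\|)(\||\psi\>\|-\|T|\psi\>\|) \leq 2\|(I-T)|\psi\>\|$.

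Second, I would establish by induction on $N$ the key estimate $\|(I-T)|\psi\>\|^2 \leq \sum_{i=1}^N \|(I-\Pi_i)|\psi\>\|^2$. Writing $T' \deq \Pi_{N-1}\cdots\Pi_1$, the main algebraic identity is $I - T = \Pi_N(I-T') + (I-\Pi_N)$, verified by direct expansion. Since $\Pi_N(I-\Pi_N) = 0$, the two vectors $\Pi_N(I-T')|\psi\>$ and $(I-\Pi_N)|\psi\>$ lie in orthogonal subspaces, so by the Pythagorean identity $\|(I-T)|\psi\>\|^2 = \|\Pi_N(I-T')|\psi\>\|^2 + \|(I-\Pi_N)|\psi\>\|^2$. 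Dropping $\Pi_N$ in the first term (a contraction) and invoking the induction hypothesis on $\|(I-T')|\psi\>\|^2$ closes the induction. Substituting this into the reverse-triangle bound from the previous paragraph yields the desired $2\sqrt{\sum_i \Tr\{(I-\Pi_i)\sigma\}}$.

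The main subtlety is obtaining an additive bound in squared norms rather than a linear bound in norms, since the latter would only give $\sum_i\sqrt{\epsilon_i}$ and, after Cauchy--Schwarz, a spurious $\sqrt{N}$ loss. A naive step-by-step telescoping $|\psi\> \to \Pi_1|\psi\> \to \cdots \to T|\psi\>$ via the triangle inequality produces exactly this weaker bound, and moreover evaluates the errors at the intermediate post-measurement states rather than at $|\psi\>$ itself, obstructing any direct link to $\Tr\{(I-\Pi_i)\sigma\}$. The one-shot splitting $(I-T) = \Pi_N(I-T') + (I-\Pi_N)$, together with the orthogonality of its two summands, is precisely what bypasses this loss: it converts the problem into an additive recursion at the level of squared norms evaluated against the original purified state.
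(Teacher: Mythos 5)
Your proof is correct. The purification step, the difference-of-squares factorization with the reverse triangle inequality to get $\Tr\{\sigma\} - \Tr\{T\sigma T^\dagger\} \leq 2\|(I-T)|\psi\>\|$, and the orthogonal splitting $I-T = \Pi_N(I-T') + (I-\Pi_N)$ feeding the Pythagorean recursion $\|(I-T)|\psi\>\|^2 = \|\Pi_N(I-T')|\psi\>\|^2 + \|(I-\Pi_N)|\psi\>\|^2 \leq \|(I-T')|\psi\>\|^2 + \|(I-\Pi_N)|\psi\>\|^2$ are all sound; the base case $N=1$ holds with equality. Note, however, that the paper itself does not prove this lemma --- it is stated in Appendix~A.1 as an imported result with a citation to Sen, so there is no in-paper proof to compare against. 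Your argument is essentially the standard one found in the literature (Sen's original derivation also proceeds via purification and an inductive squared-norm estimate exploiting the orthogonality of $\text{ran}(\Pi_N)$ and $\text{ran}(I-\Pi_N)$), and your discussion of why a naive telescoping with the triangle inequality at the level of unsquared norms would introduce a spurious $\sqrt{N}$ factor is exactly the right thing to flag: the whole point of the non-commutative union bound is that the errors add at the level of squared norms against the fixed purified state $|\psi\>$, not against the drifting intermediate post-measurement vectors.
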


\subsection{Proof of Proposition \ref{prop:binning}}
\label{app:proof:prop:binning}
Before commencing the proof, it is noteworthy that if we assume $\calD(m,\zn) = \utilde^n$. Then, using upper bound on  total variation, we obtain the following expression:
\begin{equation}\label{eqn:packingbound}
    \sum_{\yn} \Big| P^n_{Y|UZ}(\yn|\un,\zn) - P^n_{Y|UZ}(\yn|\Tilde{u}^n,\zn)\Big| \leq 2 \cdot \I_{\set{\un \neq \Tilde{u}^n}}.
\end{equation}
Therefore, by utilizing \eqref{eqn:packingbound} and applying the union bound, we can rewrite the term $\peone$ as $\peone \leq \peoneone + \peonetwo$, where 
\begin{align*}
    \peoneone &\deq 2 \sum_{\substack{\xn \in \Tx \\ \un \in \Tu }} \sum_{\substack{ \zn \\l \in [\btheta]\\m\in [\theta] }} 
    \frac{1}{\btheta} \frac{(1-\varepsilon)}{(1+\eta)} 
    \I_{\set{\Un(l) = \un}} \I_{\set{\calB(l) = m}} \I_{\set{\xn \in \Txcond}}
    P_{Z|X}^n(\zn|\xn) P_{X|U}^n(\xn|\un) \\
    &\hspace{100pt} \times\I_{\set{(\un,\zn) \notin \Tuz}} \eqand\\
    \peonetwo &\deq 2 \sum_{\substack{\xn \in \Tx \\ \un \in \Tu }}\sum_{\substack{ \zn \\ l \in [\btheta]\\m\in [\theta]}} 
    \frac{1}{\btheta} \frac{(1-\varepsilon)}{(1+\eta)} 
    \I_{\set{\Un(l) = \un}} \I_{\set{\calB(l) = m}} \I_{\set{\xn \in \Txcond}}
    P_{Z|X}^n(\zn|\xn) P_{X|U}^n(\xn|\un) \\
    &\hspace{100pt} \times \sum_{\ltilde \in [\btheta]} \sum_{\utilde^n \neq \un} \I_{\set{\Un(\ltilde) = \utilde^n}} \I_{\set{\calB(\ltilde) = m}} \I_{\set{(\utilde^n,\zn) \in \Tuz}}.
\end{align*}
We first show that the term $\peoneone$ can be made arbitrary small for sufficiently large $n$. Consider the following inequalities:
\begin{align*}
    \EE[&\Ipmf \peoneone] \\
    &\leq \sum_{\substack{\un \in \Tu  \\\xn \in \Txcond  }}  \sum_{\substack{ \zn \\l \in [\btheta]\\m\in [\theta] }} 
    \frac{2}{\btheta} \frac{P^n_U(\un)}{(1+\eta)} 
    \I_{\set{\calB(l) = m}} 
    P_{Z|X}^n(\zn|\xn) P_{X|U}^n(\xn|\un) \I_{\set{(\un,\zn) \notin \Tuz}} \\
    &\overset{a}{=} \sum_{\substack{\un \in \Tu  \\\xn \in \Txcond  }}
   \sum_{\substack{ \zn \\ l \in [\btheta]}}
    \frac{2}{\btheta} \frac{P^n_U(\un)}{(1+\eta)}  
    P_{XZ|U}^n(\xn,\zn|\un)  \I_{\set{(\un,\zn) \notin \Tuz}} \\
    &\overset{}{\leq} \sum_{\substack{\un \in \Tu}}
   \sum_{\substack{ \zn \\ l \in [\btheta]}}
    \frac{2}{\btheta} \frac{P_{UZ}^n(\un,\zn)}{(1+\eta)}  
      \I_{\set{(\un,\zn) \notin \Tuz}} \\
    &=  \frac{2}{(1+\eta)} \sum_{(\un,\zn) \notin \Tuz}P_{UZ}^n(\un,\zn) \overset{b}{\leq} 2\epsilon,
\end{align*}
for all sufficiently large $n$ and all $\delta>0$, where $(a)$ follows from the fact that $\sum_{m \in [\theta]} \I_{\set{\calB(l) = m}} \!= \!1$ and using the Markov chain $U-X-Z$ and $(b)$ follows from the standard joint typicality argument. We now proceed to bound the error term $\peonetwo$. Consider the following set of inequalities:
\begin{align*}
    \EE[&\Ipmf \peonetwo] \\
    &\overset{a}{\leq} \sum_{\substack{\un \in \Tu \\ \xn \in \Txcond}}\sum_{\substack{ \zn \\ l \in [\btheta]\\m\in [\theta]}} 
    \frac{2}{\btheta} \frac{(1-\varepsilon)}{(1+\eta)} 
     \I_{\set{(\utilde^n,\zn) \in \Tuz}} P_{XZ|U}^n(\xn,\zn|\un) \\
    &\hspace{80pt} \times \sum_{\ltilde \neq l} \sum_{\utilde^n \neq \un} \EE[ \I_{\set{\calB(l) = m}}\I_{\set{\calB(\ltilde) = m}} \I_{\set{\Un(l) = \un}} \I_{\set{\Un(\ltilde) = \utilde^n}}] \\
    &\overset{}{=} \sum_{\substack{\un \in \Tu \\ \xn \in \Txcond}}\sum_{\substack{ \zn \\ l \in [\btheta]\\m\in [\theta]}} 
    \frac{2}{\btheta} \frac{(1-\varepsilon)}{(1+\eta)} 
     \I_{\set{(\utilde^n,\zn) \in \Tuz}} P_{XZ|U}^n(\xn,\zn|\un) \\
    &\hspace{80pt} \times \sum_{\ltilde \neq l} \sum_{\utilde^n \neq \un} \text{Pr}\{\set{\calB(l) = m, \calB(\ltilde) = m} \}\frac{P^n_U(\un)P^n_U(\utilde^n)}{(1-\varepsilon)^2} \\
    &\overset{}{\leq} \sum_{\substack{\un \in \Tu \\ \zn, \utilde^n \neq \un}}\sum_{\substack{l, \ltilde \\m\in [\theta]}} 
    \frac{2}{\btheta} \frac{(1-\varepsilon)}{(1+\eta)} \frac{1}{\theta^2}
     \I_{\set{(\utilde^n,\zn) \in \Tuz}} P_{UZ}^n(\un,\zn) \frac{P^n_U(\utilde^n)}{(1-\varepsilon)^2} \\
    &\overset{}{\leq} \sum_{\substack{\ltilde \neq l}} \frac{1}{\theta}
    \frac{2}{\btheta} \frac{(1-\varepsilon)}{(1+\eta)} 
\sum_{\substack{(\utilde^n,\zn)\in \Tuz}} \frac{1}{(1-\varepsilon)} P_{Z}^n(\zn) P^n_U(\utilde^n)\\
    &\overset{b}{\leq} \sum_{\substack{\ltilde \neq l }} 
    2 \frac{(1-\varepsilon)}{(1+\eta)} 
    \frac{(\btheta-1)}{\theta} \frac{2^{-n(I(U;Z) - \delta_1(\delta))}}{(1-\varepsilon)} \\
     &\leq \frac{1}{(1+\eta)}2^{n(\Rbar - R - I(U;Z) + \delta_1(\delta) + 1/n)}. 
\end{align*}
where $(a)$ follows by using the Markov chain $U-X-Z$, $(b)$ follows from the properties of the joint typical sequences and $\delta_1(\delta)$ is a function that follows from the characterization of the size of the typical set \cite{cover2006elements}. Therefore, if $(\Rbar -R) < I(U;Z) - \delta_1 - 1/n$, the expected error $\EE[\Ipmf \peonetwo]$ decays exponentially and can be made arbitrarily small, for sufficiently large $n$. Hence, we get $\EE[\Ipmf \peone] \leq \EE[\Ipmf \peoneone] + \EE[\Ipmf \peonetwo] \leq 3\epsilon.$ This completes the proof of Proposition \ref{prop:binning}.

\subsection{Proof of Proposition \ref{prop:covering}}
\label{app:proof:prop:covering}
Consider the following inequalities:
    \begin{align*}
         \EE[&\Ipmf \ce] \\
         &\leq \sum_{\substack{\xn \in \Tx \\ \un \in \Tu }} \sum_{\substack{\yn \\ \zn}} \sum_{\substack{ l \in [\btheta]\\m\in [\theta] }} 
    \frac{1}{\btheta} \frac{(1-\varepsilon)}{(1+\eta)} 
    \EE[\I_{\set{\Un(l) = \un}}] \I_{\set{\calB(l) = m}} P_{YZ|U}^n(\yn,\zn|\un) \\
    & \hspace{100pt} \times \Big|\I_{\set{\xn \in \Txcond}} - \sum_{\substack{\sfxn \in \Txcond}} W^n_{X|YZ}(\sfxn
    |\yn,\zn)
    \Big|W^n_{X|YZ}(\xn|\yn,\zn)\\
    &= \sum_{\substack{\xn \in \Tx \\ \un \in \Tu }} \sum_{\substack{\yn \\ \zn}} \sum_{\substack{ l \in [\btheta]\\m\in [\theta] }} 
    \frac{1}{\btheta} \frac{1}{(1+\eta)} 
    \I_{\set{\calB(l) = m}} P_{UYZ}^n(\un,\yn,\zn) \\
    & \hspace{100pt} \times \Big|\I_{\set{\xn \in \Txcond}} - \sum_{\substack{\sfxn \in \Txcond}} W^n_{X|YZ}(\sfxn
    |\yn,\zn)
    \Big|W^n_{X|YZ}(\xn|\yn,\zn)\\
    & \overset{a}{\leq} \sum_{\substack{\un \in \Tu }} \sum_{\substack{\yn \\ \zn}} \sum_{\substack{ l \in [\btheta]\\m\in [\theta] }} 
    \frac{1}{\btheta} \frac{1}{(1+\eta)} 
    \I_{\set{\calB(l) = m}} P_{UYZ}^n(\un,\yn,\zn) \\
    &\hspace{100pt}\times 2  \!\!\!\!\! \sum_{\substack{\xn \in \Txcond \\ \sfxn \notin \Txcond}} \!\!\!\!\!W^n_{X|YZ}(\xn|\yn,\zn)W^n_{X|YZ}(\sfxn|\yn,\zn)\\
    &\overset{b}{=}\sum_{\substack{\un \in \Tu \\}} \sum_{\substack{\yn \\ \zn}} 
\frac{2}{(1+\eta)}  P_{UYZ}^n(\un,\yn,\zn)  \!\!\!\!\!\sum_{\substack{\xn \in \Txcond \\ \sfxn \notin \Txcond}}\!\!\!\!\! W^n_{X|YZ}(\xn|\yn,\zn)W^n_{X|YZ}(\sfxn|\yn,\zn)\\
&\overset{c}{\leq}\sum_{\substack{\un \in \Tu }}\sum_{\substack{\yn \\ \zn}}  
\frac{2}{(1+\eta)}  P_{UYZ}^n(\un,\yn,\zn) \!\!\!\!\! \sum_{\substack{\sfxn \notin \Txcond}}W^n_{X|YZ}(\sfxn|\yn,\zn)\\
&\overset{d}{\leq}
 2\frac{(1-\varepsilon)}{(1+\eta)} \sum_{\un \in \Tu} \frac{P_{U}^n(\un)}{(1-\varepsilon)} \sum_{\xn \notin\Txcond}W^n_{X|U}(\xn|\un)\\
 &\overset{e}{\leq}
 2\epsilon \frac{(1-\varepsilon)}{(1+\eta)} \sum_{\un \in \Tu} \frac{P_{U}^n(\un)}{(1-\varepsilon)} = 2\epsilon \frac{(1-\varepsilon)}{(1+\eta)} \leq 2\epsilon,
    \end{align*}
  for all sufficiently large $n$, and all $\eta,\delta > 0$, where $(a)$ follows by splitting the summation over $\xn \in \Tx$ as summation over $\set{\xn \in \Txcond} \cap \set{\xn \in \Tx}$ and $\set{\xn \notin \Txcond} \cap \set{\xn \in \Tx}$, $(b)$ follows from the fact that $\sum_{m \in [\theta]} \I_{\set{\calB(l) = m}} = 1$, $(c)$ follows by upper bounding the term $\sum_{\xn \in \Txcond} W^n_{X|YZ}(\xn|\yn,\zn)$ by $1$, $(d)$ follows from the Markov chain $U - (Y,Z) - X$ and writing $\sum_{\yn\zn} P_{UYZ}^n W^n_{X|UYZ} = P^n_U W^n_{X|U}$, and $(e)$ follows from the conditional typically argument. This completes the proof of Proposition \ref{prop:covering}.

  \subsection{Proof of Proposition \ref{prop:encoding_error}}\label{app:proof:prop:encoding_error}
 Consider the following set of inequalities:
\begin{align*}
    \EE[&\Ipmf \ee] \\
    &\overset{a}{\leq} \sum_{\substack{\xn \notin \Tx \\ \un \in \Tu \\ \sfxn \in \Txcond}}\sum_{\substack{\yn \\ \zn}} \sum_{\substack{ l \in [\btheta]\\m\in [\theta] }} 
    \frac{1}{\btheta} \frac{P^n_U(\un)}{(1+\eta)} 
    \I_{\set{\calB(l) = m}} 
    P_{XZ|U}^n(\sfxn,\zn|\un)P^n_{Y|UZ}(\yn|\un,\zn)W^n_{X|YZ}(\xn|\yn,\zn) \\
    &\overset{b}{\leq} \sum_{\substack{\xn \notin \Tx \\ \un \in \Tu }}\sum_{\substack{\yn\\\zn}} 
    \frac{1}{(1+\eta)} 
    P_{UYZ}^n(\un,\yn,\zn)W^n_{X|YZ}(\xn|\yn,\zn) \sum_{\sfxn \in \Txcond}W^n_{X|YZ}(\sfxn|\yn,\zn) \\
    &\overset{c}{\leq} \sum_{\substack{\xn \notin \Tx \\ \un \in \Tu }} 
    \frac{1}{(1+\eta)} 
    P_{U}^n(\un)W^n_{X|U}(\xn|\un)\\
    &\overset{}{\leq} \sum_{\substack{\xn \notin \Txcond \\ \un \in \Tu }} 
    \frac{1}{(1+\eta)} 
    P_{U}^n(\un)W^n_{X|U}(\xn|\un)\\
    &\overset{d}{\leq} 
    \epsilon\frac{(1-\varepsilon)}{(1+\eta)} 
    \sum_{\un\in \Tu}P_{U}^n(\un) =    \epsilon\frac{(1-\varepsilon)}{(1+\eta)} \leq \epsilon,
\end{align*}
where $(a)$ follows from the Markov chain $U-X-Z$, $(b)$ follows from the Markov chain $X - (U,Z) - Y$ and the fact that $\sum_{m\in [\theta]} \I_{\set{\calB(l) = m}} =1$, $(c)$ follows from the Markov chain $U- (Y,Z) - X$, and $(d)$ follows from the standard conditional typicality argument. This completes the proof of Proposition \ref{prop:encoding_error}. 

\subsection{Proof of Proposition \ref{prop:code_dependent_RV}}
\label{app:prop:code_dependent_RV}
Consider the following inequalities:
\begin{align*}
    \EE_\PP[E_1] &= \sum_{m\in [\Theta]}\sum_{k\in [\bTheta]} {(\Theta\bTheta)}^{-1}\EE_\PP[\Tr\{\rhotilde_{m,k}^{\refe B}\}]\\
    & = \sum_{m\in [\Theta]}\sum_{k\in [\bTheta]} {(\Theta\bTheta)}^{-1}\!\!\!\!\!\!\sum_{\xn \in \Txqc}\!\!\!\frac{P^n_X(\xn)}{(1-\varepsilon)}\Tr\{\rhotilde_{\xn}^{\refe B}\}\\
    & \geq \!\!\!\!\!\!\sum_{\xn \in \Txqc}\!\!\!\frac{P^n_X(\xn)}{(1-\varepsilon)}(1-2\varepsilon-2\sqrt{\varepsilon}) \geq (1-\epsilon),
\end{align*}
where the first inequality follows from \cite[Eqn. 23]{wilde_e}. Now, using \eqref{eq:closeness_ref_SI}, we get 
\begin{align*}
    \EE_\PP[E_2] &= \sum_{m\in [\Theta]}\sum_{k\in [\bTheta]}{(\Theta\bTheta)}^{-1}\ \EE_\PP[\|\rhotilde_{m,k}^{\refe B} - \calW^{\refe B}_{m,k}\|_1]\\
    &= \sum_{m\in [\Theta]}\sum_{k\in [\bTheta]} {(\Theta\bTheta)}^{-1}\!\!\!\!\!\!\sum_{\xn \in \Txqc}\!\!\!\frac{P^n_X(\xn)}{(1-\varepsilon)}\|\rhotilde_{\xn}^{\refe B} - \calW^{\refe B}_{\xn}\|_1\leq \epsilon.
\end{align*}
\subsection{Proof of Proposition \ref{prop:qc_NC}}
\label{app:prop:proof:qc_NC}
    From equations \eqref{eqn:omegamk} and \eqref{eqn:lambdamk}, we rewrite $ \zeta_{\text{NC}}$ as
    \begin{align*}
        \zeta_{\text{NC}} &
        = \sum_{\substack{k'\in [\bTheta]\cup \{0\}}}\!\!\!\! \Tr\big\{\big({I^{\refe}}^{\tensor n} \tensor \Lambda_{k'}^{(m)}\big)\omega^{RB}_{0,0}\big\} = \Tr\big\{\big({I^{\refe}}^{\tensor n} \tensor \!\!\!\!\!\!\sum_{\substack{k'\in [\bTheta]\cup \{0\}}}\!\!\!\!\!\!\dpovm_{k'}^{(m)} \big)\omega^{RB}_{0,0}\big\} = \Tr\{\omega^{RB}_{0,0}\},
    \end{align*}
    where the equality follows because $\sum_{\substack{k'\in [\bTheta]\cup \{0\}}}\dpovm_{k'}^{(m)} = I$.
    Consider the following inequalities:
    \begin{align*}
        \EE_\PP[\I_{\curly{\mbox{\normalfont sP}}}\zeta_{\text{NC}}] \overset{a}{\leq} \EE_\PP[\Tr\{\omega^{RB}_{0,0}\}] &= 
        \EE_\PP\Big[\Tr\Big\{\Big(I-\sum_{m\in [\Theta]}\sum_{k\in[\bTheta]}\epovm_{m,k}^{\refe B}\Big) {\calW^{\refe B}}^{\tensor n}\Big\}\Big]\\
        &=
        1-\sum_{m\in [\Theta]}\sum_{k\in[\bTheta]}\EE_\PP\Big[\Tr\Big\{\epovm_{m,k}^{\refe B}{\calW^{\refe B}}^{\tensor n}\Big\}\Big]\\
        &= 1-\frac{(1-\varepsilon)}{(1+\eta)}\sum_{m\in [\Theta]}\sum_{k\in[\bTheta]}(\Theta\bTheta)^{-1}\EE_{\PP}[\Tr\{\rhotilde_{m,k}^{\refe B}\}]\\
        &\overset{b}{\leq}1-\frac{(1-\varepsilon)}{(1+\eta)}(1-\epsilon) \leq \tilde{\epsilon} \in (0,1),
    \end{align*}
    for all sufficiently large $n$ and all sufficiently small $\eta, \delta>0$, where $(a)$ is based on the fact that $\I_{\curly{\mbox{\normalfont sP}}}\leq 1$ and  $(b)$ follows from Proposition \ref{prop:code_dependent_RV}. This completes the proof of Proposition \ref{prop:qc_NC}. 

\subsection{Proof of Proposition \ref{prop:qc_NP}}
\label{app:prop:proof:qc_NP}
Consider the following inequalities:
\begin{align*}
    \EE_\PP[\I_{\curly{\mbox{\normalfont sP}}}\zeta_{\text{NP}}] &\leq \EE_\PP\Big[\sum_{m \in [\Theta]}\sum_{\substack{k\in [\bTheta]}}
\Tr\big\{\big(I-\dpovm_{k}^{(m)} \big)\omega^{B}_{m,k}\big\}\}\Big]\\
&\overset{a}{\leq} \EE_\PP\bigg[\sum_{m \in [\Theta]}\sum_{\substack{k\in [\bTheta]}}\gamma\ \Tr\big\{\big(I-\dpovm_{k}^{(m)} \big)\calW^{B}_{m,k}\big\}\} + \sum_{m \in [\Theta]}\sum_{\substack{k\in [\bTheta]}}\gamma\ \|\rhotilde^B_{m,k}-\calW^B_{m,k}\|_1\bigg]\\
&\overset{b}{\leq} \frac{(1-\varepsilon)}{(1+\eta)}\ \EE_\PP\bigg[\frac{1}{\Theta}\sum_{m \in [\Theta]}\frac{1}{\bTheta}\sum_{\substack{k\in [\bTheta]}}
\Tr\big\{\big(I-\dpovm_{k}^{(m)} \big)\calW^{B}_{m,k}\big\}\}\bigg]\\
&\hspace{50pt}+ \frac{(1-\varepsilon)}{(1+\eta)}\ \EE_\PP\bigg[\sum_{m \in [\Theta]}\sum_{\substack{k\in [\bTheta]}}
(\Theta\bTheta)^{-1}\|\rhotilde^{\refe B}_{m,k}-\calW^{\refe B}_{m,k}\|_1\bigg]\overset{c}{\leq} 2 \frac{(1-\varepsilon)}{(1+\eta)}\epsilon \leq 2\epsilon,
\end{align*}
where $(a)$ follows from the trace inequality: $\Tr\{\Lambda \rho\} \leq \Tr\{\Lambda \sigma\}+\|\rho-\sigma\|_1$ (Lemma \ref{app:lemma:traceinequality} in Appendix \ref{app:useful lemmas}), $(b)$ follows from the definition of  $\gamma = \frac{(1-\varepsilon)}{(1+\eta)}(\Theta\bTheta)^{-1}$ and monotonicity of trace distance, and $(c)$ follows from Proposition \ref{prop:code_dependent_RV} and a weaker version of Proposition \ref{prop:qc_packing}, i.e., 
\begin{equation}\label{eqn:qc_packing}
    \EE_\PP\left[\frac{1}{\Theta}\sum_{m\in [\Theta]}\frac{1}{\bTheta}\sum_{k\in[\bTheta]} \Tr\left\{(I-\dpovm_{k}^{(m)})\calW_{m,k}^B\right\}\right] \leq \epsilon,
\end{equation}
for sufficiently small $\delta>0$ and for all sufficiently large $n$, if $\frac{1}{n}\log(\bTheta)<\chi(\{P_X(x),\calW^B_x\}).$
This completes the proof of Proposition \ref{prop:qc_NP}. 
\subsection{Proof of Proposition \ref{prop:qc_cov}}
\label{app:prop:proof:qc_cov}
Consider the following inequalities:
    \begin{align*}
\EE_\PP[&\I_{\curly{\mbox{\normalfont sP}}}\zeta_{\text{C}}] \\
&\leq\EE_\PP\Big[\sum_{m \in [\Theta]}\sum_{\substack{k\in [\bTheta]}}\Tr\{\lambda_{m,k}^{\refe B}\}\|\omegabar_{m,k}^{\refe B} -  \calW_{m,k}^{\refe B}\|_1\Big]\\
&\overset{a}{\leq}\EE_\PP\Big[\sum_{m \in [\Theta]}\sum_{\substack{k\in [\bTheta]}}\Tr\{\omega_{m,k}^{\refe B}\}\|\omegabar_{m,k}^{\refe B} -  \calW_{m,k}^{\refe B}\|_1\Big]\\
&\overset{b}{\leq}\EE_\PP\Big[\sum_{m \in [\Theta]}\sum_{\substack{k\in [\bTheta]}}\gamma \ \Tr\{\rhotilde_{m,k}^{\refe B}\}\Big\|\frac{\rhotilde_{m,k}^{\refe B}}{\Tr\{\rhotilde_{m,k}^{\refe B}\}} -  \rhotilde_{m,k}^{\refe B}\Big\|_1 + \sum_{m \in [\Theta]}\sum_{\substack{k\in [\bTheta]}}\gamma \ \Tr\{\rhotilde_{m,k}^{\refe B}\}\|\rhotilde_{m,k}^{\refe B} -  \calW_{m,k}^{\refe B}\|_1\Big]\\
&\overset{c}{\leq}\sum_{m \in [\Theta]}\sum_{\substack{k\in [\bTheta]}}\gamma \ (1-\EE_\PP[\Tr\{\rhotilde_{m,k}^{\refe B}\}]) + \sum_{m \in [\Theta]}\sum_{\substack{k\in [\bTheta]}}\gamma \  \EE_\PP[\|\omega_{m,k}^{\refe B} -  \calW_{m,k}^{\refe B}\|_1]\\
&\overset{}{\leq}\frac{(1-\varepsilon)}{(1+\eta)}\Big(1-\sum_{m \in [\Theta]}\sum_{\substack{k\in [\bTheta]}}(\Theta\bTheta)^{-1}\EE_\PP[\Tr\{\rhotilde_{m,k}^{\refe B}\}] + \sum_{m \in [\Theta]}\sum_{\substack{k\in [\bTheta]}}(\Theta\bTheta)^{-1} \EE_\PP[\|\omega_{m,k}^{\refe B} -  \calW_{m,k}^{\refe B}\|_1]\Big)\\
&\overset{d}{\leq}2\frac{(1-\varepsilon)}{(1+\eta)}\epsilon \leq 2\epsilon,
    \end{align*}
    where $(a)$ follows from the application of the trace inequality: $\Tr{\Lambda \rho} \leq \Tr{\rho}$ for $0\Lambda\leq I$ and positive operator $\rho$, $(b)$ follows by adding and subtracting appropriate terms and then applying triangle inequality, $(c)$ follows from the definition of $\gamma$, and $(d)$ follows from Proposition \ref{prop:code_dependent_RV}. This completes the proof of Proposition \ref{prop:qc_cov}.

  \subsection{Proof of Lemma \ref{lemma:sols_linearEqn_close}}\label{app:proof:lemma:sols_linearEqn_close}
We begin the proof by considering the following convex optimization problem:
\begin{align}
    \min_{x} & \frac{1}{2}\|x-x_0\|_2^2 \nonumber \\
    \text{subject to } & \sfA x=\mathsf{b}, \I^\texttt{T}x = 1,x \geq 0. 
\end{align}
Our objective is first to find the unique optimal solution $x^*$ of the above convex optimization problem and then show that for any $x_0 \geq 0$, if $\|\sfA x_0-\mathsf{b}\|_1 \leq \delta$ and $\I^\texttt{T}x_0 = 1$, then  $x^*$ satisfies $\|x^* -x_0\|_1 \leq \tilde{\delta}(\delta)$. 
Toward solving the optimization problem, let, ${\mathbf{A}}_{} := [\sfA^\texttt{T} \; \I]^\texttt{T}$  and ${\mathbf{b}} := [\mathsf{b}^\texttt{T}\; 1]^\texttt{T}$. Assume $\bfb \in \text{col}(\bfA)$, i.e., $\bfb$ belongs to the column space of $\bfA$, otherwise, $\bfA x=\bfb$ has no solution. Moreover, without loss of generality, assume $\bfA$ is full row rank; if not, one can remove the linearly dependent rows to make it full row rank.

The Lagrangian for this problem is 
$$\mathcal{L}(x,\lambda,\mu):= \frac{1}{2}\|x-x_0\|^2_2 + \lambda^\texttt{T}(\mathbf{A}x_0-\mathbf{b}) - \mu^\texttt{T}x,$$
where $\lambda$ and $\mu\geq 0$ are the Lagrange multipliers for the equality constant $\mathbf{A}x=\mathbf{b}$ and non-negativity constraint $x\geq 0$, respectively. The KKT conditions are as follows:
\begin{itemize}
    \item \textit{Stationarity}:  The gradient of the Lagrangian with respect to 
$x$ must vanish, i.e.,
$$\partial_x \mathcal{L}(x,\lambda,\mu) = (x-x_0) + \mathbf{A}^\texttt{T}\lambda-\mu = 0.$$
This simplifies to $x = x_0-\mathbf{A}^\texttt{T}\lambda+\mu.$
\item \textit{Primal Feasibility}: $\mathbf{A}x=\mathbf{b}$. Substituting the value of $x$ from the stationarity condition, we get $(\bfA\bfA^\textT)\lambda = (\bfA x_0-\bfb)+\bfA\mu$.  Thus, we get, $$\lambda = (\bfA\bfA^\textT)^{-1}(\bfA x_0-\bfb)+(\bfA\bfA^\textT)^{-1}\bfA\mu.$$
Once the $\lambda$ is determined, substituting it back into stationary condition, we get $$x = x_0 - \bfA^\textT(\bfA\bfA^\textT)^{-1}(\bfA x_0-\bfb)+(I-\bfA^\textT(\bfA\bfA^\textT)^{-1}\bfA)\mu.$$
\item \textit{Dual Feasibility}: $x\geq 0 \eqand \mu \geq 0$.
\item \textit{Complementary Slackness}: $\mu_i x_i= 0$ for all $i \in [\beta].$
\end{itemize}
We can re-write the above expression of $x$ using the generalized inverse\footnote{Often known as pseudoinverse or Moore–Penrose inverse.}\cite{ben2006generalized} as $$x = x_0 - \bfA^{+}(\bfA x_0-\bfb)+(I-\bfA^{+}\bfA)\mu,$$
where $\bfA^+ = \bfA^\textT(\bfA\bfA^\textT)^{-1}$ when $\bfA$ is full-row rank. The above expression is also valid for arbitrary matrix $\bfA \in \RR^{\alpha\times\beta}$, where the generalized inverse is computed using singular value decomposition (SVD) as follows. Let \(\mathbf{A} = U_{\alpha \times \alpha} \Sigma_{\alpha \times \beta} V^{\textT}_{\beta \times \beta}\) be the SVD of \(\mathbf{A}\), where \(U\) and \(V\) are unitary matrices, and \(\Sigma\) is a rectangular diagonal matrix with non-negative real numbers on the diagonal. The generalized inverse of \(\mathbf{A}\) is given by \(\mathbf{A}^+ = V \Sigma^+ U^{\textT}\), where \(\Sigma^+_{\beta \times \alpha}\) is obtained by taking the reciprocal of each non-zero element on the diagonal, leaving the zeros unchanged, and finally transposing the resulting matrix.

\noindent Now, using complementary slackness condition, we find $x^*$ as follows: Let $w := x_0 - \bfA^+(\bfA x_0-\bfb)$. 
\begin{itemize}
    \item If $w\geq 0$, $x^* = w$. 
    \item If $w_{\sfS} \leq 0,$ for some $\sfS\subseteq [\beta]$. Then, $x^* = x_0 - \bar{\bfA}^+(\bar{\bfA} x_0-\bfb)$, where $\bar{\bfA}$ by setting the entries in the columns corresponding to the set 
$\sfS$ to zero for every row, except for the last row, which consists entirely of ones. 
\end{itemize}
So far, we have constructed a $x^*$, which minimizes the above-stated optimization problem. Now, we find the objective function value corresponding to $x^*$.
\begin{align*}
   \|x^*-x_0\|_1 &\overset{a}{\leq} \sqrt{\beta}\|x^*-x_0\|_2 \\
    &= \sqrt{\beta} \|\bfA^{+}(\bfA x-\bfb)\|_2 \\
    &\overset{b}{\leq} \sqrt{\beta} \|\bfA^{+}\|_2 \|(\bfA x-\bfb)\|_2 \\
    &\overset{c}{\leq} \sqrt{\beta} \|\bfA^{+}\|_2 \|(\bfA x-\mathsf{b})\|_1 \\
    &\overset{d}{\leq} \delta \sqrt{\beta} \|\bfA^{+}\|_2 \\
    &\overset{e}{=} \frac{\delta \sqrt{\beta}}{\sigma^{+}_{\min}} \quad \text{ where $\sigma^{+}_{\min}$ is the smallest non-zero singular value of $\bfA$},
\end{align*}
where $(a)$ follows from Cauchy-Schwartz inequality, $(\mathsf{b})$ follows from the sub-multiplicative property of matrix norm, $(c)$ follows from triangle inequality (or $\|\cdot\|_p \leq \|\cdot\|_1$ from Minkowski inequality), $(d)$ follows from the statement of the lemma, and $(e)$ follows from the definition of 2-norm of a matrix, i.e., the largest singular value.  Similarly, for the other case, when $w_\sfS \leq 0$ for some $S\in[\beta]$. We get $\|x^*-x_0\|_1 \leq \frac{\delta\sqrt{\beta}}{\sigmabar^{+}_{\text{min}}}$, where $\sigmabar^{+}_{\min}$ is the smallest non-zero singular value of $\bar{\bfA}$. This completes the proof of Lemma \ref{lemma:sols_linearEqn_close}.








\end{document}